\definecolor{MyDarkBlue}{rgb}{0,0.08,0.45}
\numberwithin{equation}{section}
\newtheorem{theorem}{{\bf\sc Theorem}}
\newtheorem{lemma}{{\bf\sc Lemma}}
\newtheorem{corollary}{{\bf\sc Corollary}}
\newtheorem{thm}{{\bf \sc Theorem}}[section]
\newtheorem{lem}{{\bf \sc Lemma}}[section]
\newtheorem{prop}{{\bf\sc Proposition}}[section]
\DeclareMathOperator*{\plim}{plim}
\DeclareMathOperator*{\argmin}{argmin}
\DeclareMathOperator*{\argmax}{argmax}
\DeclareMathOperator*{\cvgto}{\stackrel{\mathrm{P}}{\longrightarrow}}
\DeclareMathOperator*{\wkto}{\rightsquigarrow}
\providecommand{\E}{\mathrm{E}}
\providecommand{\Prob}{\mathrm{P}}
\providecommand{\gammahat}{\widehat{\gamma}}
\providecommand{\betahat}{\widehat{\beta}}
\providecommand{\phat}{\widehat{p}}
\providecommand{\Stil}{\widetilde{S}}
\renewcommand{\Pr}{\Prob}
\renewenvironment{proof}[1][Proof]{\noindent\text{#1.} }{\ \rule{0.5em}{0.5em}}
\begin{document}

\title{Tractable and Consistent Random Graph Models}

\author{Arun G. Chandrasekhar$^{\ddagger}$}
\author{Matthew O. Jackson$^{\star}$ }
\date{December 2011, Revision: June 2014}

\thanks{We thank Isaiah Andrews, Larry Blume,
Gabriel Carroll, Victor Chernozhukov, Esther Duflo, Marcel Fafchamps, Ben Golub, Bryan Graham, Bo Honore,  Randall Lewis,
Angelo Mele, Eduardo Morales, Stephen Nei, Elie Tamer, Juan Pablo Xandri and Yiqing Xing for helpful discussions and/or comments on earlier drafts, and especially Andres Drenik for valuable research assistance. We thank participants at the Simons Workshop at Berkeley on Unifying Theory and Experiments for Large-Scale Networks, the Princeton Econometrics Seminar, and the Second Economics of Networks Conference (Essex).
Chandrasekhar thanks the NSF Graduate Research Fellowship Program. Jackson gratefully acknowledges financial support from the NSF under grants SES-0961481 and SES-1155302 and from grant FA9550-12-1-0411
from the AFOSR and DARPA, and ARO MURI award No. W911NF-12-1-0509. }
\thanks{$^{\ddagger}$Department of Economics, Stanford University; NBER}
\thanks{$^{\star}$Department of Economics, Stanford University; Santa Fe Institute; and CIFAR}

\begin{abstract}
We define a general class of network formation models, Statistical Exponential Random Graph Models (SERGMs), that nest standard exponential random graph models (ERGMs) as a special case.
We provide the first general results on when these models' (including ERGMs) parameters estimated from the observation of a single network are consistent (i.e., become accurate as the number of nodes grows).
Next, addressing the problem that standard techniques of estimating ERGMs have been shown to have exponentially slow mixing times for many specifications, we show that by reformulating network formation as a distribution over the space of sufficient statistics instead of the space of networks, the size of the space of estimation can be greatly reduced, making estimation practical and easy.
We also develop a related, but distinct, class of models that we call subgraph generation models (SUGMs) that are useful for modeling sparse networks and whose parameter estimates are also directly and easily estimable, consistent, and asymptotically normally distributed.
Finally, we show how choice-based (strategic) network formation models can be written as SERGMs and SUGMs, and apply our models and techniques to network data from rural Indian villages.

\textsc{JEL Classification Codes:} D85, C51, C01, Z13.

\textsc{Keywords:} Random Networks, Random Graphs, Exponential Random Graph Models, Exponential Family, Social Networks, Network Formation, Consistency, Sparse Networks
\end{abstract}

\thispagestyle{empty}
\setcounter{page}{0}

\maketitle

\newpage

\section{Introduction}

\

\begin{quote}
...[A] pertinent form of statistical treatment would be one which deals with social configurations as wholes, and not with single series of facts, more or less artificially separated from the total picture.

{\emph {Jacob Levy Moreno and Helen Hall Jennings, 1938.}}
\end{quote}

\medskip

For a researcher interested in an economic or social interaction,
endogeneity of the interactions often makes network estimation essential.  That estimation is challenging both because relationships are generally not independent and because the researcher usually only observes one network.
A literature spanning several disciplines (computer science, economics, sociology and statistics) has turned to exponential random graph models (ERGMs) to meet these challenges.\footnote{See, e.g., \cite{frank1986markov,wasserman1996logit,mele2011}.} However, recently these models have come under fire as the maximum likelihood estimator of the parameters may not be computationally feasible nor consistent, and so the software being used may provide inaccurate parameter estimates.
In this paper, we develop two new classes of models that provide for (i) rich interdependencies, (ii) interdependencies that have economic and social micro-foundations, (ii) computationally feasible parameter estimates, (iii) and  consistent and asymptotically normal parameter estimates.

To begin with some background about why such models are needed, let us begin with an illustrative question.
To what extent is someone's proclivity to form relationships influenced by whether those relationships are in public or private?  For example, are people of different castes or races more reluctant to form relationships across types when they have a friend in common than when they do not?  This has implications for communication, learning, inequality, diffusion of innovations, and many other behaviors that are network-influenced.  Being able to statistically test whether people's tendencies to interact across groups depends on social context requires allowing for correlation in relationships within a network.
Beyond this illustrative question, correlations in relationships are important in many other social and economic settings:  from informal favor exchange where the presence of friends in common can facilitate robust favor exchange (e.g., \citet*{jackson2012quilts}),
to international trade agreements where the presence of one trade agreement can influence the formation of another (e.g., \citet{furusawa2007trade}).
Similarly, in forming a network of contacts in the context of a labor market, an individual benefits from relationships with others who are better-connected and hence relationships are not independently distributed (e.g., \citet{calvo2004job,calvo2005job}); nor are they in a setting of risk-sharing (e.g., \citet{bramoulle2007risk}).

Once such interdependencies exist, estimation of network formation cannot take place at the level of pairs of nodes, but must encompass the network as a whole.  ERGMs incorporate such interdependencies and thus have become the workhorse models for estimating network formation.\footnote{These grew from work on what were known as
Markov models (e.g., \citet{frank1986markov}) or $p*$ models (e.g., \citet{wasserman1996logit}).  An alternative approach is to work with regression models at the link (dyadic) level, but to allow for dependent error terms, as in the ``MRQAP'' approach (e.g., see \citet{krackhardt1988mrqap}).  That approach, however, is not well-suited for identifying the incidence of particular patterns of network relationships that may be implied by various social or economic theories of the type that we wish to address here.
There are also a set of growing random network models where one explicitly models a meeting process and a link formation algorithm (e.g., \citet*{barabasi1999emergence,jackson2007meeting,currarini2009economic},\citet{bramoulle-etal}), which can be estimated in some cases.  However, those are specific models with a couple of tunable parameters, and they are not designed or intended for the statistical testing of a wide variety of network formation models and hypotheses, which is the intention of the exponential formulation.
}
Indeed, as originally shown via a powerful theorem by  \citet{hammersley1971markov}, the exponential form can nest {\sl any} random graph model and can incorporate arbitrary interdependencies in connections.\footnote{Their theorem applies to undirected and unweighted networks. See the discussion in  \citet{jackson2008social}. Of course, the representation can become fairly complicated; but the point is that the ERGM model class is broadly encompassing.}
Moreover, ERGMs admit a variety of strategic (choice-based) network formation models, as we show below and others have shown in related contexts (e.g., \citet{mele2011}).

Let us be more explicit about the issues
that ERGMs face.
In an ERGM,  the probability
of observing  a network $g$ depends on an associated vector of statistics $S(g)$, that might include, for example, the density of links, number of cliques of given sizes, the average distance between nodes with various characteristics, counts of nodes with various degrees, and so forth.  The probability of the network is assumed to be proportional to
\[
\exp\left(\beta \cdot S\left(g\right) \right)
\]
where $\beta $ is a vector of model parameters.
Turning the above expression into a probability of observing network $g$
requires
normalizing this expression by summing across all possible networks, and so the probability of observing $g$ is
\begin{equation}
\label{intro}
\mathrm{P}_{\beta}\left(g\right)
= \frac{\exp\left(\beta \cdot S\left(g\right)\right)}{\sum_{g' } \exp\left(\beta \cdot S\left(g'\right)\right)}.
\end{equation}

ERGMs have become widely used because they provide an intuitive formulation focusing on key structural aspects that researchers believe are important in network formation and that can encode rich types of interdependencies. Recent work providing utility based micro-foundations has made these models even more desirable.
However, there are three critical challenges faced in working with ERGMs.

First, computing parameter estimates for an ERGM and drawing simulations from the distribution may be infeasible to do accurately in any nontrivial cases \citep{bhamidi2008mixing,chatterjee2010random}.
This is because estimating the likelihood of a given network requires having some estimate of relative likelihood of other networks that could have appeared instead, which involves explicitly or implicitly estimating the denominator of (\ref{intro}). 
Directly estimating the denominator is impossible: the number of possible networks on a given number of nodes is an exponential function of the number of nodes.
The adaptation of Markov Chain Monte Carlo (MCMC) sampling techniques to draw networks and estimate ERGMs, by \citet{snijders} and \citet{handcock2003assessing}, provided a seeming breakthrough and the subsequent development of computer programs based on those techniques led
to their widespread use.\footnote{See \citet{snijders2006new} for more discussion.}
However, it was clear to the developers and practitioners that the programs had convergence problems for many specifications of ERGMs.
Given the huge set of networks $g'$ to sample, any MCMC procedure can visit only an infinitesimal portion of the set, and until recently it was unclear whether such a technique would lead to an accurate estimate in any practical amount of time.  Unfortunately, important recent papers have shown that for broad classes of ERGMs standard MCMC procedures will
take exponential time to mix unless the links in the network are approximately independent
(e.g., see the discussions in \citet{bhamidi2008mixing} and \citet{chatterjee2010random}).   Of course, if links are approximately independent then there is no real need for an ERGM specification to begin with, and so in cases where ERGMs are really needed they cannot be accurately estimated by such MCMC techniques.
Such difficulties were well-known in practice to users of software programs that perform such estimations, as rerunning even simple models can lead to very different parameter and standard error estimates,
but now these difficulties have been proven to be more than an anomaly.

Second, setting aside the feasibility of estimation, there is also little that is known about the consistency of parameter estimates of ERGMs: would estimated parameters converge to the true parameters as the size of the network grows if those estimates were exactly computed? Given that data in many settings consist of a single network or a handful of networks,
we are interested in asymptotics where the number of nodes in a network grows. However, it may be the case that increasing the number of nodes does not increase the information in the system.  In fact, for some sequences of network statistics and parameters it is obvious that the parameters of the associated ERGM are {\sl not} consistent. For example, suppose that $S(g)$ includes a count of the number of components in the network and the parameters are such that the network consists of a single or a few components.  The limited number of components would not permit consistent estimation of the generative model.   Thus, there are models where consistent estimation is precluded.   On the other extreme where links are all independent, we know that consistent estimation holds. Thus, the question is for which models is it that consistent estimation can be obtained.  
With nontrivial interdependencies between links, standard asymptotic results do not apply.   This does not mean that consistency is precluded, (just as it is not precluded in time series or spatial settings) as there is still a lot of information that can be discerned from the observation of a single large network. Nonetheless, it does mean that asymptotic analyses must account for potentially complex interdependencies in link formation.

The third gap is providing microfoundations for estimable economic models of network formation.
While there are many theoretical models of strategic
 network formation (see \citet{jackson2008social} for references), there are only a handful of econometric  models that have been built from such foundations ( \citet*{currarini2009economic,currarini2010pnas,christakis2010empirical,goldsmith-pinkham2013,mele2011})
and those are rather specific to particular estimation exercises.

In this paper we make five contributions:
\begin{itemize}
\item First, we propose a generalization of the class of ERGMs that we call SERGMs: \emph{Statistical ERGMs}.   Note that in any ERGM the probability of forming a network is determined by its statistics: e.g., having a given link density, a given clustering coefficient, specific path lengths, etc.  Every network exhibiting the same statistics is equally likely.\footnote{This is related to the well-known property of sufficient statistics of the exponential family.  As an analogy, a binomial distribution defines the probability of seeing $x$ heads but does not care about the exact sequence under which the $x$ heads arrive.}  SERGMs nest the usual ERGM models by noting that: (i) we can define the model as one in which statistics are generated rather than graphs and thus greatly reduce the dimensionality of the space, and (ii) we can weight the distribution over the space of statistics in many ways other than simply by how many networks exhibit the same statistics.
Changing to the space of statistics and the reference distribution  allows us to provide computationally practical techniques for estimation of SERGMs.

\item
Second, we examine sufficient conditions as well as some necessary conditions for consistent estimation of SERGM parameters (nesting ERGMs as a special case) and identify a class of SERGMs for which it is both easy to check consistency and estimate parameters.  Models in this class are based on ``count'' statistics: for instance, how many links exist between nodes with certain characteristics, how many triangles
nodes exist, how many nodes have a given degree, etc.

\item
Third, we identify a related class of models that are based on the formation of subgraphs that we call SUGMs (\emph{Subgraph Generated Models}).\footnote{Although some particular examples of random networks have previously been built up from randomly generated subgraphs (\citet{bollobas2011sparse}), our general specification and analysis of SUGMs is new.}
Such a network is constructed from building up subgraphs of various types: links, triangles, larger cliques, stars, etc., layered upon each other, all of which can depend on characteristics of the nodes involved. We show if such models are sufficiently \emph{sparse}, parameter estimates are consistent and asymptotically normally distributed.  Such sparse networks appear in many if not most applications as they have realistic features (e.g., average degree that grows at a rate less than $n$, but still allow for high clustering, homophily, rich degree distributions, and so forth).

\item
Fourth, we provide a set of strategic network formation models that combine utility-based choices of subgraph formation by agents with randomness in meeting opportunities.  We describe two basic approaches: one based on consent in link and subgraph formation and another based on strategic search intensity choices.  We show how these provide foundations for classes of SERGMs and SUGMs, and illustrate them in our applications.

\item
Our fifth and final contribution is to provide illustrations of the techniques developed here by applying them to data on social networks from Indian villages. We show that many patterns of empirical networks are replicable by a parsimonious SUGM with very few parameters. We also answer the question that we began with above, of whether individuals tend to form cross-caste relationships more frequently when there are no friends in common than when there are.  We find that cross caste relationships occur with significantly higher frequency when in isolation than when embedded in triads.
\end{itemize}

The only work to date on consistency in ERGMs is by \citet{shalizi2012}.
They examine sequences of models (here, random networks indexed by the number of nodes $n$)
that satisfy a certain `projective' condition.  In that context, they show that this implies an independence of statistics
across increments of the model, and show that this is sufficient for consistency.
That might be thought of as a somewhat pessimistic result, given the required independence condition, which rules out
many of the most interesting ERGM models - essentially any models that involve more than link counts.
Our results are not implied by theirs and, in particular, their assumptions rule out counting any subgraphs that involve more than links, as those
are not projective, while we are directly interested in counting such subgraphs as these are basic to many models of social networks.\footnote{To be specific, note that
if one is counting subgraphs such as triangles, then generating those on the larger graph can lead to {\sl new} triangles
on a smaller graph.  For instance, suppose that triangles between nodes 1,2, and 3,  as well as 3, 4, and 5, are formed on the first five nodes.
Then if a triangle between nodes 2, 4, and 6 is formed when we go to the sixth node, this introduces a new triangle on the first five nodes, as now there are
links between 2, 3 and 4.  Thus, the model on the larger graph results in a different distribution of triangles on the first five nodes than what was originally there, and so
the marginal distribution working on six nodes is not the same as the distribution one started with on five nodes.
This is ruled out under the \citet{shalizi2012} projective assumption.  Similar points hold for richer cliques or other subgraphs that can generate incidental instances, which
are most of the cases of interest here.  Thus, our results cover large classes of models that are ruled out under their projective/independence conditions.}

The connection between ERGMs, SERGMs, and SUGMs is as follows.  SERGMs not only provide an alternative way of representing ERGMs by working directly with statistics rather than graphs, but also
substantially generalize the class by allowing for alternative reference distributions.   SUGMs then allow for an additional change relative to SERGMs in terms the way the graph is generated.  A SERGM -- in order to maintain the nesting of ERGMs -- has the likelihood of a network depend on the {\sl observed} counts of various statistics, including subgraphs.  A SUGM can be thought of as generating subgraphs, but allowing them to overlap: it is not clear whether a given triangle was generated directly as a triangle or as three separate links.  Thus, one needs to infer the true statistics in estimating the parameters of the model.  This subtle change allows for a more direct estimation in the case of sparse networks.  Nonetheless, there is a close relationship, and we provide an exact relationship between SUGMs and SERGMs below.   SERGMs, in addition to nesting ERGMs, provide the intuitive link between SUGMs and exponential-style representations.

\section{Preliminaries and Examples}\label{prelim}

Let $\mathcal{G}^n$ be a set of possible graphs on a finite number of nodes $n$.  The class can consist of undirected or directed graphs
with a generic element denoted by $g\in \{0,1\}^{n\times n}$. We often omit notation $\mathcal{G}^n$, and for instance, $\sum_g$  is understood to mean $\sum_{g\in \mathcal{G}^n}$.

We observe a single (large) graph from which to estimate a network formation model, which is a estimation problem faced by researchers. A family of models is indexed by a vector of parameters  $\beta$, and can be represented by corresponding probability distributions over graphs $\Prob_\beta \left(g \right)$, which depends on parameters $\beta$.

Some of our results concern asymptotic properties of such models, and so at times we consider a sequence of random graphs $g^n$, $n \in \mathbb{N}$, drawn from a sequence of probability distributions $\Prob^n_{\beta^n} (\cdot )$.  Since everything then carries an $n$ index we suppress it except when we want to highlight dependence.

A vector of {\sl statistics} of a network $g\in  \mathcal{G}^n$, $S(g) = (S_1(g), \ldots, S_k(g) )$, is a $k$-dimensional vector
where $S_\ell: \mathcal{G}^n \rightarrow \mathbb{R}$ for each $\ell \in \{1,\ldots k\}$.  For example, a statistic might be the number of links in a network, the average path length, the number of cliques of a given size, the number of isolated nodes, the number of links that go between two specific types of groups, and so forth.

\subsection{Links across social boundaries}\label{crosscaste}

\

To motivate the various models that we introduce and analyze below, let us reconsider the question that we mentioned in
the introduction.

Individuals are associated with groups and identities that can lead to strong social norms about interactions across groups.  For instance, in much of India there are strong forces that influence if and when individuals form relationships across castes.
Are people significantly more likely to form cross-caste relationships when those links are unsupported (without any friends in common) compared to when those links are supported with at least one friend in common?
To answer this we need models that account for link dependencies, as cliques of three or more may dictate greater adherence to a group norm prohibiting certain inter-caste relationships, while the norm may be circumvented in isolated bilateral relationships.

To analyze this, we examine data from 75 Indian villages (from our study  \citet{banerjee2014gossip}  that we discuss in more detail below). We link two households if members of either engaged in favor exchange with each other: that is, they borrowed or lent goods such as kerosene, rice or oil in times of need.
We work with two caste categories:  the first consists of people in scheduled castes and scheduled tribes and the second consists of those people in any other caste \citep{munshi2006traditional}.  Scheduled castes and scheduled tribes are those defined by the Indian government as being disadvantaged.  This is a fundamental  distinction over which the strongest cultural forces are likely to focus.  Additional norms are at work with finer caste (\emph{jati}) distinctions, but those norms are more varied depending on the particular castes in question while this provides for a clear barrier.

As a simple model to address this issue, consider a process in which
individuals may meet in pairs or triples and then decide whether to form a given link or triangle.
The link is formed if and only if both individuals prefer to form the link, and a triangle is formed if and only if all three individuals prefer to form it.
This minimally complicates an independent-link model enough to require modeling link interdependencies.

In particular, there are probabilities, denoted $\pi_L(diff), \pi_L(same)$, that a given link has an {\sl opportunity} to form (i.e., the pair meets and can choose to form the relationship) that depend on the pair of individuals being of different castes or of the same caste, respectively.
Similarly, there are probabilities, denoted $\pi_T(diff), \pi_T(same)$, that a given triangle has an {\sl opportunity} to form (that the three people involved meet and can choose to form the relationship) that depend on the triple of individuals being of all the same castes or two of the same and one of a different caste.

Preferences are similarly described in a random utility framework \citep{mcfadden1973conditional}. Individual $i$'s utility of having a relationship with $j$ can by influenced by whether they share caste and is given by
\[
u_i(ij) = \beta_{0,L} + \beta_{1,L} SameCaste_{ij} + \delta_L' X_{ij}- \epsilon_{L,ij},
\]
where $SameCaste_{ij}$ is a dummy for whether both individuals are members of the same caste,  $X_{ij}$ is a vector of covariates depending on $X_i$ and $X_j$. For expositional simplicity here set $\delta_L = 0$. The outside option is zero, so $p_L(same)$ is the probability that an individual will desire to form a link with an individual of the same caste group, and $p_L(diff)$ is the probability that an
individual will desire to form a link with an individual of a different caste group.

The crucial point is that $i$ can have returns that depend on being in a multilateral relationship with $j$ and $k$ -- that is conceptually distinct from having these two bilateral relationships -- and this can be given by
\[
u_i(ijk) = \beta_{0,T} + \beta_{1,T} SameCaste_{ijk} + \delta_T' X_{ijk}-  \epsilon_{T,i,jk},
\]
where $SameCaste_{ijk}$ is a dummy for whether all three individuals are members of the same caste, $X_{ijk}$ is a vector of covariates depending on $X_i$, $X_j$, and $X_k$. Again for expositional simplicity $\delta_T = 0$. Correspondingly, $p_T(same)$ is the probability that an individual will desire to form a triangle when all individuals are of the same caste group, and $p_T(diff)$ is the probability that an
individual will desire to form a triangle when it consists of people from both caste groups.\footnote{This is a simplified model for
illustration, but one can clearly consider preferences conditional on any string of covariates.  This extends a model such
as that of \cite*{currarini2009economic,currarini2010pnas} to allow for additional link dependencies. We could also be interested in higher order relationships.}

The hypothesis that we explore is that $p_T(diff)/p_T(same) < p_L(diff)/p_L(same)$ so that people are more reluctant to involve themselves in cross-caste relationships when those are ``public'' in the sense that other individuals observe those relationships; with a null hypothesis that they are equal $p_T(diff)/p_T(same) = p_L(diff)/p_L(same)$.

\subsection{ERGMs}

\

The standard (and to date essentially only) model for dealing with this sort of formulation in which we want to test hypotheses
about the formation of triangles and links is an ERGM.

In order to work with the data, which also
contains non-trivial numbers of isolated nodes (asocial individuals who do not form relationships with others),
we also allow for isolates.

Before incorporating the distinction between links and triangles of various types (e.g., same, different)
let us show that ERGMs are ill-equipped even to handle a non-type based model.
So, suppose that the probability of the formation of a network $g$ can be expressed as a function of the network's number of isolated nodes $S_I(g)$,  number of links $S_L(g)$, and  number of triangles $S_T(g)$.
In its exponential random graph model (ERGM) form, the probability of a network $g$ being formed is
\begin{equation}
\label{ergm-LT}
\Prob_\beta \left(g\right)=\frac{\exp\left(\beta_I S_I\left(g\right) +\beta_L S_L\left(g\right)+\beta_T S_T\left(g\right)\right)}{\sum_{g'}\exp\left(\beta_I S_I\left(g'\right) +\beta_L S_L\left(g'\right)+\beta_T S_T\left(g'\right)\right)}.
\end{equation}

If $\beta_I=\beta_T=0$ then this reduces to a standard Erd\H{o}s-R\'{e}nyi random graph. The more interesting case is where at least one of
 $\beta_I\neq 0$ or $\beta_T\neq 0$, so that networks become more ($\beta_T>0$) or less ($\beta_T<0$) likely based on the number of triangles they contain - or, similarly, of isolates they contain.

\subsubsection{ERGM Estimation}\label{ergm-examplebad}

The difficulty with estimating such a model is that the number of such networks in the calculation of the denominator's $\sum_{g'}$ is
$2^{\binom{n}{2}}$.\footnote{In the undirected case, even with a tiny society of just 30 nodes this is $2^{435}$, while estimates of the number of atoms in the universe are less than $2^{258}$ \citep{schutz2003gravity}.} Thus, the fraction of networks that can be sampled is necessarily negligible, and unless careful knowledge of the model is used in guiding the sampling, the estimation of the denominator can be inaccurate.

Given that estimating the parameters of an ERGM are thus forced to circumvent direct calculation of  the denominator,
approximation methods such as MCMC techniques have been used.\footnote{See \citet{snijders}, \citet{handcock2003assessing}, and discussions in  \citet{snijders2006new} and \citet{jackson2008social,jackson2010handbook}.}
The rough intuition is that such methods sample some networks (picking a few $g'$s ) to estimate the relative sizes of $\exp\left(\beta_I S_I\left(g'\right) +\beta_L S_L\left(g'\right)+\beta_T S_T\left(g'\right)\right)$
from which to extrapolate the $\sum_{g'}$ in the denominator of (\ref{ergm-LT}).
Even with this approach, the space of all possible networks is difficult to sample in a representative fashion.  For instance, if one samples say 10000 networks, then one samples on the order of $2^{16}$ networks out of the possible $2^{1225}$ on 50 nodes, which is about one out of every $2^{1209}$ networks.
Thus, unless one is very knowledgeable in choosing which networks to sample and how many to sample of different types, or one is very lucky, the sample is unlikely to be even remotely representative of the possible configurations that might occur.  Formally, draws generated by the sampling need to be well-mixed in a practical amount of time.

Indeed, the time before which an MCMC run has a chance to sample enough networks to gain a representative sample is generally {\sl exponential} in the number of links and so is prohibitively large even with a few nodes.\footnote{This does not even include difficulties of sampling.  For example, as discussed by  \citet{snijders2006new}, a technique of randomly changing links based on conditional probabilities of links existing for given parameters can get stuck at complete, empty, or other extreme networks.}
In particular, an important recent result of \citet{bhamidi2008mixing} shows that MCMC techniques using Glauber
dynamics for estimating many classes of ERGMs mix in less than exponential time {\sl only if} any finite group of edges are asymptotically independent. So, the only time those models are practically estimable is when the links are approximately independent, which precludes the whole reason for using ERGMs!

To illustrate the computational challenges, we estimate a version of the simple model from (\ref{ergm-LT}) on $n=50$ nodes.
In particular, we randomly generate networks that have {\sl exactly } 20 isolates, 45 links and 10 triangles on 50 nodes (with 15 links not in triangles).
Thus, the statistics of all of the networks are identical, and only the location of the links and triangles changes.   Any two networks with exactly the same statistics should lead to exactly
the same parameter estimates as they have exactly the same likelihood under all parameter values.
There is a unique, well-defined maximum likelihood estimated ERGM parameters for this set of statistics (as detailed in Section \ref{mle}).
Thus, the only variation in estimated parameter values comes from imperfections in the software and estimation procedure given the
computational challenges.

Using standard ERGM estimation software (\texttt{statnet} via \texttt{R}, \cite{handcock:statnet}) we estimate the parameters of an ERGM with isolates, links and triangles
for each of these randomly drawn networks that should all lead to exactly the same parameter estimates.  We present the estimates in Figure \ref{fig:ERGM-ILT2}.

\begin{figure}[h!]
\centering
\subfloat[Isolate Parameter Estimates]{
\label{fig:isolates2ERGM}
\includegraphics[width=0.33\textwidth]{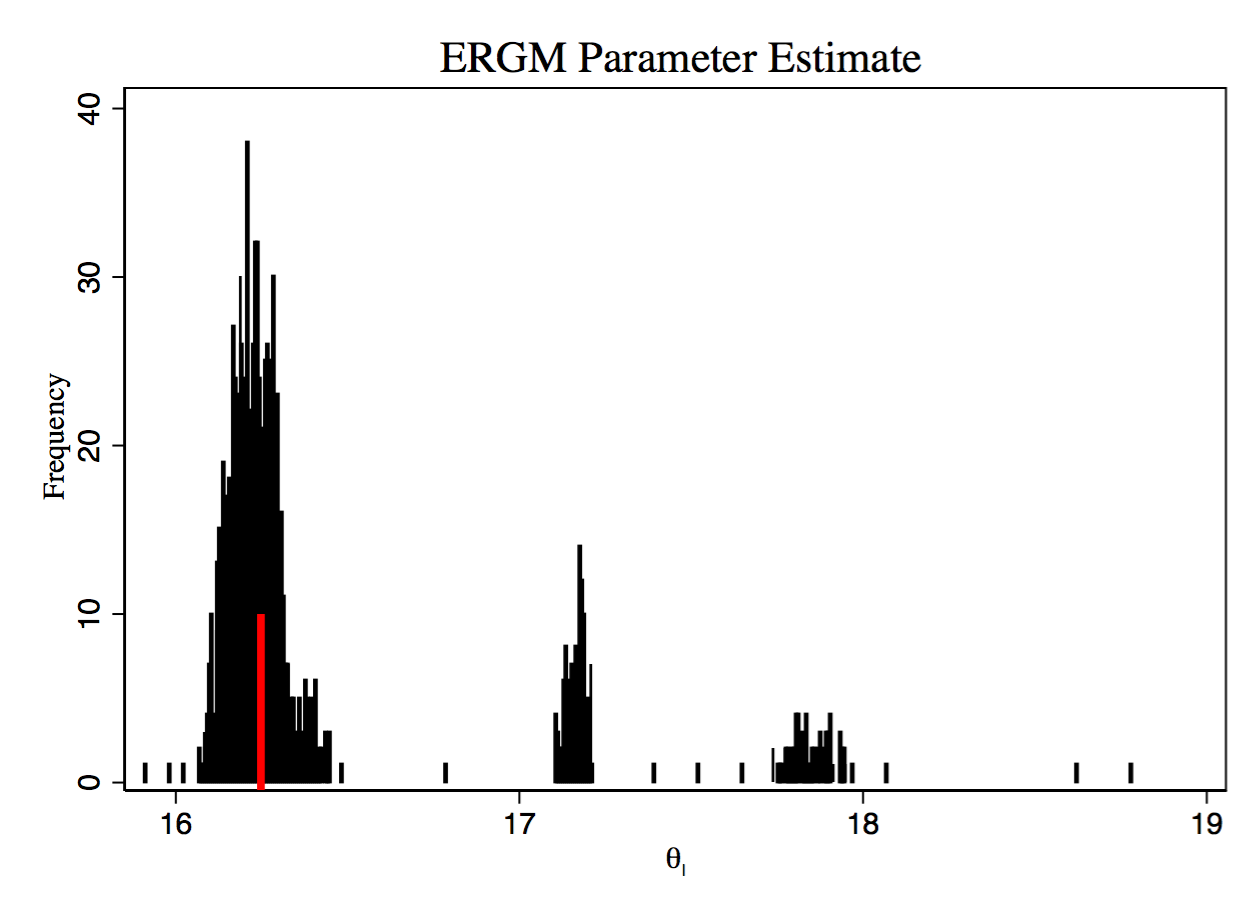}
}
\subfloat[Link Parameter Estimates]{
\label{fig:links2ERGM}
\includegraphics[width=0.33\textwidth]{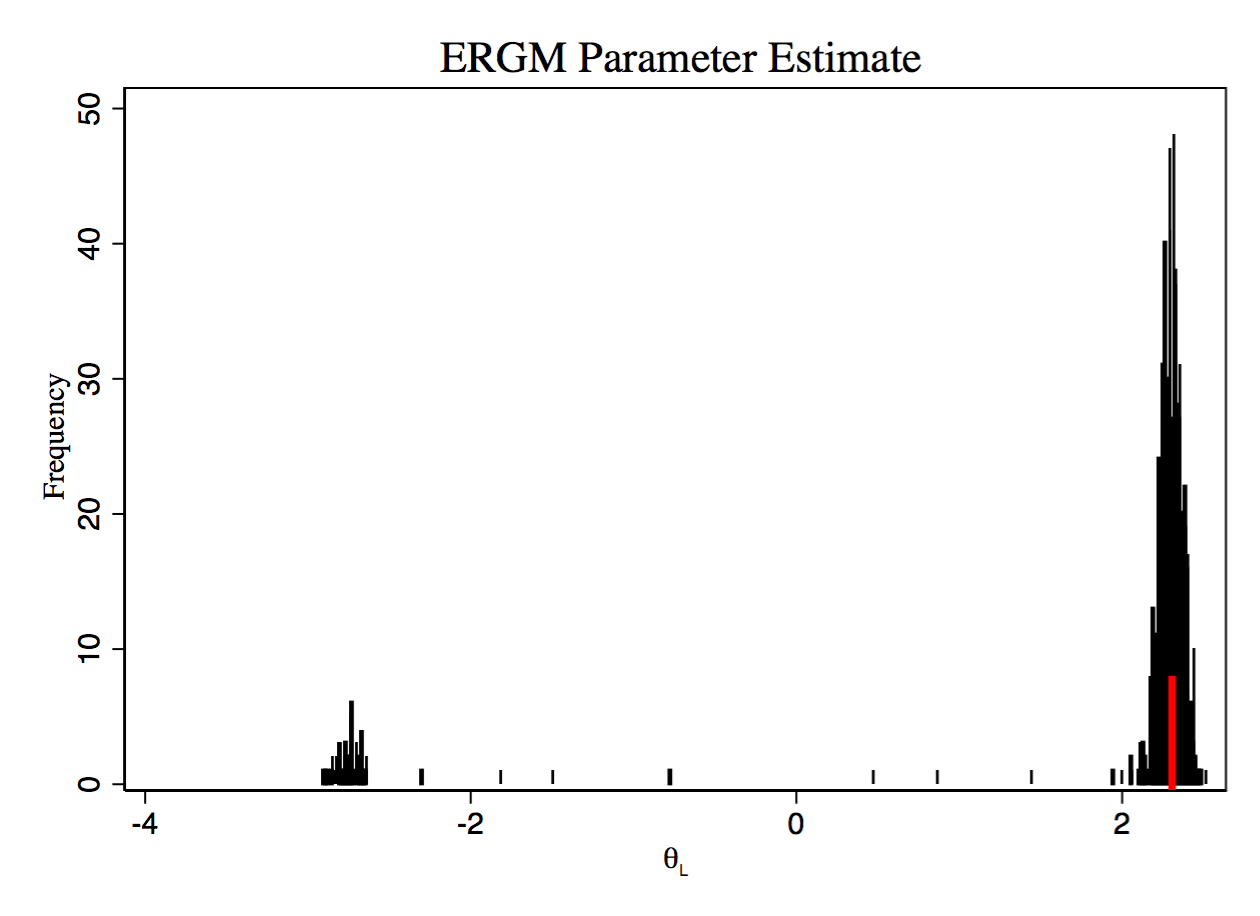}
}
\subfloat[Triangle Parameter Estimates]{
\label{fig:triangles2ERGM}
\includegraphics[width=0.33\textwidth]{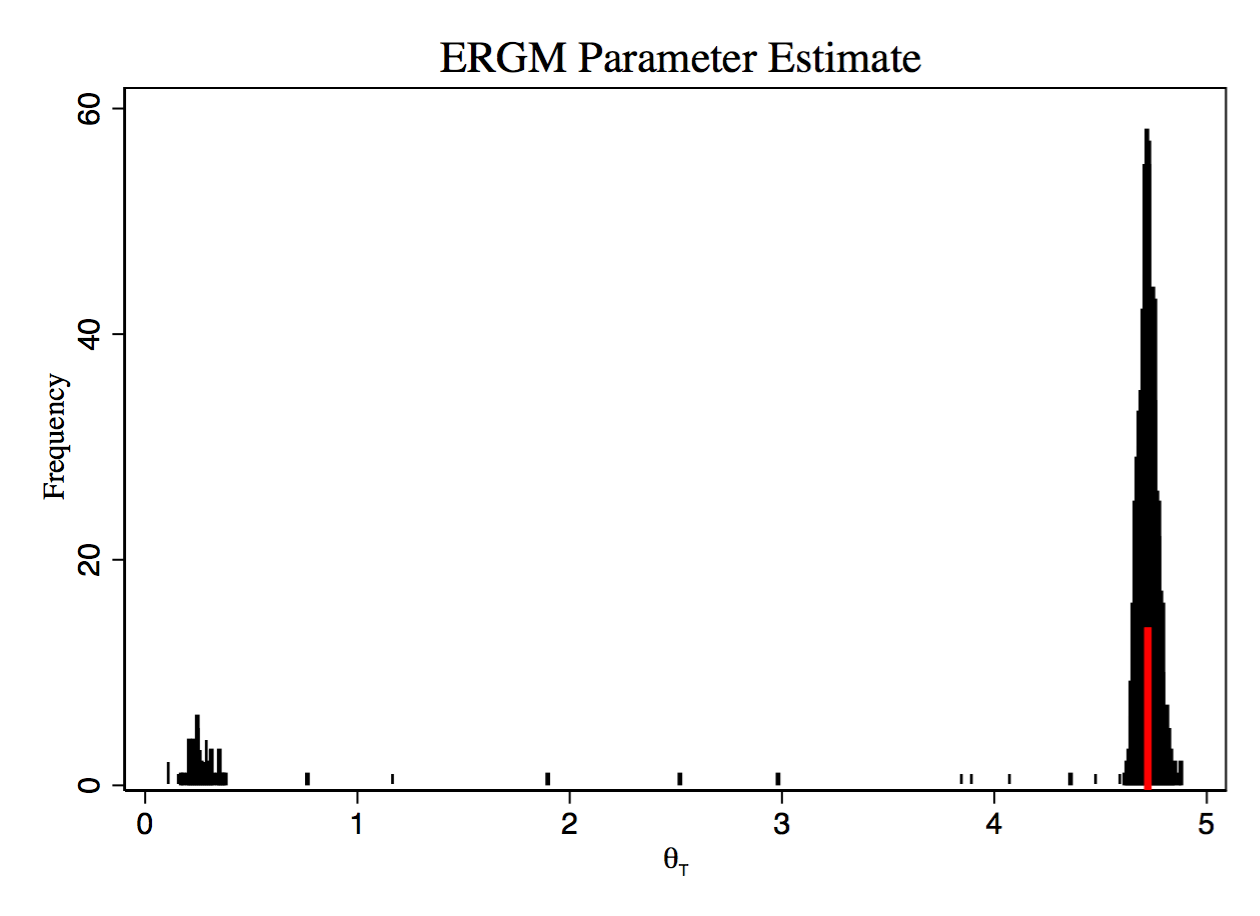}
}
\caption{\label{fig:ERGM-ILT2} Standard ERGM estimation software (Statnet) output for 1000 draws of networks on 50 nodes, each having \emph{exactly} 20 isolated nodes, 45 links, and 10 triangles. The red lines (on top of each other) are the median left and right 95 percent confidence interval lines (not capturing 95 percent of the estimates).
Networks with identical statistics should lead to identical parameter estimates: all of the variation comes from imprecisions in the estimation procedure.}
\end{figure}

There are two self-evident issues with the estimation.
First, the estimated parameters for links and triangles cover a wide range of values, in fact with the link parameter estimates being both positive and negative and
ranging from below -3 to above 3 (Figure \ref{fig:links2ERGM}) and triangles parameter estimates ranging from just above 0 to 5 (Figure \ref{fig:triangles2ERGM}).  Only the isolates
parameter estimates are remotely stable (Figure \ref{fig:isolates2ERGM}), but even those vary in three different regions with substantial variation.
Second, despite the enormous variation in estimated parameter values from very similar networks, the reported standard errors are quite narrow and almost always report that the parameter estimates are highly significant.  Moreover, the median left and right standard error bars essentially coincide and do not come close to capturing the actual variation.

For this example, there is no variation in the SERGM or SUGM estimates, as they are estimated via an exact 
 calculation, as we discuss in the next section.

In Appendix \ref{appendILT} we consider some additional tests - showing that the software has even more serious problems in simulating networks. Each of the 1000 simulated networks generates parameter estimates.  Using those parameter estimates we simulate a network using Statnet's simulation command.  We then check whether the simulated networks come anywhere close to matching the original networks.  The generated networks generally have hundreds of links and thousands of triangles (Figure \ref{fig:simulstats}), not at all matching
the original statistics.

\subsubsection{A Pr\'elude to Our Approach}

In order to overcome this problem, we develop two new classes of models, both of which
are partly built on the following insight.

Given the model specified in (\ref{ergm-LT}), any two networks that have the same numbers of isolates, links, and triangles have the same probability of forming.
That is, if $(S_I(g), S_L(g),S_T(g)) = (S_I(g'),S_L(g'),S_T(g'))$, then $\Pr_\beta(g)=\Pr_\beta(g')$ for any $\beta$.
This is simply an observation that $(S_I(g),S_L(g),S_T(g)) $ is a sufficient set of statistics for the probability of the network $g$.  

This observation can simplify the calculations dramatically.
Given a vector of statistics $S$  (e.g., $S= (S_I,S_L,S_T)$ in our example),
let
\[
N_S(s) := | \{ g\in \mathcal{G}^n :  S(g) = s\} |
\]
denote the number of graphs that have statistics $s$.
We can rewrite the denominator of the ERGM in (\ref{ergm-LT}) as
\[
\sum_{s'} N_{S_I,S_L,S_T}(s')\exp\left(\beta_I s_I'+\beta_L s_L'+\beta_T s_T'\right).
\]
Moreover, instead of considering the probability of observing a particular {\sl network}, we can instead ask what the probability is of observing a particular realization of network {\sl statistics}. For instance, what is the probability of observing a network with a given number of links and triangles?  Generally, this is what a researcher is interested in rather than which specific network that had a given list of characteristics was realized.
We can then express the model in the following form:
\begin{equation}
\label{sergm-LT}
\mathrm{P}_{\beta}\left((S_I,S_L,S_T)=s\right) =
\frac{N_{S}(s)\exp\left(\beta_I s_I+\beta_L s_L+\beta_T s_T\right)}{\sum_{s'} N_{S}(s')\exp\left(\beta_I s_I'+\beta_L s_L'+\beta_T s_T'\right)}.
\end{equation}
This is an example of what we call a Statistical Exponential Random Graph Model, or SERGM, which are defined in their more general form below.

We have thus reduced the complexity of the estimation problem from something that is exponential in the number of nodes, to something that depends on the size of the space of statistics, which is generally polynomial in the number of nodes.  For example, while the denominator of the ERGM in (\ref{ergm-LT}) was a summation over
a number of networks which is of order $2^{n^2}$, the summation now is over possible numbers of isolates, links, and triangles which is of order $n^6$.
As we further discuss in Appendix \ref{A-sum-sect},
there are further simplifications that reduce this even more dramatically.
The remaining challenge, therefore, lies in computing $N_S(\cdot)$, which still may be impossible to do for certain models. This motivates our SERGMs.

\subsection{Our approach}

\subsubsection{Statistical ERGMs}

\eqref{sergm-LT} defines a model over network statistics and, in principle, there is nothing special about the weighting function $N_S(\cdot)$, and at times it can be hard to compute or even approximate.  Noting that $N_S(\cdot)$ should have no privilage -- it neither has statistical advantages nor is it any more natural from the perspective of microfoundations -- we first think of a more general representation of SERGMs.
By replacing the weighting function $N_S$
with some other function $K_{S}: A\rightarrow \mathbb{R}$
we obtain a {\sl statistical exponential random graph model} (SERGM).  The associated probability of
seeing realized number of links and triangles $(S_I,S_L,S_T)=s$ is:
\[
\widehat{\mathrm{P}}_{\beta}\left((S_I,S_L,S_T)=s\right) =
\frac{K_{S}(s)\exp\left(\beta_I s_I+\beta_L s_L+\beta_T s_T\right)}{\sum_{s'\in A} K_{S}(s')\exp\left(\beta_I s_I'+\beta_L s_L'+\beta_T s_T'\right)}.
\]
This is a model that states that the probability that a network exhibits a specific realization of statistics $S=s$ is
given by an exponential function of the statistics $s$.   Note that SERGMs nests ERGMs as a special case.

\subsubsection{Subgraph Generation Models}

Our other model is not defined through an exponential form, but instead directly through the random formation of various subgraphs.  For instance, pairs of nodes, or triples of nodes, or some other configurations are directly randomly formed and a network results.
In the context of isolates, pairs and triangles, the process could be thought of as taking place as follows.
First, nodes decide to stay as isolates with some probability $p_I$.
Next, pairs of non-isolate nodes meet and decide whether to
form links with some probability $p_L$.
Also, triples of non-isolate nodes meet and decide whether to form triangles with some probability $p_T$.
The resulting network is the union of all links formed under the process.
All of these probabilities can be made dependent upon some list of node characteristics, as in Appendix \ref{ext}.
Thus, links and triangles are formed directly at random.
The model is then governed by the probabilities $p_I$ that any node is an isolate, $p_L$ that any given link is generated (on non-isolate nodes), and $p_T$ that any given triangle is generated (on non-isolate nodes).
We call this a {\sl Subgraph Generation Model} (SUGM).

The only challenge in estimating a SUGM is that we observe the resulting network and not the directly generated isolates, links and triangles.   For example, if the three links $12, 23, 13$ are all generated as links, then we would observe the triangle $123$ in the resulting network $g$ and not be sure whether it was generated as three links or as a triangle.
Nonetheless, by examining a large enough network we can accurately back out the probabilities in many cases.

We provide two sets of results on estimating SUGMs:  one concerning settings in which the networks are sparse enough so that
estimation can be made via direct counts, and a second concerning an algorithm for more general estimation when networks are dense enough so that there could be substantial overlap in the various subgraphs formed which makes counting more challenging.

\subsubsection{The Example Revisited}\label{ex:revisit}

Let us now return to the example presented in Section \ref{ergm-examplebad} that provided headaches for standard techniques for estimating ERGMs.
We can estimate that either as a SERGM or a SUGM.

The SUGM delivers direct estimates for the parameters:
$$ \widehat{p}_I := \frac{S_I}{n}  \mbox{ and } \widehat{p}_T := \frac{S_T}{\binom{n- S_I}{3}}  \mbox{ and } \widehat{p}_L := \frac{S_U}{\binom{n- S_I}{2}-3 S_T}.$$
These will be accurate estimates of the true parameters $p_I,p_T, p_L$ provided that the network is sparse enough, which is true in this example, as we show in Theorem \ref{sparse}.\footnote{Theorem \ref{sparse} does not explicitly include isolates, as we define subgraphs as connected objects for ease of notation.  However, the theorem extends easily to this case.  In particular, in the case of isolates, `sparse' actually puts a {\sl lower} bound on the probability of links - so that links are not so sparse as to generate extra isolated nodes.}
For cases of non-sparse networks, we provide an algorithm for estimating the parameters (after Theorem \ref{sparse}).

For all of the networks
\begin{equation}
\label{sugm-ex2}
\widehat{p}_I =\frac{{S}_I}{n}= \frac{20}{50}=.4, \  \widehat{p}_T=\frac{{S}_T}{\binom{n- {S}_I}{3}}= \frac{10}{\binom{30}{3}} = .002  \mbox{ and } \widehat{p}_L=\frac{15}{\binom{30}{2}-30}=.037.
\end{equation}

If we work with a SERGM (on unsupported links) that has weights
\begin{equation}
\label{sergm-ex2}
 K_I(s_I)=\binom{50}{s_I}  \mbox{ and } K_T(s_T)=\binom{\binom{30}{3}}{s_T} \mbox{ and } K_U(s_U)=\binom{\binom{30}{2}-30}{s_U},
\end{equation}
then as we show in Theorem \ref{prop-consistERGMs} and \ref{potential}, the SERGM parameters can be directly obtained as from the SUGM  binomial calculations, with an
adjustment for the exponential:
$$\widehat{\beta}_I = \log\frac{\widehat{p}_I}{1-\widehat{p}_I} = -.17,\  \widehat{\beta}_T = \log \frac{\widehat{p}_T}{1-\widehat{p}_T} = -2.7  \mbox{ and } \widehat{\beta}_U = \log \frac{ \widehat{p}_U }{1-\widehat{p}_U } = -1.4.$$
Thus we directly and easily obtain parameter estimates for the same networks that gave the ERGM estimation troubles.

\subsection{A Return to the Caste Example}

We can now use either of these approaches to test our hypotheses from the caste example.

Note that the probability that a ``same'' link forms is
$$P_L(same) = p_L(same)^2 \pi_L(same)$$
as it requires both agents to agree, and
the probability that a ``different'' link forms is
$$P_L(diff) = p_L(diff)^2 \pi_L(diff).$$
Analogously for triangles we have
$$P_T(same) = p_T(same)^3 \pi_T(same) \mbox{ and } P_T(diff) = p_T(diff)^3 \pi_T(diff),$$
where the cubic captures the
fact that it takes three agreements to form  the triangle.
The difference in the exponents reflects that it is more difficult to get a triangle to form than a link.  Hence, to perform a careful test, we have to adjust for the exponents as otherwise we would just uncover a natural bias due to the exponent that would end up favoring cross-caste links.

One challenge in identifying a preference bias is that it could be confounded by the meeting bias.  Thus, we first model the meeting process more explicitly and show that we still have identification as the meeting bias makes triangles relatively more likely to be cross-caste than links.  Thus, our test is conservative in the sense that if we find cross-caste links relatively more likely, that is evidence for a (strong) preference bias.

Consider a meeting process where people spend a fraction $f$ of their time mixing in the community that is predominantly of their own types and a fraction $1-f$ of their time mixing in the other caste's community. Then at any given snapshot in time, a community would have $f$ of its own types present and $1-f$ of the other type present, as depicted in Figure \ref{fig-geo}.  (Variations on this sort of biased meeting process appear in \citet{currarini2009economic,currarini2010pnas,bramoulle-etal}.)

\begin{figure}[!h]
\centering
\subfloat[Individuals all on own-community side of river]{
	\includegraphics[scale = 0.25]{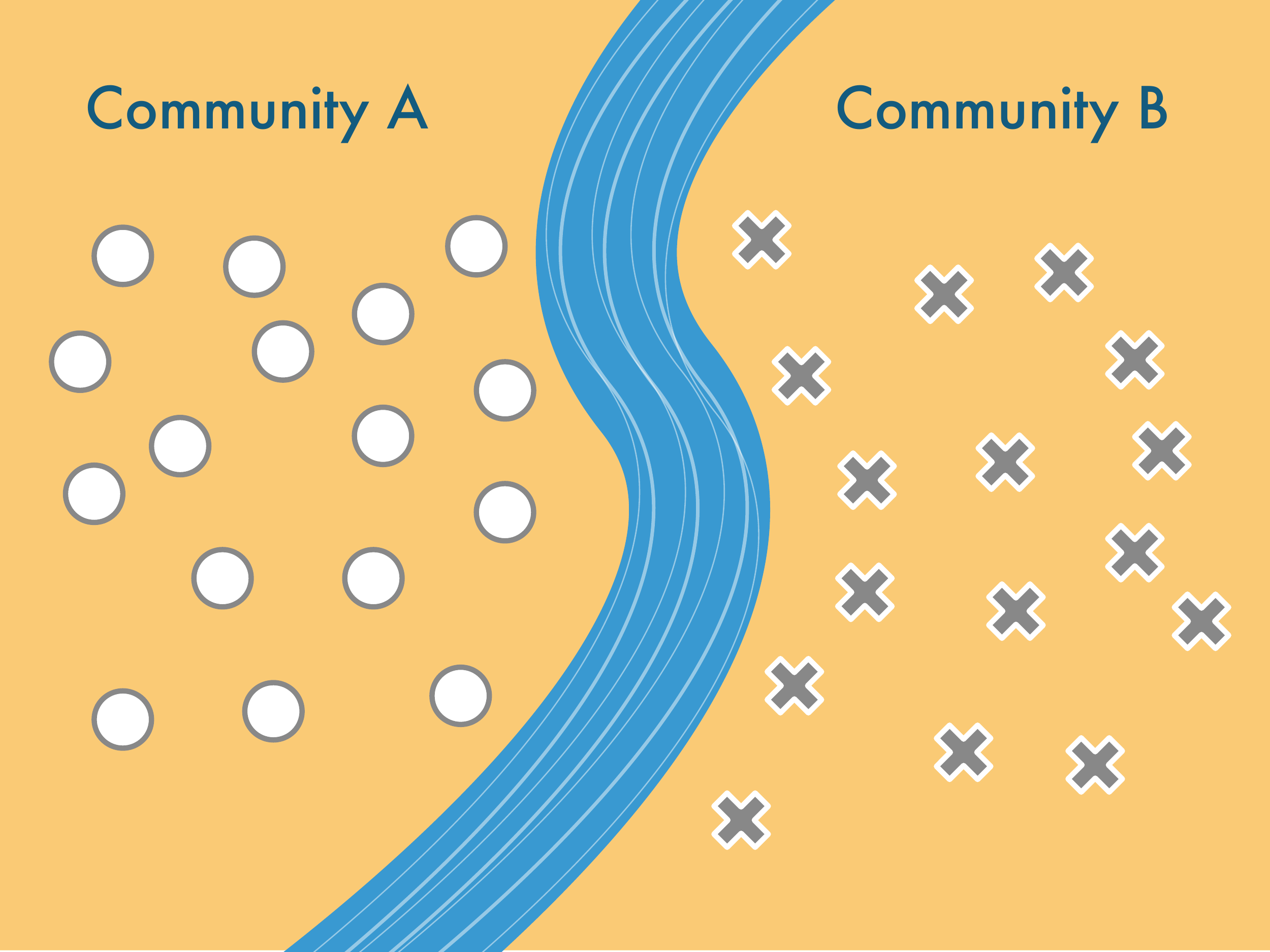}
}\ \ \ \ \
\subfloat[Fraction $f = \frac{1}{4}$ mixed across communities]{
	\includegraphics[scale = 0.25]{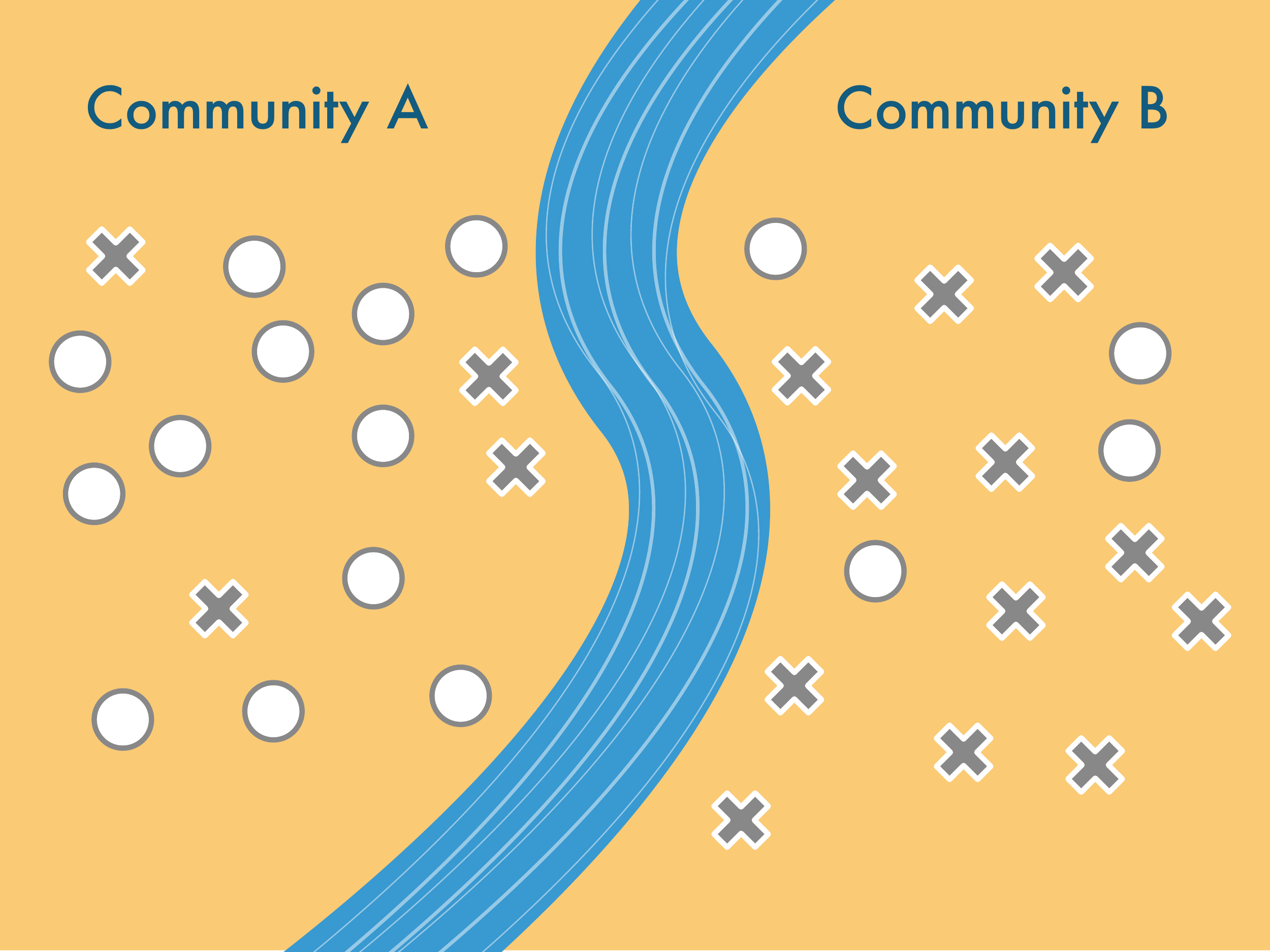}
}
	\caption{Geographically driven meeting process where agents spend 3/4 of their time in their own community.}
\label{fig-geo}
\end{figure}

\begin{lemma} \label{lem:geomodel} A sufficient condition for
$\frac{p_T(diff)}{p_T(same)} < \frac{p_L(diff)}{p_L(same)}$ is that
$\frac{P_T(diff)}{P_T(same)}<\left(\frac{P_L(diff)}{P_L(same)}\right)^{3/2}.$
\end{lemma}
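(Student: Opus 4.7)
The plan is to chase definitions: use $P_L(\cdot) = p_L(\cdot)^2 \pi_L(\cdot)$ and $P_T(\cdot) = p_T(\cdot)^3 \pi_T(\cdot)$ to translate the target inequality on preference ratios into one on the observable $P$-ratios together with the meeting-opportunity ratios. First I would raise the desired conclusion $p_T(diff)/p_T(same) < p_L(diff)/p_L(same)$ to the sixth power so that the fractional exponents arising from the square-root (link) and cube-root (triangle) inversions clear out simultaneously.

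Next, substituting $p_L(x) = (P_L(x)/\pi_L(x))^{1/2}$ and $p_T(x) = (P_T(x)/\pi_T(x))^{1/3}$ turns the target into
\[
\left[\frac{P_T(diff)}{P_T(same)}\right]^{2}\left[\frac{\pi_T(same)}{\pi_T(diff)}\right]^{2} < \left[\frac{P_L(diff)}{P_L(same)}\right]^{3}\left[\frac{\pi_L(same)}{\pi_L(diff)}\right]^{3}.
\]
The squared hypothesis supplies $[P_T(diff)/P_T(same)]^{2} < [P_L(diff)/P_L(same)]^{3}$, so it is enough to establish the pure meeting-bias inequality $[\pi_T(same)/\pi_T(diff)]^{2} \leq [\pi_L(same)/\pi_L(diff)]^{3}$, equivalently $\pi_T(diff)/\pi_T(same) \geq (\pi_L(diff)/\pi_L(same))^{3/2}$.

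The last step is to verify this meeting-bias inequality from the geographic snapshot model. Under that model, a given same-caste pair is co-located with probability proportional to $f^2 + (1-f)^2$ and a cross-caste pair with probability proportional to $2f(1-f)$; a given same-caste triple co-locates with probability proportional to $f^3 + (1-f)^3$, while a given mixed triple (two of one caste, one of the other) co-locates with probability proportional to $f(1-f)$. Substituting reduces the meeting-bias inequality to a single-variable polynomial inequality in $v := f(1-f) \in [0, 1/4]$, which I would resolve by a direct factorization exploiting the identities $f^2 + (1-f)^2 = 1 - 2v$ and $f^3 + (1-f)^3 = 1 - 3v$.

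The main obstacle is precisely this last algebraic step. At the symmetry point $f = 1/2$ both sides of the meeting-bias inequality coincide, so there is no slack at the boundary, and the cube-versus-square balance between the link and triangle sides is delicate enough that one must extract a factorization that jointly tracks $1 - 2v$ and $1 - 3v$ rather than invoking a monotonicity comparison. Once that factorization is in hand, the chain of inequalities in the first two paragraphs assembles the conclusion.
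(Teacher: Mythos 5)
Your overall architecture is the same as the paper's: write $p_L(x)=\left(P_L(x)/\pi_L(x)\right)^{1/2}$ and $p_T(x)=\left(P_T(x)/\pi_T(x)\right)^{1/3}$, clear the fractional exponents, peel off the hypothesis on the $P$-ratios, and reduce everything to a pure meeting-bias inequality $\left(\pi_T(same)/\pi_T(diff)\right)^{2}\leq\left(\pi_L(same)/\pi_L(diff)\right)^{3}$. That reduction is correct and matches the paper (your inequality is exactly the paper's $\left(\pi_T(same)/\pi_T(diff)\right)^{1/3}\leq\left(\pi_L(same)/\pi_L(diff)\right)^{1/2}$ raised to the sixth power). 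The gap is in the last step, and it is not just an algebraic inconvenience. You take the co-location probability of a given mixed triple to be proportional to $f(1-f)$, which is what you get by multiplying three individual location probabilities for a specific two--one composition; the paper instead uses $\pi_T(diff)\propto 1-\left(f^3+(1-f)^3\right)=3f(1-f)$, the probability that three co-located individuals are of mixed type. The factor of $3$ is decisive. Writing $v=f(1-f)\in(0,1/4]$, your version of the meeting-bias inequality is equivalent to $8v(1-3v)^2\leq(1-2v)^3$, and this is \emph{false} on part of the range: at $v=0.2$ (i.e., $f\approx 0.276$ or $f\approx 0.724$) the left side is $0.256$ and the right side is $0.216$. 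So no factorization will close your final step; the inequality you are trying to prove does not hold under your normalization. Your observation that the two sides coincide at $f=1/2$ is the symptom rather than the obstacle: that equality occurs only because the $3$ has been dropped.

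With the paper's normalization the needed inequality becomes $8v(1-3v)^2\leq 9(1-2v)^3$, which holds with large slack (the left side is at most $32/81$, attained at $v=1/9$, while the right side is at least $9/8$), and at $f=1/2$ the two sides of the ratio inequality are $3$ and $1$, not equal. The paper does not even need this full strength: it records only $\pi_T(same)/\pi_T(diff)<\pi_L(same)/\pi_L(diff)$, which under its normalization reduces to $2<3$, and combines it with $\pi_L(same)/\pi_L(diff)\geq 1$ (AM--GM on $f^2+(1-f)^2$ versus $2f(1-f)$) to obtain $\left(\pi_T(same)/\pi_T(diff)\right)^{1/3}\leq\left(\pi_L(same)/\pi_L(diff)\right)^{1/2}$. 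So either adopt the paper's accounting for $\pi_T(diff)$ --- in which case your final step is easy, not delicate --- or, if you insist on the per-triple normalization, this route does not establish the lemma.
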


The proof appears in the appendix, but follows from straightforward calculations.

Given Lemma \ref{lem:geomodel}, we can test our hypothesis directly from a SUGM that compares relative link and triangle counts (we can
also include isolated nodes, but those do not impact this hypothesis).
In particular, we only need examine whether $\frac{P_T(diff)}{P_T(same)}<\left(\frac{P_L(diff)}{P_L(same)}\right)^{3/2}.$

Figure \ref{fig-caste} shows the results.  For the bulk of villages, cross-caste relationships relative to within-caste relationships are more frequent as isolated links as opposed to being embedded in triangles, {\sl even when adjusting for the fact that triangles take more consent}.  The difference is significant at the 99 percent level.\footnote{This is from doing a conservative
nonparametric test: under the null that the number of villages for which the ratio is less should be 1/2 with a binomial distribution on the number above or below.}

\begin{figure}[!h]
\centering
	\includegraphics[trim = 0.7in 2in 0.7in 2in, clip = true, scale = 0.55]{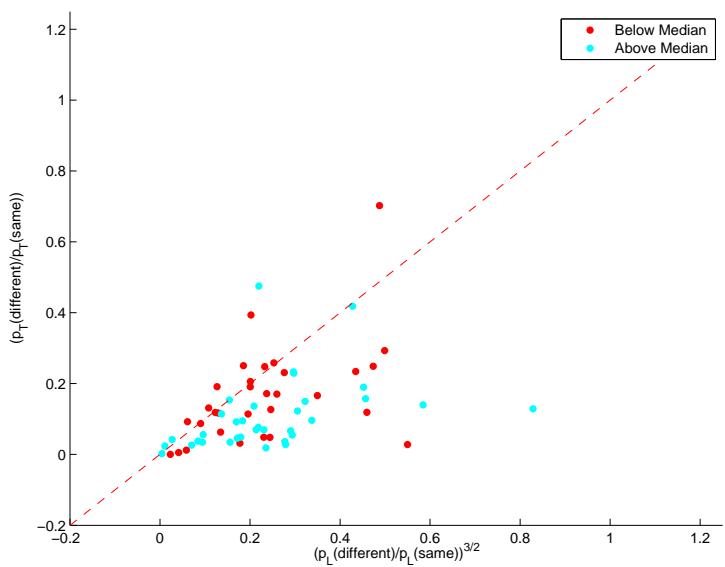}
	\caption{Comparison of the relative propensity to form cross-caste versus same-caste relationships for triangles ($y$-axis) compared to links ($x$-axis).  The propensity is lower for triangles than links in a significant number of villages, even when adjusting link propensities downwards by raising them to the 3/2 power to adjust for the number of consents needed to form the subgraphs.   The color coding 	 distinguishes those villages that have above/below the median size minority group.}
\label{fig-caste}
\end{figure}

In  Figure \ref{fig-caste} villages are color coded by the relative sizes of the two caste-based groups.   The red villages are such that one of the two caste designations dominates the village and the other group is relatively small, while the blue villages are ones in which the two caste designations are more balanced in terms of sizes.  In other contexts, homophily has been found to be strongest when groups are evenly balanced (e.g., see \cite{mcpherson2001,currarini2009economic,currarini2010pnas}).
Here we see that the social pressures against mixed-caste triangles are stronger when the two caste designations are more evenly balanced.

\bigskip

To sum up, we develop two classes of tractable models.  One are subgraph generation models (SUGMs) in which we think of subgraphs as being directly generated by subgroups of nodes.  The second is a more general statistical exponential random graph model (SERGM), in which a network is drawn based on its properties (e.g., a vector of sufficient statistics such as subgraph counts).  We now provide
formal definitions, and then theorems on asymptotic estimation of each of these classes of models, and also describe techniques that provide for tractable estimation even with large numbers of nodes in many cases.
We then further clarify the relationship between SUGMs and SERGMs, via Theorem \ref{potential}.

\section{Definitions}\label{framework}

We first present some needed definitions before describing our results.

\subsection{SERGMs}

\

The general set of SERGMs that we define is as follows. Consider a vector of network statistics $S=(S_1, \ldots, S_k)$  that takes on values in some set $A\subset \mathbb{R}^k$.\footnote{Given the finite number of possible networks, $A$ is taken to be finite.  The dimension of $A$ can easily be generalized to be larger than $k$, as the dimension plays no role in our results. If one wishes to work with weighted networks,
then obvious extensions to continuous ranges and integrals apply.}
A weighting function $K_S: A\rightarrow \mathbb{R}$, together with a set of parameters $\beta\in
\mathcal{B} \subset \mathbb{R}^k$,
define a SERGM. The associated probability of
seeing realized statistics $S=s$ is:
\begin{equation}
\mathrm{P}_{\beta, K_S}\left(s\right)=\frac{K_S(s)\exp\left(\beta \cdot s \right)}{\sum_{s'\in A} K_S(s')\exp\left(\beta \cdot s'\right)}.
\label{eq: sergm}
\end{equation}
The model is based directly on the properties of the network rather than the actual realized network.\footnote{
Which network forms given the realized statistics is secondary and could be uniform at random, or according to some other conditional distribution,
so long as given the realized $s$ a network $g$ such that $S(g)=s$ is drawn. Unless otherwise stated we take it to be uniform at random.} Recall all parameters can depend on $n$.

In the language of exponential families of random variables, $K_S(\cdot)$ is simply a \emph{reference distribution}.  Varying the reference distribution, of course, changes the resulting odds of various values of $s$ being drawn and can affect whether the model is consistently estimable.\footnote{Note that any SERGM with weights
$K_S$ and parameters $\beta$ also generates a distribution over networks, for example taking networks to be drawn uniformly at random from those with the given statistics. This can be written
as
\[
\Prob_{\beta,K}(g)=\frac{\frac{K_{S}(s)}{N_{S}(s)}\exp\left(\beta\cdot s(g)\right)}{\sum_{g'}\frac{K_{S}(s(g'))}{N_{S(s(g'))}}\exp\left(\beta\cdot s(g')\right)}=\frac{\exp\left(\beta\cdot s(g)+f_n(s)\right)}{\sum_{g'}\exp\left(\beta \cdot s(g')+f_n(s(g'))\right)},
\]
where $f_n(s) = \log\left( \frac{K_{S}(s)}{N_{S}(s)}\right)$.
Therefore the model is modified by a shift with weights $\log\left(K_{S}(s)/N_{S}(s)\right)$.
The sum is over $g'$ for which $N_S(s(g'))>0$.}

Recalling that $N_S(s)= \left\vert \right\{ g\in \mathcal{G}^n :  S(g) = s\left\} \right\vert$ is the number of graphs that have the same statistic value $s$, $K_S(\cdot)=N_S(\cdot)$ then corresponds to a standard ERGM.  Thus, SERGMs nest ERGMs as a special case.
Note, however, that there is no reason to maintain that $K_S(\cdot)$'s must approximate $N_S(\cdot)$'s.  Nature may choose properties of networks ($S$'s)  according to some alternative weighting.  The instance of studying $N_S(\cdot)$-weighted SERGMs may be a historical one: on another planet, people may have first modeled SERGMs with general $K$'s and would see those as natural with the ERGMs being a special case where the weights are specialized to the $N_S$'s.
As we shall see below, there are natural economic based network formation models for which the reference distributions will not be the $N_S(\cdot)$'s.
Moreover, even if one is interested in a sub-class of these models wherein the $K_S(\cdot)$'s approximate (or are) the $N_S(\cdot)$'s, the statistical representation greatly reduces the dimensionality of the space over which relative likelihoods must be estimated to the point at which practical estimation of SERGMs becomes feasible.

It is important to note that node characteristics can also be included in statistics.  For example, in terms of the question we raised in the introduction, we can keep track of nodes' castes.  Then we can keep separate counts of how many links there are between people both of caste A, between nodes both of caste B, and how many there are between castes A and B; as well as how many triangles involve only people of caste A, how many triangles involve only people of caste B, and how many triangles involve people of different castes, and so forth.

\subsubsection{Estimation of SERGMs}
\label{mle}

The maximum likelihood estimator
 solves
\[
\betahat =\argmax_\beta  \beta \cdot s-\log\left[\sum_{s'\in A} K_S(s')\exp\left(\beta \cdot s'\right)\right].
\]
Under regularity conditions such that the SERGM is sufficiently identified ($\beta\neq \beta'$ implies that $\E_{\beta} [S]\neq \E_{\beta'} [S]$), the MLE $\betahat$ of a SERGM of the form \eqref{eq: sergm} solves
\begin{equation}
s= \frac{\sum_{s'\in A}K_S(s') \exp\left(\betahat\cdot s'\right)s' }{\sum_{s'\in A}K_S(s') \exp\left(\betahat\cdot s'\right)} = \E_{\betahat} [S].
\label{eq:sergmest}
\end{equation}

For extreme values of $s$ this will not be well-defined.\footnote{For example, for a simple Erd\H{o}s-Renyi random network where the count statistic is simply the number
of links in the network, if it turns out that all links are present so that $s=n(n-1)/2$, then the 
$\beta= \log \left( \frac{p}{1-p}\right)$ corresponding to the maximum likelihood estimator of the link probability ($p=1$) is not well-defined. 
For more on the non-existence of well-defined maximum likelihood estimates for extreme networks see \citet{rinaldo2011maximum}.}
Here, we implicitly assume that the model is specified so that the probability of observing extreme statistics for which this is not satisfied is negligible, which will be true of the asymptotic specifications that we work with
provided that the $\beta$'s do not tend to extremes too quickly.\footnote{Parameters can still approach extremes.  The requirement here can be fairly weak.  For example, if one were counting links it must be that the probability of having absolutely no links (or all links) realized vanishes, which is true even if the probability of a link is larger than $1/n^x$ for some $x<2$.
}

\

\subsection{Subgraph Generation Models: SUGMs}\label{newmodel}

\

The idea behind a  SUGM is that subgraphs are directly generated by some process.  Classic examples of this are Erdos-Renyi random networks in which each link is randomly generated, and the generalization of that model,  stochastic-block models, in which links are formed with probabilities based on the nodes' attributes.
The more interesting generalization of those linked-based models to SUGMs is to allow richer subgraphs to form directly, and hence to allow for dependencies in link formation.  It might be that people of the same caste meet more frequently or are more likely to form a relationship when they do meet.  Similarly, groups of three (or more) randomly meet and can decide whether to form a triangle, with the meeting probability and decision potentially driven by their castes and/or other characteristics.   The model can then be described by a list of probabilities, one for each type of subgraph, where subgraphs can be based on the subgraph shape as well as the nodes' characteristics.

As we show in Theorem \ref{potential}, SUGMs have a representation in a SERGM form, but in some relevant cases SUGMs are easier and more intuitive to work with directly, and so we distinguish them from their SERGM representation.

SUGMs are formally defined as follows.
There is a a finite number of different types of nonempty subgraphs, indexed by $\ell\in \{1,\ldots, k\}$, on which the model is based.\footnote{This definition does not admit isolates since we define subgraphs to be nonempty, but isolates are easily be admitted with notational complications, and are already illustrated in the examples.}
In particular, a SUGM on $n$ nodes is based on some list of $k$ subgraph types: $(G^n_\ell)_{\ell\in \{1,\ldots,k\}}$
where each $G^n_\ell$ is a set of possible subgraphs on $m_\ell$ nodes, which are identical to each other (including node covariates) up to the relabeling of nodes.\footnote{Formally, there is a set $\mathcal{H}=\{H_1,...,H_k\}$ of \emph{representative subgraphs}, possibly depending on  covariates, each having $m_\ell$ nodes. 
$G_\ell^n$ contains all subgraphs that are homomorphic to $H_\ell$.
As an example, the set $G^n_\ell$ for some $\ell$ could be all triangles such that two nodes have characteristics $X$ and one has $X'$. These could also be directed subgraphs in the case of a directed network. With assumptions on smooth covariates and probability functions, one could have $p^n_\ell(x_\ell)$, described in Appendix \ref{ext}.}
The final ingredient is a list of corresponding parameters $p^n=(p_1^n, \ldots, p_k^n)\in [0,1]^k$  governing the likelihood that a particular subgraph appears, with $p^n_\ell$ indicating the probability that a subgraph in $G^n_\ell$ forms.

A network $g$ is randomly formed as follows.  First, each of the possible subnetworks in $G_1^n$ is independently formed with a probability $p^n_1$.    Iteratively in ${\ell\in \{1,\ldots, k\}}$ , each of the possible subnetworks in $G^n_\ell$ that is not a subset of some subgraph that has already formed is independently formed with a probability $p^n_\ell$.  The resulting $g$ is the union of all
the links that appear in any of the generated subgraphs.

We consider two variations of the model.  The first, as just defined, is one in which we only keep track of subnetworks in $G^n_2$ that are not already part of a subnetwork in $G^n_1$ that already formed.  The other variation is one in which we allow for redundant formation, and simply form subgraphs of each type disregarding the formation of any other subgraphs.

To see the issue, consider the formation of triangles and links.  Let $G_1^n$ be a list of all possible triangles and $G_2^n$ be a list of all possible links.  First form the triangles with the corresponding probability $p^n_1$.   This then leads to the creation of some of the links in $G_2^n$.  Do we allow those links to also form on their own?  Whether we then allow links that are already formed as part of a triangle to form again as links is inconsequential in terms of the network that emerges, and really is an accounting choice and leads to an equivalent distribution over networks.  Thus the two conventions for generating networks are equivalent.   It turns out sometimes to be easier to count subgraphs as if they can form in multiple ways, and at other times it is easier to keep track of smaller subnetworks that form only on their own and not already as part of some larger subnetwork.  We are explicit in which way we use in what follows.

When a subgraph $g'$ is generated in the $\ell^{th}$-phase, we say that it is \emph{truly generated}. This results in a network $g$, which is the union of all the truly generated subgraphs.
The resulting $g$ can also contain
some {\sl incidentally generated} subgraphs that result from combinations of links of unions of truly generated subgraphs, and we provide further definitions concerning this below.

This model differs from a SERGM because
the {truly generated} subnetworks are not directly observed. The actual counts of statistics under the resulting $g$ can differ from the number that were formed directly under the process.  Backing out how many of each type of subnetwork was truly generated is important in estimating the
true parameters of the model, the $p^n_\ell$'s, and is something that we discuss at length below.

\section{SERGM Estimation
}\label{SERGM}

Under what conditions does an estimator of a SERGM converge to the correct estimate in probability as $n$ grows? 
The primary challenge is that the data consists of a single network, the asymptotics are in terms of the number of nodes, but the relationships are correlated and so the data can be far from independent.
We consider sequences of SERGMs $(S^n, K^n_S, A^n, \beta_n)$, with $n \rightarrow \infty$.

\subsection{Count SERGMs}\label{simpleCount}

\

We begin by focusing on a natural subclass of SERGMs that we call ``count SERGMs''.  We show that these have parameters that are consistently and easily estimable with direct counts of subgraphs.

Let $S^n=(S^n_1, \ldots, S^n_k)$ be a $k$-dimensional vector of network statistics whose $\ell$-th entry takes on non-negative integer values with a maximum value $\overline{S}^n_\ell\rightarrow \infty$.  We call such a SERGM specified with
$K^n(s)= \prod_\ell \binom{\overline{S}^n_\ell}{s_\ell}$  a {\sl count SERGM}.  Let let $D_{n}={\rm Diag}\left\{ \overline{S}^n_\ell  \right\} _{\ell=1}^{k}$ be the associated normalizing matrix.

In a count SERGM, each statistic can be thought of as counting some aspect of the network:  the number of links between nodes of various types, various types of cliques, other subgraphs, the number of pairs of nodes at less than some distance from each other, etc.
It includes counts of subgraphs, but also allows for other counts as well (e.g., the number of pairs of nodes at certain distances from each other, as just mentioned; or the number of nodes that have more than a certain degree - so a degree distribution).

Associated with any vector of count statistics $S^n$ on $n$ nodes is a possible range of values. It could be that there are cross restrictions on these values.  For example, if we count links $S_L^n$ and isolates $S_I^n$,  then $S_L^n$ cannot exceed $\binom{n-S_I^n}{2}$.
In that case the set of possible statistics is a set $A^n$ where
$$A^n=\left\{ (s_L,s_I):   s_I\in \{0,1,\ldots,n\}, s_L\in \left\{0,1,\ldots,\binom{n-s_I}{2} \right\} \right\}.$$

Given that $A^n$ might not be a product space, in estimating count SERGMs, it will be helpful to know whether the realized statistics are likely to be close to having binding restrictions on the cross counts.  For example, if a model
is expected to generate one tenth of its nodes as isolates and, say, one in a hundred of all possible links, then in a wide band around the expected values there would be no conflict in the counts.

Formally, a sequence of count SERGMs $(S^n, K^n, A^n, \beta^n)$ have statistics that are {\sl non-conflicted} if there exists some $\varepsilon>0$ such that
\[
\prod_\ell
\left\{\lfloor \E_{\beta^n_\ell}[S^n_\ell](1 - \varepsilon) \rfloor,\lfloor \E_{\beta^n_\ell}[S^n_\ell](1 - \varepsilon) \rfloor+1,\ldots,  \lceil \E_{\beta^n}[S^n_\ell](1 + \varepsilon) \rceil \right\}\subset A^n
\]
for all large enough $n$.\footnote{$\E_{\beta^n_\ell}[S^n_\ell]$ refers to the expectation taken with respect to the one dimensional distribution of $S_\ell^n$ ignoring other statistics:
i.e., with respect to a SERGM $\frac{K^n_{S_\ell}(s_\ell)\exp\left(\beta \cdot s_\ell \right)}{\sum_{s_\ell'\leq \overline{S}^n_\ell} K^n_{S\ell}(s_\ell')\exp\left(\beta \cdot s_\ell'\right)} $.
This takes expectations with respect to the unconstrained range of $S^n_\ell$ rather than cross restrictions imposed under $A^n$.}

The ``non-conflicted'' condition simply asks that at least in some small neighborhood of the unconstrained expected values of the statistics - as if they were each counted completely on their own, they are non-conflicted so that they are jointly feasible.  Essentially, a local neighborhood of the expected statistics contains a product space.
This condition is quite easy to satisfy.\footnote{There are other things embodied here, as  certain counts of statistics might not be feasible: e.g., it is not possible to have a network with only one triangle missing. Once one triangle is removed it also removes many others.  Thus, the range of some statistics is not a connected (containing all adjacent entries) subset of the integers. Still, for lower values of triangles, this is not an issue.   In the relatively sparse ranges of networks that are often of empirical interest, this condition is easily satisfied.}

For models where counts are non-conflicted, with high probability the realized statistics lie in a product subspace, which helps us
prove the following consistency result.

\begin{theorem}[Consistency and Asymptotic Normality of Count SERGMs]
\label{prop-consistERGMs}
A sequence of count SERGMs that are non-conflicted is consistent; $ |  \betahat^n - \beta^n | \cvgto 0$.

Moreover, if $\exp \beta^n_\ell / (1+ \exp \beta^n_\ell ) \cdot \bar{S}_\ell^n \rightarrow \infty$ for every $\ell$, the parameter estimates are asymptotically normally distributed: $$D_{n,\ell}^{1/2}\left(\betahat_\ell^n - \beta_\ell^n\right)\rightsquigarrow \mathcal{N}\left(0, \frac{1}{\frac{\exp\beta_{\ell}^{n}}{1+\exp\beta_{\ell}^{n}} \cdot \left( 1- \frac{\exp \beta_{\ell}^n}{1+\exp \beta_{\ell}^n} \right)}\right).$$

Finally, letting $\widehat{p}_\ell=s_\ell / \overline{S}^n_\ell$, an approximation of the MLE estimator can be found directly as
$\betahat_\ell:=\log\left(\phat_\ell /(1- \phat_\ell)\right) = \log\left(s_\ell / (\overline{S}^n_\ell- s_\ell )\right). $

\end{theorem}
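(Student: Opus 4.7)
My plan is to exploit the product-binomial structure that the weights $K^n(s)=\prod_\ell \binom{\bar{S}_\ell^n}{s_\ell}$ impose, and use the non-conflicted condition to argue that the cross-restrictions encoded in $A^n$ are asymptotically irrelevant. If we momentarily replaced $A^n$ by the unconstrained product space $\prod_\ell \{0,1,\ldots,\bar{S}_\ell^n\}$, the SERGM would factor as a product of $k$ independent Binomial$(\bar{S}_\ell^n,p_\ell^n)$ distributions with $p_\ell^n = e^{\beta_\ell^n}/(1+e^{\beta_\ell^n})$, because both the weights and $\exp(\beta\cdot s)$ separate across coordinates. Call this unconstrained law $\widetilde{\Prob}_{\beta^n}$. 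Under $\widetilde{\Prob}_{\beta^n}$, coordinate $\ell$ has mean $\bar{S}_\ell^n p_\ell^n$ and a variance of order $\bar{S}_\ell^n p_\ell^n(1-p_\ell^n)$, so by a Chebyshev/Chernoff argument the vector of statistics lies in the $\varepsilon$-box around $\E_{\beta^n}[S^n]$ with probability tending to one. By the non-conflicted hypothesis, this box is contained in $A^n$, hence the true law $\Prob_{\beta^n}$ (which is $\widetilde{\Prob}_{\beta^n}$ conditioned on $A^n$) has total variation distance to $\widetilde{\Prob}_{\beta^n}$ tending to zero. Consequently any statement that holds w.h.p.\ under the independent-binomial model also holds w.h.p.\ under $\Prob_{\beta^n}$.

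Given this reduction, consistency follows from binomial concentration: $\hat{p}_\ell^n := s_\ell/\bar{S}_\ell^n$ satisfies $\hat{p}_\ell^n - p_\ell^n \cvgto 0$ whenever $\bar{S}_\ell^n \to \infty$. The logit map $p\mapsto \log(p/(1-p))$ is continuous on $(0,1)$, but to get $|\hat{\beta}_\ell^n - \beta_\ell^n|\cvgto 0$ uniformly as $p_\ell^n$ may drift toward the boundary, I will use the relative-error form of Chernoff's inequality to bound $|\hat{p}_\ell^n/p_\ell^n - 1|$ and $|(1-\hat{p}_\ell^n)/(1-p_\ell^n) - 1|$ in probability; then $\hat{\beta}_\ell^n - \beta_\ell^n = \log(\hat{p}_\ell^n/p_\ell^n) - \log((1-\hat{p}_\ell^n)/(1-p_\ell^n))$ and the conclusion follows. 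To identify $\hat{\beta}_\ell^n$ with the closed-form expression in the theorem, I will verify that the MLE first-order condition $s = \E_{\hat{\beta}}[S]$ reduces (under the product-binomial reduction) to $s_\ell = \bar{S}_\ell^n \hat{p}_\ell$, giving exactly $\hat{\beta}_\ell = \log(s_\ell/(\bar{S}_\ell^n - s_\ell))$.

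For asymptotic normality, under the condition $p_\ell^n \bar{S}_\ell^n \to \infty$ (which, together with the symmetric condition on $1-p_\ell^n$ implicit in the statement, rules out degeneracy), the Lindeberg--L\'evy CLT applied to the sum of Bernoulli trials gives
\[
\sqrt{\bar{S}_\ell^n}\,(\hat{p}_\ell^n - p_\ell^n)\;\rightsquigarrow\; \mathcal{N}\!\left(0,\,p_\ell^n(1-p_\ell^n)\right).
\]
The delta method applied to $\phi(p)=\log(p/(1-p))$, whose derivative is $\phi'(p) = 1/(p(1-p))$, then yields
\[
\sqrt{\bar{S}_\ell^n}\,(\hat{\beta}_\ell^n - \beta_\ell^n)\;\rightsquigarrow\; \mathcal{N}\!\left(0,\,\tfrac{1}{p_\ell^n(1-p_\ell^n)}\right),
\]
which is exactly the stated variance after substituting $p_\ell^n = e^{\beta_\ell^n}/(1+e^{\beta_\ell^n})$. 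Passing this from $\widetilde{\Prob}_{\beta^n}$ to $\Prob_{\beta^n}$ is justified by the vanishing total variation distance above.

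The main obstacle, in my view, is the first paragraph: showing that conditioning on $A^n$ is innocuous. The non-conflicted hypothesis is tailored precisely for this, but one has to be careful that the neighborhood in which the product of marginals lives is indeed contained in $A^n$ with probability approaching one, using only the marginal (one-dimensional) expectations $\E_{\beta_\ell^n}[S_\ell^n]$ that appear in the definition rather than the joint expectation under $\Prob_{\beta^n}$; verifying that these one-dimensional means track the joint means up to $o(1)$ relative error is the technical crux that makes the subsequent binomial-CLT arguments go through.
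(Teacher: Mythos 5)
Your proposal is correct and follows essentially the same route as the paper's proof: both reduce the count SERGM to a product of independent binomials with $p_\ell^n = e^{\beta_\ell^n}/(1+e^{\beta_\ell^n})$ on the non-conflicted box (the paper phrases this as uniform convergence of the ratio of the SERGM probabilities to the unconditional binomial probabilities on $B^n$, which is equivalent to your total-variation argument), and then both obtain consistency from Chernoff concentration and asymptotic normality from the Lindeberg--Feller CLT plus the delta method applied to the logit map. The "technical crux" you flag at the end is handled in the paper exactly as you propose, by defining the box and the binomial parameters via the one-dimensional marginal expectations $\E_{\beta_\ell^n}[S_\ell^n]$ and observing that the conditional laws coincide there.
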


The proof of Theorem \ref{prop-consistERGMs} works via showing that the model can be locally approximated by a product of appropriately defined binomial random variables.  In fact those binomial random variables provide a direct estimator for count SERGMs.
Our proof shows that following what would seem to be a naive
technique is valid: one can simply estimate parameters $p_\ell$ as if the subgraphs were generated according to a binomial distribution with a maximum number of possible realizations $\overline{S}^n_\ell$ and $s_\ell$ as its realization.

It is important to emphasize that count SERGMs still allow for strong interdependencies and correlations in link appearances, both within and across statistics.  What our proof takes advantage of is a local approximation of such count SERGM distributions in non-conflicted regions. Theorem \ref{prop-consistERGMs} tells us that non-conflicted count SERGMs form a consistently estimable class whose statistical properties we understand very well.

\subsection{Consistency of SERGM Estimation Beyond Count Statistics}

\

The above results apply to a fairly general class of SERGMs, count SERGMs, for which we can derive explicit asymptotic distributions and simple estimators.  We also provide results about consistency for the more full class of SERGMs in Appendix \ref{addconsistency}.

Briefly, there are two sorts of conditions that we outline as being sufficient for consistency (and as we show, effectively, necessary).  One is an identification condition that
requires that different parameters distinguish themselves with different expected statistics.  It is a minimal
condition (essentially necessary) since if two different parameter values generate very similar expected statistics, then observing the realized statistic will not allow us to distinguish the parameters.
The second condition
requires that the (appropriately normalized) statistics concentrate around their means.
If the statistics are not concentrated, then even though different parameters lead to different expected statistics, observing a statistic would not allow one to back out the parameters.  Various combinations of such conditions (see Appendix \ref{addconsistency}) ensure consistent estimation.

\section{SUGM Estimation
}\label{sparsity}

Next, we discuss the estimation of SUGMs.  The main challenge here is that subnetworks can be incidentally generated:  forming links can lead some triangles to form indirectly.  Thus, to estimate the actual true generation rates, we need to estimate incidental formation.  We take two approaches.  One takes advantage of the fact that many social and economic applications are in the context of sparse networks.
 We show in large and sparse enough networks, incidental generation does not significantly bias estimation, and direct counts provide
 asymptotically accurate estimates of generating probabilities.  The second is to provide an explicit algorithm for estimation networks where incidentals may be nontrivial, which we return to in Section \ref{fsc}.

\subsection{Incidentally Generated Subgraphs}

\

To see the issue of incidental subgraph generation in SUGMs consider the following example.
Suppose that the subgraphs in question are triangles and single links, so that $G^n_1(g)$ is the set of all triangles possible among the $n$ nodes, and
$G^n_2(g)$ is the set of links on $n$ nodes.   The triangle $\{12, 23, 31\}$ could be incidentally generated by the subgraphs $g^1, g^2, g^3$ where
$g^1 =\{12, 24, 41\}$, $g^2=\{23, 25, 53\}$ and $g^3=\{31\}$.
Figure \ref{fig: overlap_multipanel} provides an illustration.

\begin{figure}[h!]
\centering
\subfloat[]{
\label{fig:overlap_1}
\includegraphics[trim = 15mm 15mm 15mm 15mm, clip = true, scale = 0.13]{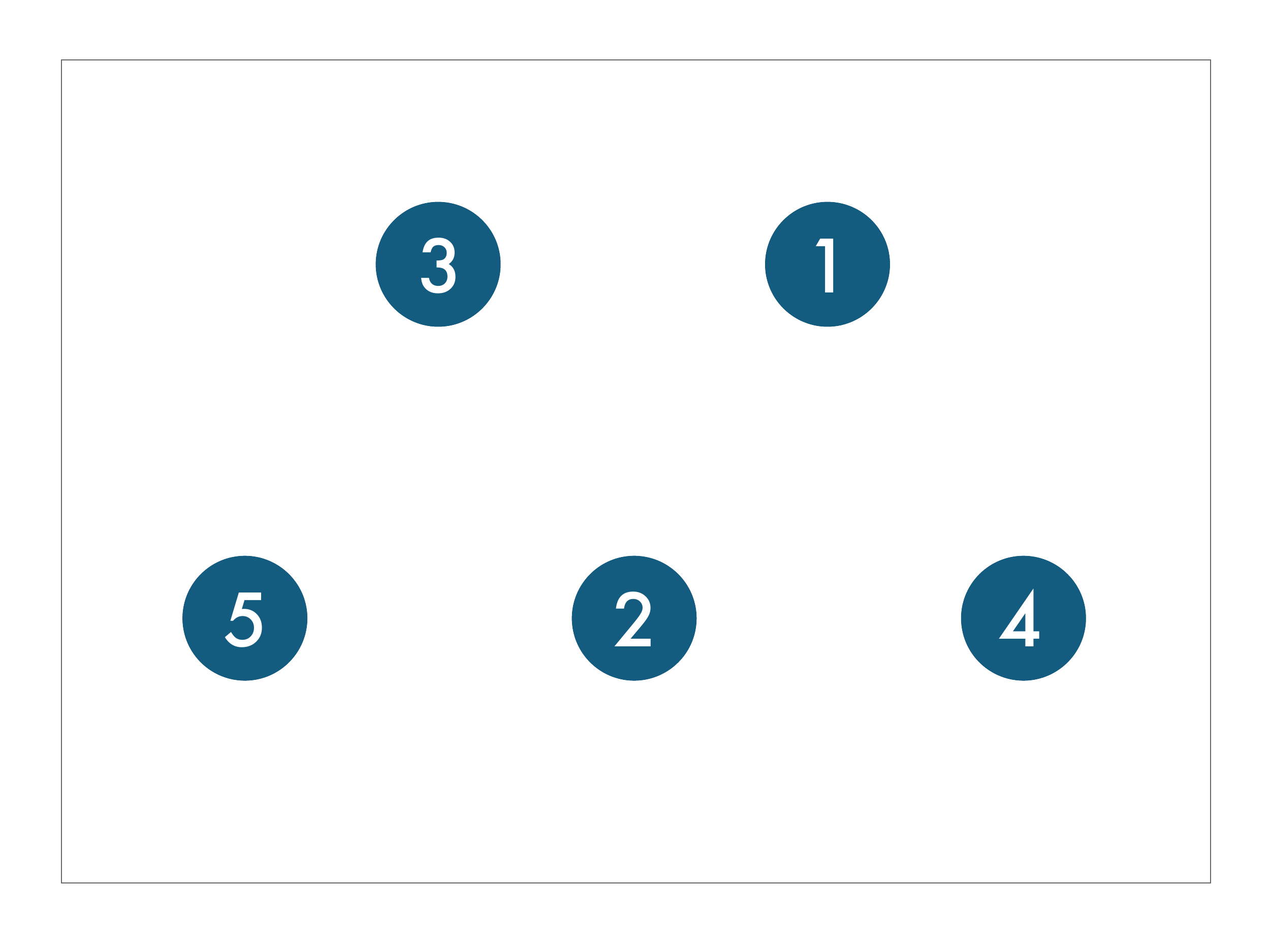}
}
\hspace{10mm}
\subfloat[]{
\label{fig:overlap_2}
\includegraphics[trim = 15mm 15mm 15mm 15mm, clip = true, scale = 0.13]{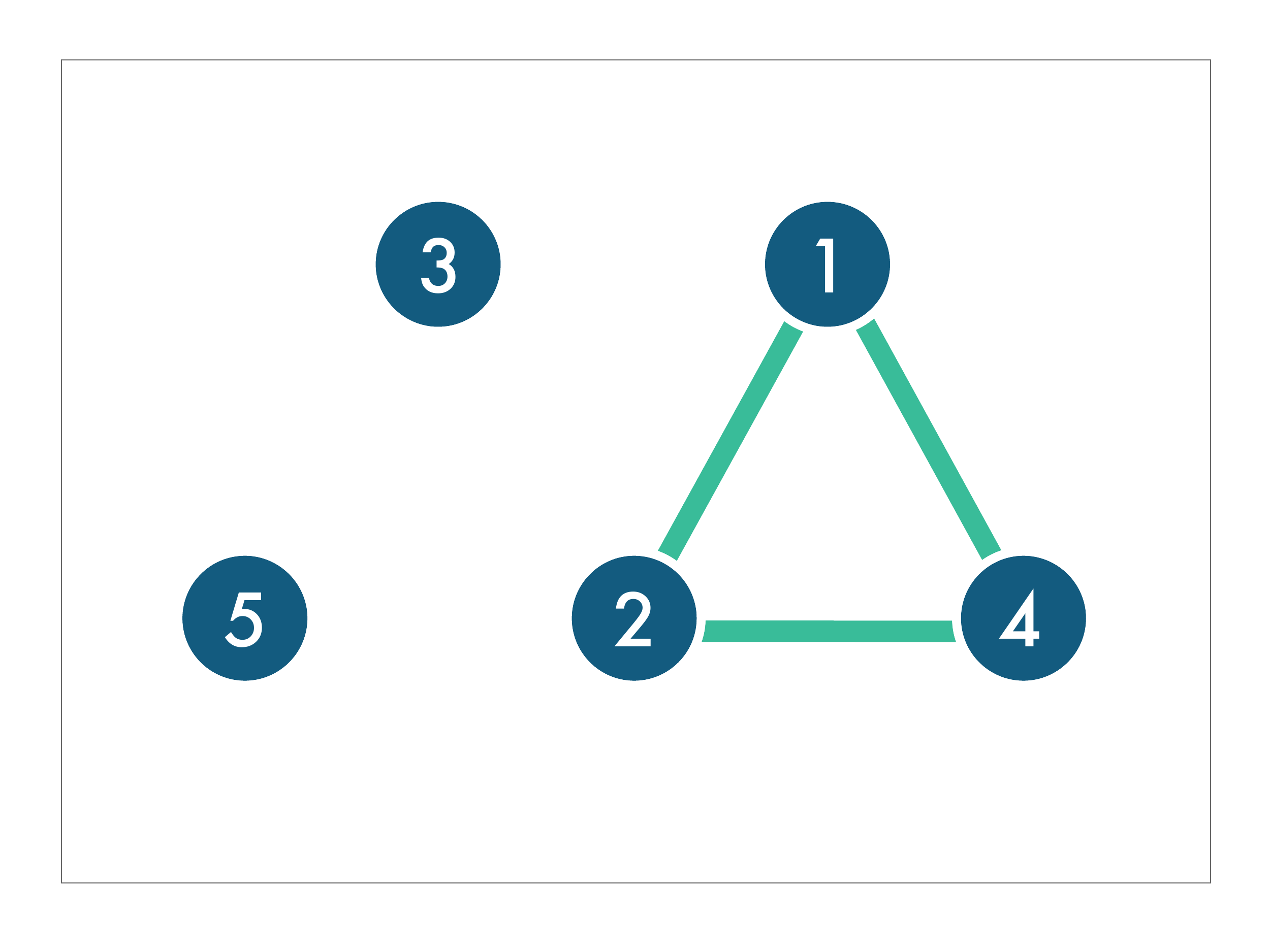}
}

\centering
\subfloat[]{
\label{fig:overlap_3}
\includegraphics[trim = 15mm 15mm 15mm 15mm, clip = true, scale = 0.13]{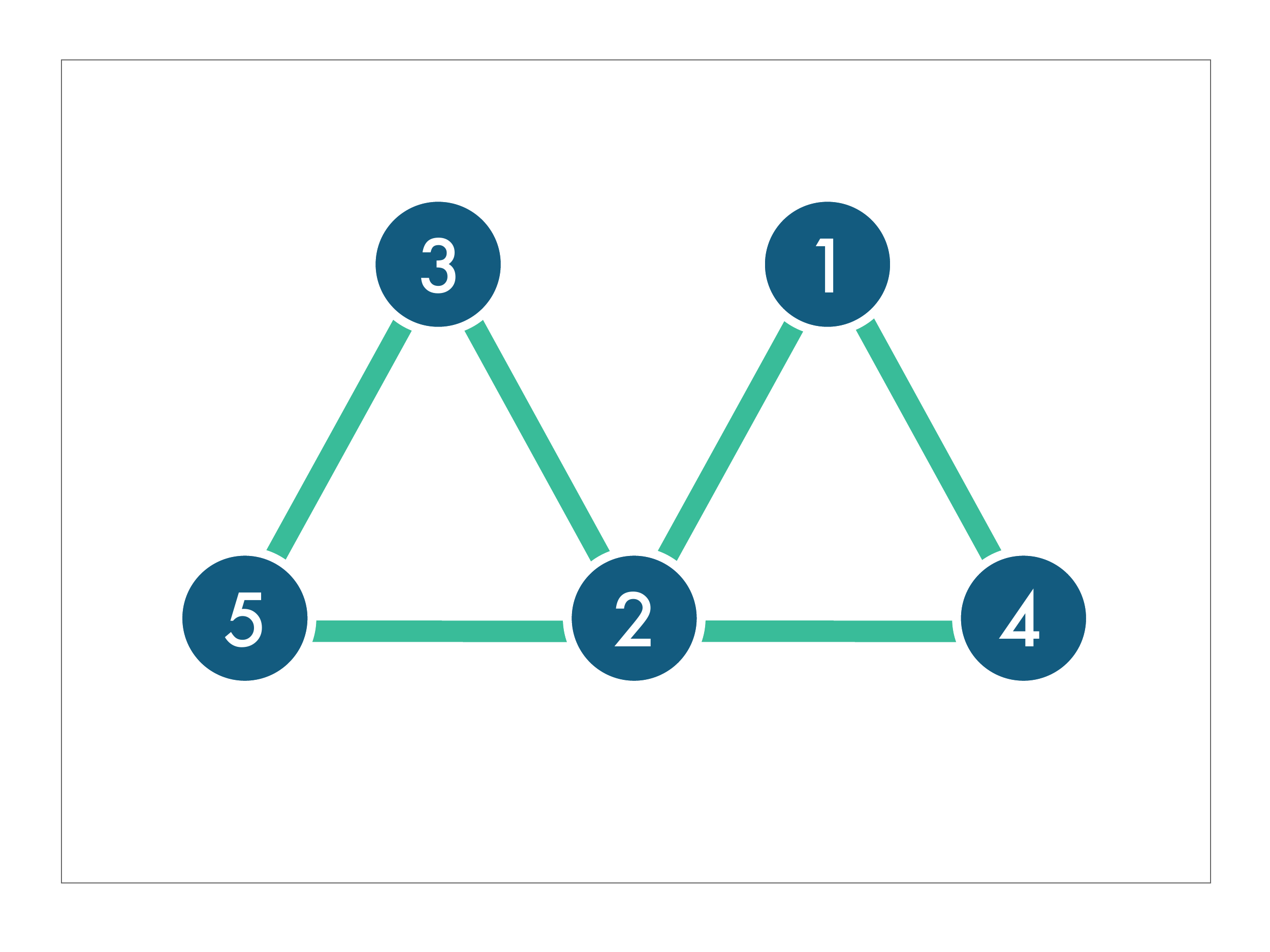}
}
\hspace{10mm}
\subfloat[]{
\label{fig:overlap_4}
\includegraphics[trim = 15mm 15mm 15mm 15mm, clip = true, scale = 0.13]{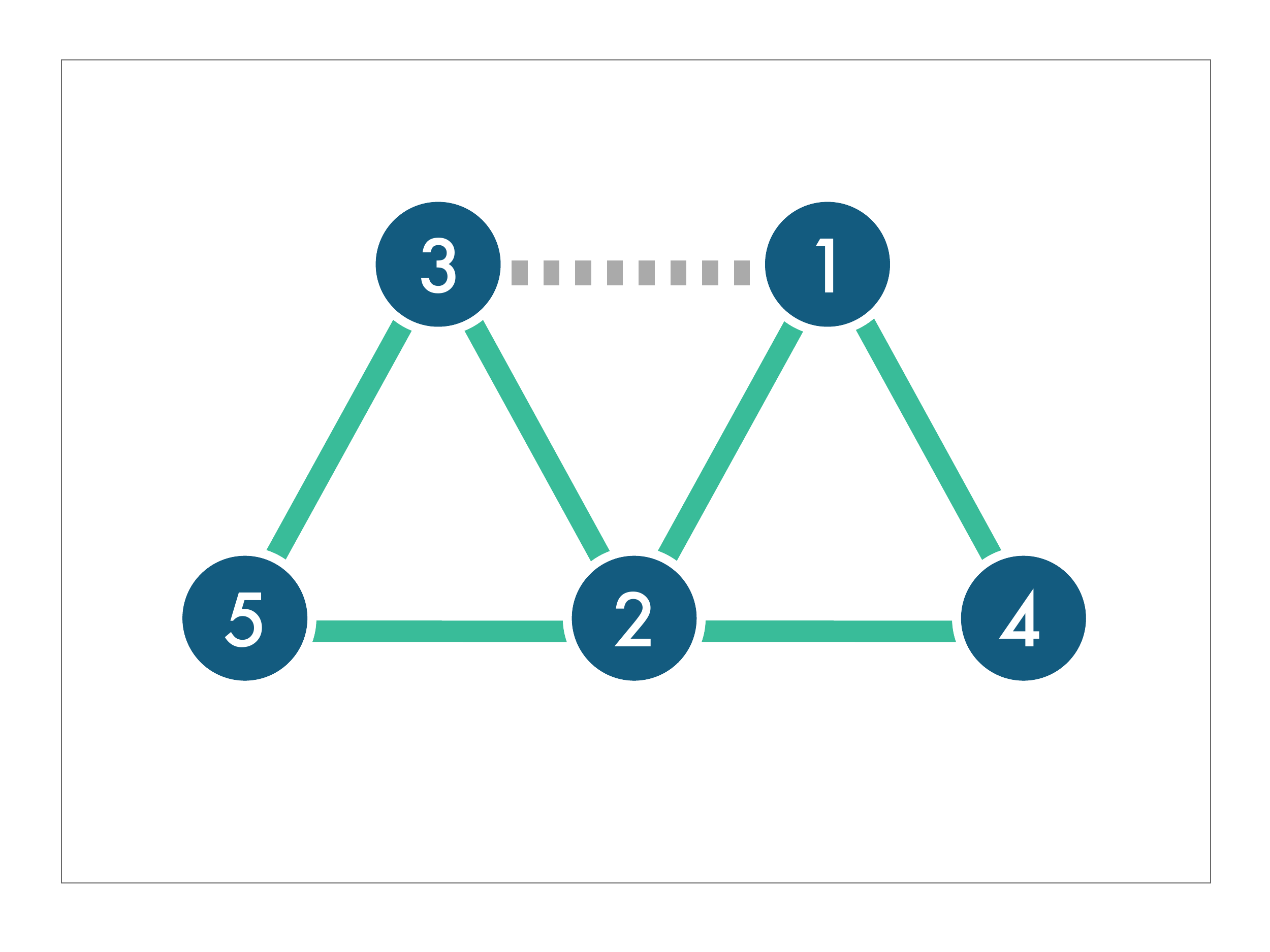}
}
\hspace{10mm}
\subfloat[]{
\label{fig:overlap_5}
\includegraphics[trim = 15mm 15mm 15mm 15mm, clip = true, scale = 0.13]{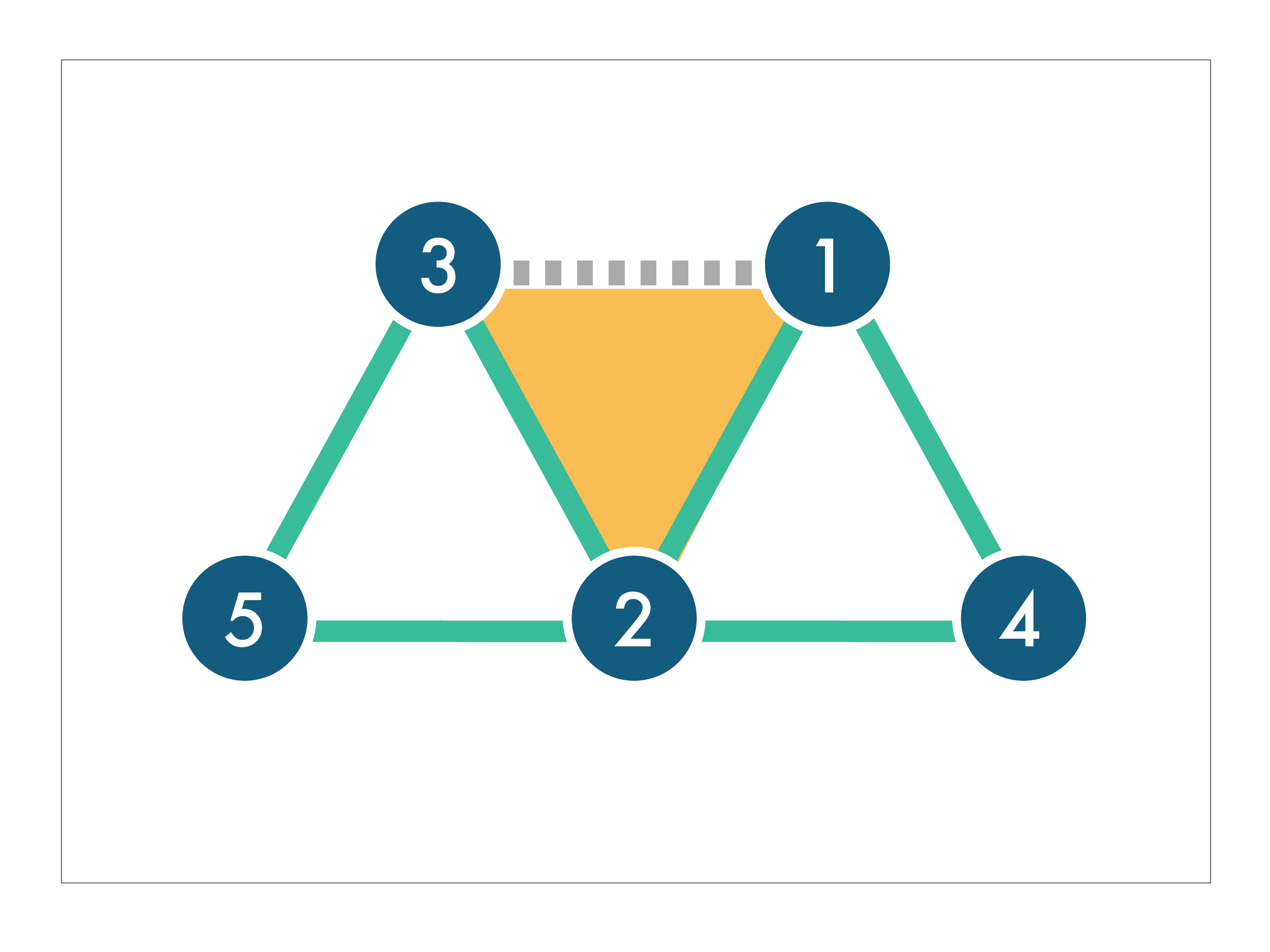}
}
\caption{\label{fig: overlap_multipanel} An incidentally generated triangle. (a) Triangle $124$ forms. (b) Triangle $124$ forms. (c) Triangle $235$ forms. (d) Link $13$ forms. (e) Triangle $123$ is incidentally generated.}
\end{figure}

This presents a challenge for estimating a parameter related to triangle formation since some of the triangles that we observe were
truly generated in the formation process, and others were ``incidentally generated;'' and similarly, it presents a challenge to estimating a parameter for link formation since some truly generated links end up as parts of triangles.

The key to our estimation in this section is that in cases where networks are sparse enough, then the fraction of incidentally generated subgraphs compared to truly generated subgraphs is
negligible.  Many applications satisfy the sparsity conditions and so the estimation techniques are applicable in many cases of interest.

To state results on the estimation of sparse SUGMs, we first need a few definitions.

Consider a sequence of SUGMs indexed by $n$, each with some $k$ sets of subgraphs that are counted, $G^n= (G^n_1, \ldots G^n_k)$, where $k$ is fixed for the sequence.
We say that the vector of sets of subgraphs $G^n= (G^n_1, \ldots G^n_k)$ is {\sl nicely-ordered} if the subnetworks in $G^n_\ell$ cannot be a
subnetwork of the subnetworks in $G^n_{\ell'}$ for $k\geq \ell'>\ell \geq 1$:
$$g_\ell\in G^n_{\ell} \ {\rm and } \ g_{\ell'} \in G^n_{\ell'}\ \ {\rm implies\ that} \ \ g_\ell \not\subset g_{\ell'}.$$
Note that any vector of sets of subgraphs can be nicely ordered: simply order them so that the number of links in the subgraphs
are non-increasing in $\ell$:  so that $\ell'> \ell$ implies that the number of links in
a subnetwork of type $\ell'$ is no more than the number of links in a subnetwork of type $\ell$.  For example, triangles precede links.

We then follow our accounting convention so that statistics count subgraphs in order and those which are not part of any previous subgraph:

\noindent $S_{\ell'}^n(g) = | \{ g_{\ell'} \in G^n_{\ell'} :  g_{\ell'}\subset g \ {\rm and \ }  g_{\ell'}\not\subset g_{\ell} {\rm \ for \ any \ }
g_{\ell} \in G^n_{\ell} {\rm \ such \ that \ } g_{\ell}\subset g {\rm \ for \ some \ } {\ell<\ell'}  \} |$.

We now define incidental generation and sparsity.

Consider a realization of a SUGM in which the truly generated subgraphs are given by $\Gamma \subset \cup_\ell G^n_\ell$,
 and let $g$ denote the realized network $g=\cup_{g'\in \Gamma} g'$.  The researcher observes $g$ and must make some inferences about $\Gamma$.
Fix a specific subgraph $g' \subset g$. We say that $g'$ is \emph{incidentally generated} by a subset of the (truly generated) subgraphs $\{g^j\}_{j\in J}\subset \Gamma$, indexed by $J$,
if:
\begin{itemize}
	\item[(i)] $g'$ was not \emph{truly generated} ($g'\notin \Gamma$),
	\item[(ii)] $g' \subset \cup_{j\in J} g^j$, and
  \item[(iii)] there is no $j'\in J$ such that $g'\subset \cup_{j\in J, j\neq j'} g^j$.
\end{itemize}
Part (ii) states that the subgraph is incidentally generated, and part (iii)  of the condition ensures that the set of generating subgraphs is minimal.

Despite minimality, a subgraph could still be generated in multiple ways.
For example, in Figure \ref{fig:overlap_5}, if the researcher
only observes the resulting network, there are various possibilities to be considered: the triangle $\{12, 23, 31\}$ could have been truly generated,  it could also have been
incidentally generated by the subgraphs $g^1, g^2, g^3$ where
$g^1 =\{12, 24, 41\}$, $g^2=\{23, 25, 53\}$ and $g^3=\{31\}$, it could have been incidentally generated by the subgraphs $g^1, g^2, g^3$ where
$g^1 =\{12\}$, $g^2=\{23\}$ and $g^3=\{31\}$, and still other possibilities.

\subsubsection{Generating Classes}

\

In order to define sparsity, we have to keep track of the various ways in which a subnetwork could have been incidentally generated.

Out of the many ways in which $g_\ell \in G^n_\ell$ could be incidentally generated, some of them are equivalent up to relabelings.
For instance, in a large graph any different combinations of triangles and edges could incidentally generate a triangle $g_\ell=\{12, 23, 31\}$,
however there are only eight ways in which it can be done if we ignore the labelings of the nodes outside of
$g_\ell$:  link 12 could be generated either by a triangle or link, and same for links 23 and 31, leading to $2^3=8$ ways in which this
could happen.

Consider  $g_\ell \in G^n_\ell$  that is incidentally generated by a set of subnetworks $\{g^j\}_{j\in J}$ with associated indices $\ell_j$ and also by another set
$\{g^{j'}\}_{j'\in J'}$.   We say $\{g^j\}_{j\in J}$ and $\{g^{j'}\}_{j'\in J'}$ are equivalent generators of $g_\ell$ if for each $g^j$ there is
$g^{j'}$ such that $\ell_j=\ell_{j'}$ and $g_j\cap g_\ell = g_{j'} \cap g_\ell$.
So generating sets play the same roles in $g_\ell$ but might involve different nodes outside of $N(g_\ell)$.
Equivalent sets of generators must have the same cardinality as they must both be minimal and involve the same intersections with $g_\ell$.

Given this equivalence relation, there are equivalence classes of generating sets of networks for any $g_\ell$.
There are at most $\left( \sum_{\ell'=1}^k m_{\ell'}\right)^{m_\ell}$ equivalence classes of (minimal) generating sets for any subnetwork $g_\ell$.\footnote{For each link in $g_\ell$ there
are at most $ \sum_{\ell'=1}^k m_{\ell'}$  links that could generate that link out of various subgraphs, and then the power is just the product of this across links in $g_\ell$, producing an upper bound.}
For each equivalence class $J$ of generating sets of some $\ell$, we have some list $(\ell_j, h_j)_{j\in J}$ of the types of subnetworks and the number of nodes that
the each subnetwork has intersecting with $g_\ell$.  We call these the {\sl (minimal) generating classes} of a subgraph $g_\ell$ and note that these are
the same for all members of $G^n_\ell$, and so we refer to them as the generating classes of $\ell$.

So, for a links and triangles example, where $G^n=(G_T,G_L)$ are triangles and links respectively, there are four generating classes of a triangle:  a triangle could be incidentally generated by three other triangles, two triangles and one link, two links and one triangle, or three links.\footnote{Here, our upper bound $\left( \sum_{\ell'=1}^k m_{\ell'}\right)^{m_\ell}$ is $4^3$, which is conservative.}
Here, then we would represent a generating class of two triangles and a link as $(T,2;T,2;L,2)$.

\subsubsection{Relative Sparsity}

\

Consider a set of nicely ordered subgraphs $G^n=(G_1^n, \ldots, G_k^n)$ and any $\ell\in \{1,\ldots, k\}$ and any
generating class of some $\ell$, denoted $J=(\ell_1,h_1;\ldots; \ell_{|J|,h_{|J|}}$.
Let\footnote{Note that $M_J\geq 1$ since $|J|\geq 2$ and each set of $h_j$ nodes intersects with at least one other set of $h_{j'}$ nodes for some $j'\neq j$. Recall that under the nice ordering, smaller subgraphs cannot be generated as a subset of some single larger one .}
$$M_J=(\sum_{j\in J} h_j) -m_\ell.$$
For example, in forming a triangle from any combination of triangles and links, each $h_j=2$ and so $M_J=6-3=3$.

We say that a sequence of models as defined in Section \ref{newmodel} with associated nicely-ordered subgraphs $G^n=(G_1^n, \ldots, G_k^n)$
and parameters $p^n=(p_1^n, \ldots, p_k^n)$  is {\sl relatively sparse} if for each
$\ell$ and associated generating class $J$ with associated $(\ell_j, h_j)_{j\in J}$:
$$\frac{\prod_{j\in J} \E_{p^n} (S^n_{\ell_j}(g))}{ n^{M_J} \E_{p^n} (S^n_\ell(g) )}\rightarrow 0.$$
This is a condition that limits the relative frequency with which subgraphs will be incidentally generated (the numerator) to directly generated (the denominator).

To make this concrete, consider our example with triangles and links.
A triangle can be generated by other combinations of links and triangles.  The expected number of triangles that nature generates directly is
$\E_{p_T}[ S_T^n(g)]= p_T \binom{n}{3}$ and the number of links not in triangles is (approximately) $\E_{p_L}[ S_L^n(g)]= p_L \left(\binom{n}{2} - O\left( p_T \binom{n}{3} \right) \right)$.
Thus it must be that for each generating class,
$$\frac{\prod_{j} \E_{p^n} (S^n_{\ell_j}(g))}{ p_T n^6}\rightarrow 0.$$
For the generating class of all triangles, this implies that
$p_T^2 n^3 \rightarrow 0$, so $p_T=o(n^{-3/2})$.
For the generating class of all links, this implies that\footnote{Given that $p_T=o(n^{-3/2})$, it follows that $\E_{p_L}[ S_L^n]= p_L \left(\binom{n}{2} - O\left( p_T \binom{n}{3} \right) \right)$ is proportional to
 $p_L n^2$.}
$p_L^3 / p_T \rightarrow 0$, which is the obvious condition that triangles formed by independent links are rare compared to triangles formed directly.
This implies that (but is not necessarily implied by) $p_L=o(n^{-1/2})$.
The conditions on the remaining generating classes (some links and some triangles) are implied by these ones.

For example, letting $p_T= a(n)/n^2$ and $p_L = b /n$, where $a(n)=o(n^{1/2})$ satisfies the sparsity conditions.\footnote{This leads to an expected degree of $b+a(n)/3$ and an average clustering of roughly $\frac{a(n)}{6(b+a(n)/3)(b+a(n)/3+1)}$.
This can be consistent with various clustering rates, and admits rates of links and triangles found various observed networks. To match very high clustering rates the model can be altered to include cliques of larger sizes.}

\subsection{Estimation of Sparse Models}

\

Let
$\widetilde{S}^n$ denote the vector of the numbers of subnetworks of various types that are truly generated; this is not observed by the researcher since the resulting $g$ may include incidental  generation.  Let ${S}^n(g)$ the observed counts including the incidentally generated subnetworks. In Figure \ref{fig:multi_panel}, $\widetilde{S}^n_T=9$ but ${S}_T^n(g)=10$ and from observing $g$ there is no way to know exactly what the true $\widetilde{S}^n_T$ is, we just have an upper bound on it.  Meanwhile, $\widetilde{S}_U^n = 23$, but as one truly generated link becomes part of an incidentally generated triangle, it follows that ${S}_U^n = 22$.

\begin{figure}[h]
\centering
\subfloat[$n$ nodes]{
\label{fig:image_1}
\includegraphics[width=0.33\textwidth]{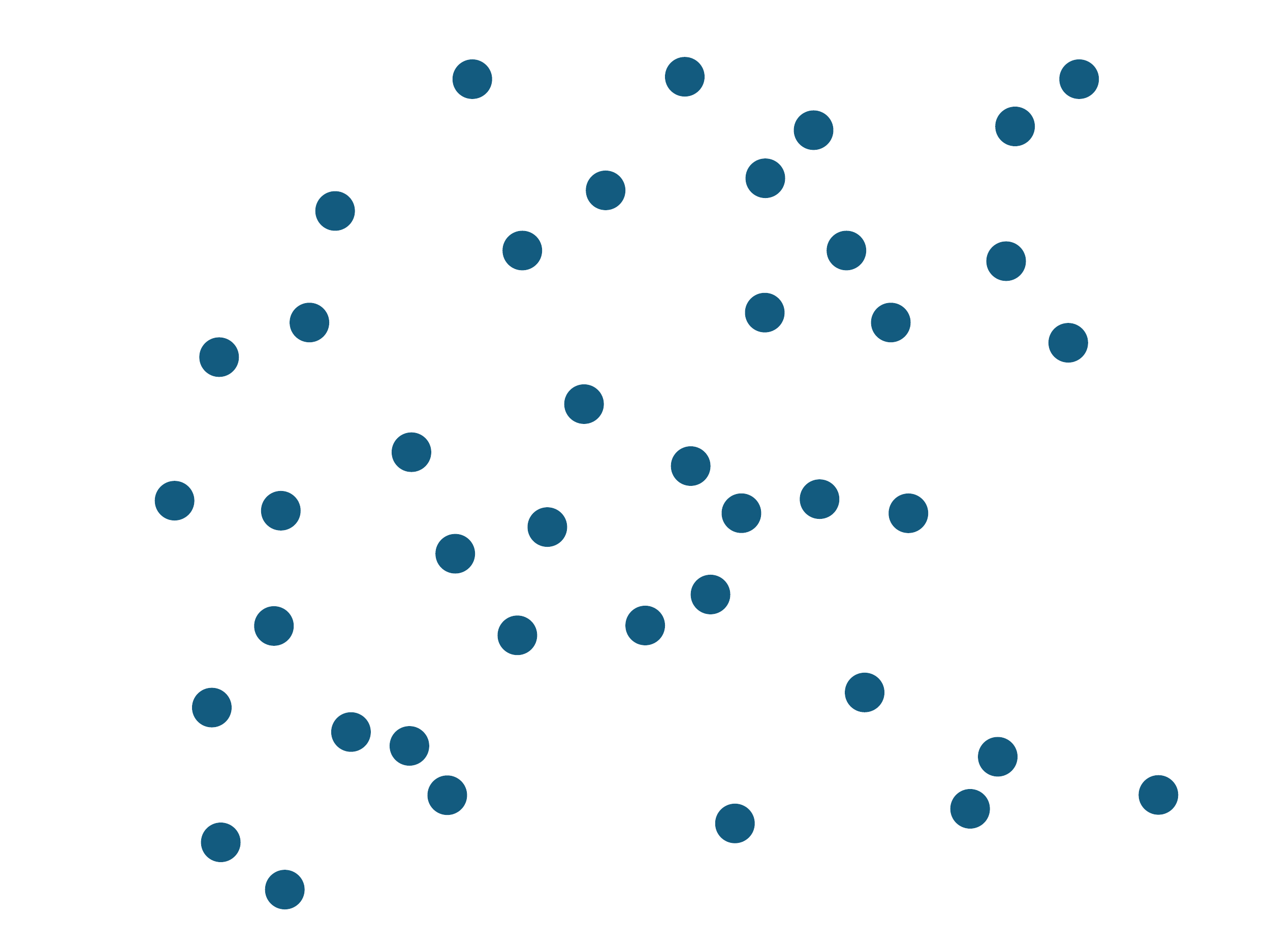}
}
\subfloat[Triangles form]{
\label{fig:image_2}
\includegraphics[width=0.33\textwidth]{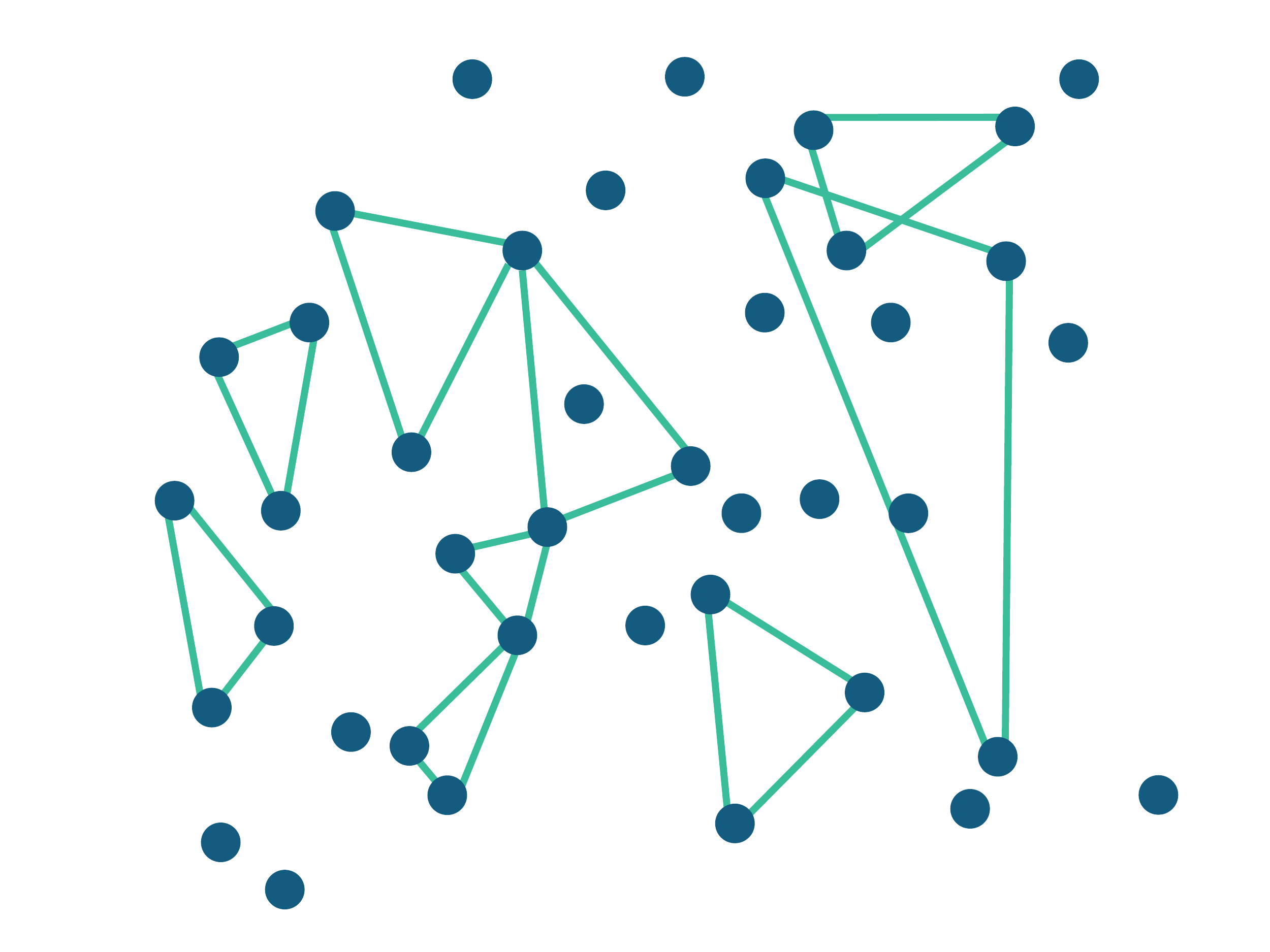}
}

\centering
\subfloat[Links form]{
\label{fig:image_1}
\includegraphics[width=0.33\textwidth]{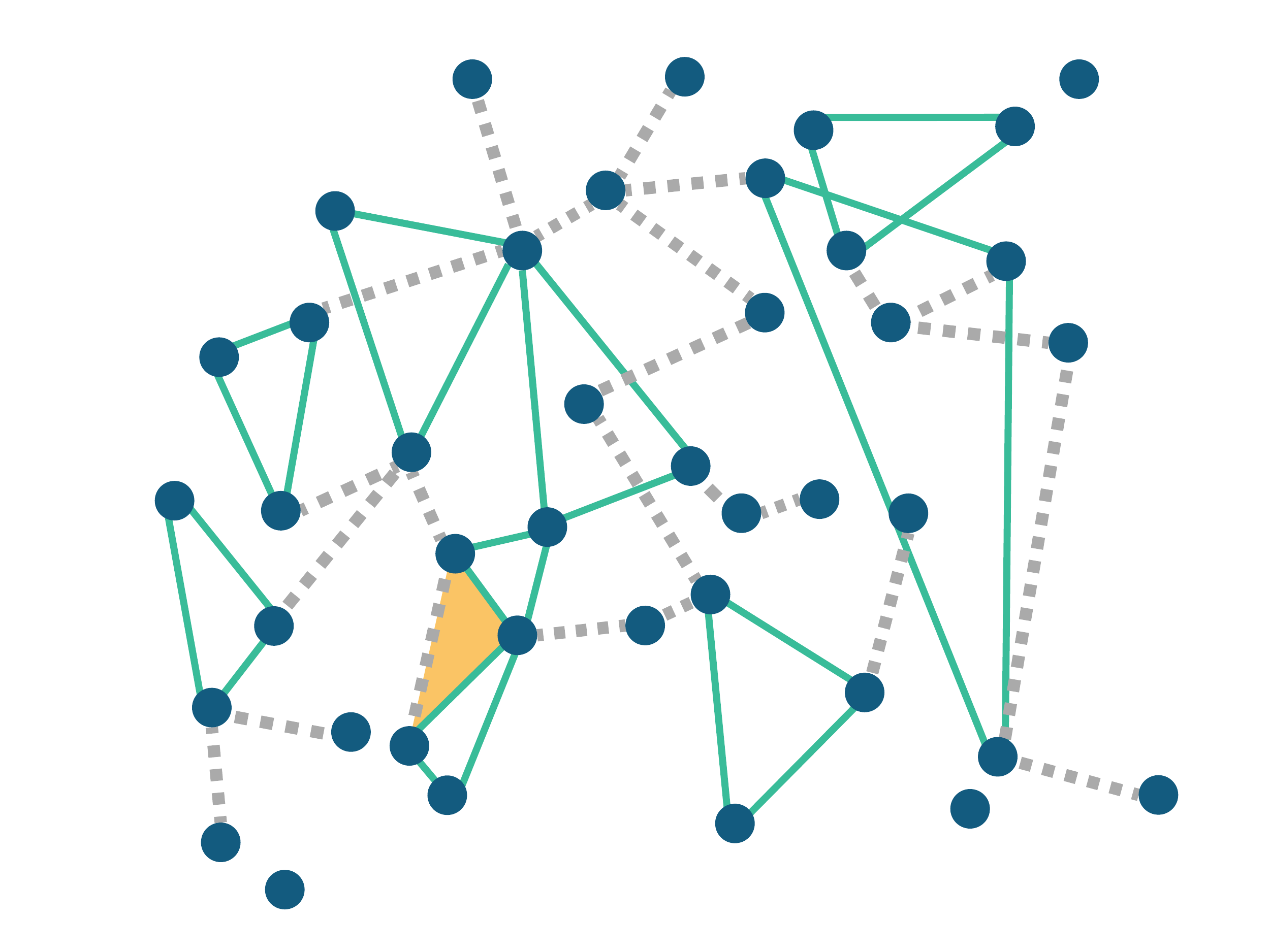}
}
\subfloat[Resulting network]{
\label{fig:image_1}
\includegraphics[width=0.33\textwidth]{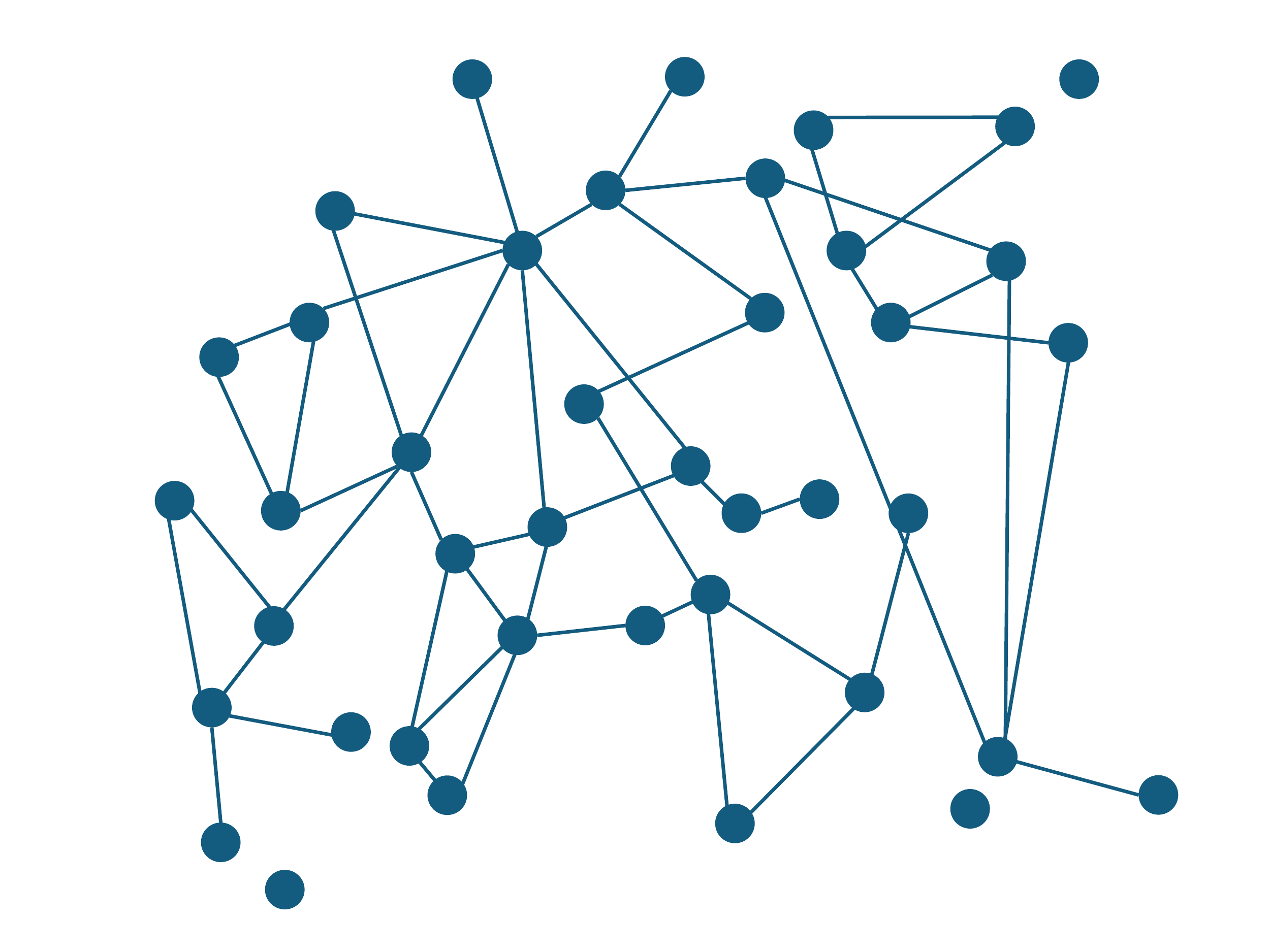}
}
\caption{\label{fig:multi_panel} The network that is formed  and eventually observed is shown in panel D.
The process can be thought of as first forming triangles independently with probability $p_{T}^n$ as in (B), and then forming links  independently with probability $p_{L}^n$ on the remaining part of the graph as in (C).  We see that there is one incidence of an extra triangle  generated by this process.  In this network we would count ${S}_T^n(g)=10$ and ${S}_U^n(g)=22$ from (D), while the true process generated $\widetilde{S}_T^n(g)=9$ and $\widetilde{S}_U^n(g)=23$.}
\end{figure}

Nonetheless, as we prove, under the sparsity condition we can accurately estimate the true statistics and thus the true parameters.

To state our next result, we need the following notation.
Let $\overline{S}^n_\ell(g)$ be the maximum count of $S^n_\ell$ that is possible on network $g$.  If we are counting triangles and links not in triangles, then $\overline{S}^n_T(g) = \binom{n}{3}$ and $\overline{S}^n_U(g) = \binom{n}{2} - L_T(g)$ where $L_T(g)$ is the number of links that are part of triangles in $g$.\footnote{In sparse networks, $L_T(g)$ would be vanishing relative to $\binom{n}{2}$ and so could be ignored.  Typically, in sparse networks, $\overline{S}^n_L(g)$ will be well approximated by $y_\ell\binom{n}{m_\ell} $, where $y_\ell$ is the number of possible different subgraphs of type $\ell$ that can be placed on $m_\ell$ nodes (e.g., $y_\ell$ is 1 for a triangle, $m$ for a star on $m$ nodes, etc.).}
Let
\begin{equation}\label{phat}
\widehat{p}_{\ell}^{n}(g) = \frac{{S}^n_\ell(g)}{\overline{S}^n_\ell(g)}.
\end{equation}
So, $\widehat{p}_{\ell}^{n}(g)$ is the fraction of possible subgraphs counted by $S_\ell^n$ that are observed in $g$ out of all of those that could possible exist in $g$.
This is a direct estimate of the parameter $p_\ell^n$, as if these subgraphs were each independently generated and not incidentally generated.

In order to have $\widehat{p}_{\ell}^{n}(g)$ be an accurate estimator of ${p}_{\ell}^{n}(g)$ in the limit two things must be true.  First, the network must be relatively sparse, which limits the number of incidentally generated subgraphs.  And, second, it must be that the potential number of observations of a particular kind of subgraph grows as $n$ grows.  This would happen automatically in a sparse network setting if we were simply counting triangles and links not in triangles.
However, if nodes have different characteristics (say some demographics), and we are counting triangles and links by node types, then it will also have to be that the number of nodes that have each demographic grows as $n$ grows.   If there are never more than 20 nodes with some demographic, then we will never have an accurate estimate of link formation among those nodes.

We say a SUGM is {\sl growing} if the probability that $\widetilde{S}^n_\ell(g)\rightarrow \infty$ for each $\ell$ goes to 1.

\begin{theorem}[Consistency and Asymptotic Normality]
\label{sparse}
Consider a sequence of growing and relatively sparse SUGMs with associated nicely-ordered subgraph statistics $S^n=(S_1^n, \ldots, S_k^n)$
and parameters $p^n=(p_1^n, \ldots, p_k^n)$.  The estimator \eqref{phat} is ratio consistent: 
$ \frac{\widehat{p}_\ell^n(g)}{p_\ell^n} \cvgto 1 $ for each $\ell$.
Moreover, for $D_{n}={\rm Diag}\left\{ \widehat{p}_{\ell}^n(g) n^{m_{\ell}}\right\} _{\ell=1}^{k}$
\[
D_{n}^{1/2}\left(\left(\widehat{p}_{1}^n,...,\widehat{p}_{k}^n\right)'-\left(p_{1}^n,...,p_{k}^n\right)'\right)\rightsquigarrow\mathcal{N}\left(0,I\right).
\]
\end{theorem}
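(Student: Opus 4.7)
The plan is to reduce the result to a collection of independent binomial variables by controlling how far the observed counts $S^n_\ell(g)$ can drift from the truly generated counts $\widetilde{S}^n_\ell$. Write
\[
S^n_\ell(g) = \widetilde{S}^n_\ell + I^n_\ell - R^n_\ell,
\]
where $I^n_\ell$ counts type-$\ell$ subgraphs that appear in $g$ only through the incidental combination of truly generated subgraphs, and $R^n_\ell$ counts truly generated type-$\ell$ subgraphs that get absorbed into some earlier, larger subgraph under the nice-ordering accounting convention. The program is to show that both $I^n_\ell$ and $R^n_\ell$ are negligible relative to $\widetilde{S}^n_\ell$, so that the estimator is essentially a binomial average, and then apply the classical binomial limit theory.

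For the first step I would fix $\ell$ and a minimal generating class $J=(\ell_j,h_j)_{j\in J}$ and compute the mean of the corresponding contribution $I^{J,n}_\ell$ by placing the $m_\ell$ nodes of the generated subgraph, then the $m_{\ell_j}-h_j$ free outside nodes of each generator, which yields
\[
\E[I^{J,n}_\ell] \;\leq\; C\, n^{m_\ell}\prod_{j\in J} n^{m_{\ell_j}-h_j}\,p^n_{\ell_j}
\;\asymp\; \frac{\prod_{j\in J}\E[\widetilde{S}^n_{\ell_j}]}{n^{M_J}}.
\]
Dividing by $\E[\widetilde{S}^n_\ell]\asymp n^{m_\ell}p^n_\ell$ reproduces exactly the expression appearing in the relative-sparsity hypothesis, which forces the ratio to vanish; Markov's inequality and summing over the finite list of generating classes then give $I^n_\ell/\E[\widetilde{S}^n_\ell]\cvgto 0$. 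The absorption term $R^n_\ell$ is dominated by the same incidental-subgraph counts (a type-$\ell$ subgraph is only absorbed when its nodes carry an incidental or additional larger subgraph), so the same bound applies. Since under the redundant-formation convention each candidate type-$\ell$ subgraph is independently realized with probability $p^n_\ell$, one has $\widetilde{S}^n_\ell\sim\mathrm{Bin}(|G^n_\ell|,p^n_\ell)$; the growing hypothesis $\E[\widetilde{S}^n_\ell]\to\infty$ then implies $\widetilde{S}^n_\ell/\E[\widetilde{S}^n_\ell]\cvgto 1$ by Chebyshev, and using $\overline{S}^n_\ell/|G^n_\ell|\to 1$ (again from sparsity) yields ratio consistency $\widehat{p}^n_\ell/p^n_\ell\cvgto 1$.

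For the asymptotic normality, I would apply the usual binomial CLT to each $\widetilde{S}^n_\ell$: sparsity gives $p^n_\ell\to 0$ and growth gives $|G^n_\ell|p^n_\ell\to\infty$, so the Lindeberg condition holds and $(\widetilde{S}^n_\ell-|G^n_\ell|p^n_\ell)/\sqrt{|G^n_\ell|p^n_\ell(1-p^n_\ell)}\rightsquigarrow \mathcal{N}(0,1)$. Joint independence across $\ell$ is built into the redundant-formation convention, so the limiting covariance is diagonal; dividing by $\overline{S}^n_\ell$ and applying Slutsky with the ratio consistency established above produces the stated normalization by $D_n^{1/2}$.

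The main obstacle, in my view, is sharpening the incidental bound from Step 1 enough to transfer the CLT: the argument sketched above only gives $I^n_\ell=o_p(\E[\widetilde{S}^n_\ell])$, whereas the CLT requires the stronger $I^n_\ell-\E[I^n_\ell]=o_p(\sqrt{\E[\widetilde{S}^n_\ell]})$ together with $\E[I^n_\ell]=o(\sqrt{\E[\widetilde{S}^n_\ell]})$. I would handle this by computing a second-moment bound on $I^{J,n}_\ell$, noting that two overlapping incidental configurations share enough nodes that the variance is of the same order as the mean; then the sparsity condition, applied to each generating class, pushes the mean of $I^n_\ell$ not just below $\E[\widetilde{S}^n_\ell]$ but below its square root in the regimes of interest (for instance, in the links-and-triangles case with $p_T=a(n)/n^2$ and $p_L=b/n$, $\E[I_T^n]\asymp b^3=o(\sqrt{a(n)n})$ whenever $a(n)\to\infty$). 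Verifying that the stated relative-sparsity hypothesis is by itself enough to drive this stronger bound for general generating classes, or identifying a mild strengthening needed to do so, is the delicate piece of the argument.
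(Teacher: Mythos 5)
Your proposal follows essentially the same route as the paper: bound the probability of incidental generation of a type-$\ell$ subgraph by each minimal generating class via the node-placement count $n^{m_{\ell_j}-h_j}\widetilde{p}^n_{\ell_j}$, observe that the relative-sparsity ratio is exactly what makes this vanish relative to $\widetilde{p}^n_\ell$, conclude ratio consistency, and then invoke the Lindeberg--Feller CLT for triangular binomial arrays under the growing condition. The ``delicate piece'' you flag at the end is a genuine observation: the paper's proof passes from ratio consistency to the claim that $D_n^{1/2}(\widehat p^n - p^n)$ has the same limit as the normalized truly-generated counts without verifying that the incidental corrections are $o_p$ of the \emph{standard deviation} $\sqrt{\E[\widetilde S^n_\ell]}$ rather than merely $o_p$ of the mean, so your proposed second-moment refinement is addressing a step the paper itself leaves implicit rather than a defect in your own argument.
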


Theorem \ref{sparse} states that growing and relatively sparse SUGMs are consistently estimable via easy estimation techniques: ones that are direct and trivially computable.

The proof of the theorem involves showing that under the growing and sparsity conditions the fraction of incidentally generated subnetworks vanishes for each $\ell$, and so the observed counts of subnetworks converge to the true ones.
Given that these are essentially binomial counts, then, as the second part of the theorem states, a variation on a central limit theorem applies and then normalized errors in parameter estimation are normally distributed, and we know the rates at which the parameters converge to their limits.
For inference and tests of significance for single parameter values we note that analytic estimates of the variances are directly computable from the analytic expression of the diagonal of the variance matrix.  Of course, more complex inferential procedures and tests can be executed through a standard parametric bootstrap as the model is easily simulated.

To make the convergence rates concrete, consider the example with links and triangles and let $p_T = a/n^2$ and $p_L = b/n$.  These are well within the bounds that would be needed to satisfy sparsity, but provide an example of a realistic level of sparsity that satisfies our conditions for asymptotic normality.  Then one can check the incidental generations for triangles is $o_p(n^{1/2})$, which means that the {\sl fraction} of incidentally generated triangles is $o_p(n^{-1/2})$.  Here, the normalization $D$ means that the errors on link estimation will be of order $n^{-1/2}$ and on triangle estimation of order $n^{-3/2}$, and so parameter estimates converge very quickly.

Again, we emphasize that although the estimator here is based on binomial approximations, a SUGM still
incorporates interdependencies directly through the subgraphs that are
generated.  The results make use of the fact that in sparse settings, the picture of interdependencies is clear and are measured by the statistics one-by-one.

\subsection{An Algorithm for Estimating SUGMs without Asymptotic Sparsity}\label{fsc}

\

We provide an algorithm for estimating SUGMs for cases where sparsity may not be satisfied,
or for small graphs where finite-sample corrections could be useful.

The idea behind the algorithm is that we create a network by randomly building up subgraphs in a way that ends up matching the observed network, and we keep track of how many {\sl truly} generated subgraphs of each type were needed to get to a network that matched the observed statistics.
In order to estimate the truly generated subnetworks of each type, $\widetilde{S}_{\ell}(g)$,  we carefully construct a simulated network
$g_{sim}$ and keep track of both its truly generated subgraphs $\widetilde{S}_{\ell}(g_{sim})$ and its observed subgraphs ${S}_{\ell}(g_{sim})$.
We construct $g_{sim}$ to have ${S}_{\ell}(g_{sim})$ match ${S}_{\ell}(g)$ as closely as possible, and then use its true subgraphs $\widetilde{S}_{\ell}(g_{sim})$ to infer the true subgraphs of $g$, $\widetilde{S}_{\ell}(g)$.

Consider a SUGM with nicely-ordered subgraphs indexed by $\ell \in \{1, \ldots , k\}$.
We describe the algorithm for the case where 
subgraphs of type $k$ (the smallest subgraph - links in most models) cannot be incidentally generated by other subgraphs.\footnote{
If the smallest subgraphs can be generated incidentally (for instance if a model only included triangles and cliques of size 4), then begin the algorithm at step t and treat subgraphs of type $k$ symmetrically with all other subgraphs (so drop the first part of step t).}

\

\noindent {\bf {\sc Algorithm}}

\begin{itemize}

	\item[0.] Start with an empty graph $g_{sim}^0$. Set 
	 $S_{\ell}(g_{sim}^0)=0$ and $\Stil_{\ell}(g_{sim}^0) = 0$ for all $\ell$.

	\item[1.] Place $S_{k}(g)$ subgraphs uniformly at random (these will be links in most models).  Call the new network $g_{sim}^1$.  This may generate
some incidental subgraphs.   Update counts of each $S_{\ell}(g_{sim}^1)$ and $\Stil_{\ell}(g_{sim}^1)$ (with the latter only having truly generated links so far).

	\item[$t$.]
		\begin{itemize}
\item If $S_{k}(g_{sim}^{t-1}) < S_{k}(g) $, then place  $S_{k}(g) - S_{k}(g_{sim}^{t-1})$ subgraphs down uniformly at random.  Call the new network $g_{sim}^{t}$ and proceed to step $t+1$.
			\item Otherwise, pick subgraph of type $\ell$ with the minimal ratio $\Stil_{\ell}(g_{sim})/S_{\ell}(g)$.  Add one subgraph of type $\ell$ uniformly uniformly at random.\footnote{Add it uniformly at random out of candidate subgraphs that
are not already a subgraph of some existing subgraph of $g_{sim}^{t-1}$.  For instance, if adding a triangle, only consider triangles that are not already a subset of some clique of size 3 or more of the generated network through this step.}  Call the new network $g_{sim}^{t}$ and proceed to step $t+1$.
\item If $\Stil_{\ell}(g_{sim})\geq S_{\ell}(g)$ for all $\ell$, stop.
		\end{itemize}
\end{itemize}

The estimates are $\phat_{\ell} = \Stil_{\ell}(g_{sim})/\overline{S}_{\ell}(g).$

\

To see the intuition behind the algorithm consider a case with just links and triangles.  The algorithm takes advantage of the fact that links can generate triangles, but not the other way around.    First the algorithm generates unsupported links up to the number observed in $g$.
This might lead to some triangles, and lowering the number of observed links.  The algorithm then tops up the links and keeps doing so until the correct observed number of links
are present.  If there are fewer triangles than in $g$, it begins adding triangles one at a time (as they might incidentally generate more).  At each step, if the number of links drops below
what are in $g$, then new links are added.  It continues until the correct number of links and triangles are obtained.  It can never overshoot on links, and may slightly overshoot on triangles,
only by the incidentals generated in the last steps.

There are many variations one could consider on the algorithm.\footnote{More generally a Method of Simulated Moments (MSM) approach could also  be taken.
 For that, one simply searches on a grid of parameters, in each case simulating the SUGM and then picking $\phat$ for $$\phat := \argmin_{p} \left( S(g) - \E_{p}\left[S(g^{\rm Sim})\right]\right)'C\left( S(g) - \E_{p}\left[S(g^{\rm Sim})\right]\right).$$}
For example, if one is conditioning on various covariates, then there might be more than one type of link, and since all types of links cannot be incidentally generated one can ``top up'' several types of subgraphs and not just $k$.  Thus, in step 1 instead of using just $k$ above,
one might also use $k-1$, etc., for however many types of links there are, and similarly for the first part of step $t$.\footnote{
To fix ideas, consider a SUGM in which there are two types of triangles and two types of links that are generated, accounting for covariates (as we will use
in Section \ref{sims}).   For instance, links between pairs of nodes that are `close' in terms of the characteristics and pairs of nodes that are `far', and triangles involving
nodes that are all `close' and triangles that involve some nodes that are `far' from each other.  The statistics that we count for
a network $g$ are: $S_{T,C}(g)$, $S_{T,F}(g)$,$S_{U,C}(g)$,$S_{U,F}(g)$, where $U$ is for unsupported links and $T$ for triangle, and $C$ is for `close' and $F$ is for `far'.}

\subsection{The Relation between SUGMs and SERGMs}\label{SUGM_SERGM}

\

We now show a relationship between SUGMs and SERGMs.

It is easiest to see the connection by considering the variation of a SUGM where nature forms various subgraphs with a probability $p_\ell$
of a given subgraph $g_\ell\in G^n_\ell$ forming without worrying about whether they overlap, so it could form a triangle and also form a link that already belongs to that triangle.
For instance, in a nicely ordered SUGM, if nature first formed triangles and then links outside of triangles, if the triangle between nodes 1,2, and 3 was formed, then the links 12, 23, and 13 would not be added later.  In this variation of a SUGM, nature forms links and triangles without caring about overlap, so it might form the triangle 1,2,3  and then also the link 12.
The formation of a given subgraph is independent of other subgraphs. Again, let $\widetilde{S}_\ell$ denote the count of truly generated subgraphs $g_\ell\in G^n_\ell$.

Define $\theta_\ell$ by
\begin{equation}
p_\ell=\frac{\exp(\theta_{\ell})}{\exp(\theta_{\ell})+1} \ {\rm or} \  \theta_\ell=\log \left( \frac{p_\ell}{1-p_\ell} \right).
\label{logtheta}\end{equation}

\begin{theorem}[SERGM Representations of SUGMs]
\label{potential}
Suppose that the probability that subgraph of type $\ell$ forms is given by (\ref{logtheta}).
This form of SUGM can be represented in a SERGM form:
\begin{equation}
\label{potent2}
\Prob_\theta^n(\widetilde{S}) = \frac{K^n(\widetilde{S})\exp\left(\widetilde{S}\cdot\theta\right)}{\sum_{s' }K^n(s')
\exp\left(s'\cdot \theta\right)
},
\end{equation}
where $K^n_\ell(s_\ell)=\binom{\overline{S}_\ell^n}{s_\ell}$ and $K^n(s)= \prod_\ell K^n_\ell(s_\ell)$.
\end{theorem}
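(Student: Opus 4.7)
The plan is to exploit the independence structure of the overlap-allowing SUGM variant described just before the theorem statement. In that variant, each candidate subgraph $g_\ell \in G_\ell^n$ is realized independently with probability $p_\ell$, and there are exactly $\overline{S}_\ell^n$ candidates of type $\ell$, so the truly generated count $\widetilde{S}_\ell$ is distributed as $\mathrm{Binomial}(\overline{S}_\ell^n, p_\ell)$, and the coordinates $\widetilde{S}_1, \ldots, \widetilde{S}_k$ are mutually independent (since they come from disjoint Bernoulli trials). The joint pmf therefore factors as a product of binomial densities:
$$\Prob_p^n(\widetilde{S} = s) = \prod_{\ell=1}^k \binom{\overline{S}_\ell^n}{s_\ell} p_\ell^{s_\ell} (1-p_\ell)^{\overline{S}_\ell^n - s_\ell}.$$

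Next I would apply the logistic reparameterization (\ref{logtheta}) term-by-term. Writing $p_\ell^{s_\ell}(1-p_\ell)^{\overline{S}_\ell^n - s_\ell} = \left(p_\ell/(1-p_\ell)\right)^{s_\ell}(1-p_\ell)^{\overline{S}_\ell^n} = \exp(\theta_\ell s_\ell)\,(1+\exp(\theta_\ell))^{-\overline{S}_\ell^n}$ and taking the product over $\ell$ puts the joint likelihood into exponential-family form:
$$\Prob_\theta^n(\widetilde{S} = s) = K^n(s)\,\exp(s \cdot \theta) \cdot \prod_{\ell=1}^k (1+\exp(\theta_\ell))^{-\overline{S}_\ell^n},$$
where $K^n(s) = \prod_\ell \binom{\overline{S}_\ell^n}{s_\ell}$ is independent of $\theta$ and the trailing product depends only on $\theta$. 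This already has exactly the shape of (\ref{potent2}), with the trailing factor playing the role of a normalizing constant.

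The final step is to confirm that this $\theta$-only factor is the reciprocal of the SERGM partition function. By the binomial theorem, applied coordinate-wise with the $s$-space ranging over the product $\prod_\ell \{0,1,\ldots,\overline{S}_\ell^n\}$,
$$\sum_{s'} K^n(s')\,\exp(s' \cdot \theta) = \prod_{\ell=1}^k \sum_{s_\ell'=0}^{\overline{S}_\ell^n} \binom{\overline{S}_\ell^n}{s_\ell'} \exp(\theta_\ell s_\ell') = \prod_{\ell=1}^k (1+\exp(\theta_\ell))^{\overline{S}_\ell^n},$$
which is exactly the reciprocal of the trailing factor above. Dividing yields (\ref{potent2}). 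There is no real obstacle beyond bookkeeping: the result is essentially the canonical exponential-family parameterization of a product of independent binomials. The only subtlety worth flagging is that the representation holds for the overlap-allowing SUGM variant, where the exact independence on the $\widetilde{S}_\ell$'s holds, rather than the nicely-ordered variant where subsequent subgraphs are conditioned on not being contained in prior ones.
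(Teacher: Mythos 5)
Your proof is correct and is essentially the same argument as the paper's: the paper writes the probability of a realized configuration $G$ as a product of independent Bernoulli factors, expands the normalizer $\prod_{g_\ell}(\exp(\theta_\ell)+1)$ over subsets grouped by count (which is exactly your coordinate-wise binomial theorem), and then aggregates the $K^n(\widetilde{S})$ equally likely configurations sharing a count vector — precisely the product-of-independent-binomials pmf you start from. Your closing remark that the representation applies to the overlap-allowing variant (where the $\widetilde{S}_\ell$ are exactly independent) matches the setup the paper adopts immediately before stating the theorem.
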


Theorem \ref{potential} provides a relationship between SUGMs and SERGMs.  The two are closely related, although the statistics counted here are the {\sl actual subgraphs} (including overlaps) that nature generated, which can be estimated but not precisely known.

Note that this also provides a reason why in specifying SERGMs it is useful to have $K$'s that differ from the $N$'s that correspond to some ERGM model.   Here specific $K$'s are natural and yet differ from an ERGM formulation.

A direct implication of Theorem \ref{potential} is the following, which provides a general result on dynamic processes of network formation, where subgraphs are repeatedly considered and added and deleted over time.

\begin{corollary}
Consider any dynamic process such that with probability one, each subgraph is considered infinitely often, and when a subgraph is considered it is added with probability (\ref{logtheta})
if not already present and deleted with the complementary probability if it is already present.  The resulting dynamic process has a steady state distribution given by
(\ref{potent2}).
\end{corollary}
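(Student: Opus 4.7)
The plan is to check that the product Bernoulli distribution over subgraph configurations is stationary for the given Markov chain, and then appeal to Theorem \ref{potential} to identify its marginal on the statistics $\widetilde{S}$ with (\ref{potent2}). Concretely, I would take the state space to be $\Omega = \{0,1\}^{\cup_\ell G^n_\ell}$, where a state $\Gamma$ records which subgraphs have been truly generated, so that $\widetilde{S}_\ell(\Gamma)= |\Gamma\cap G^n_\ell|$. The candidate stationary distribution is
\[
\pi(\Gamma) \;=\; \prod_\ell p_\ell^{\widetilde{S}_\ell(\Gamma)}\,(1-p_\ell)^{\overline{S}^n_\ell - \widetilde{S}_\ell(\Gamma)},
\]
i.e., each subgraph $g_\ell\in G^n_\ell$ independently present with probability $p_\ell=\exp(\theta_\ell)/(1+\exp(\theta_\ell))$.

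The first step is to verify detailed balance for a single-subgraph flip. Let $\Gamma_0$ and $\Gamma_1 = \Gamma_0 \cup \{g_\ell\}$ differ only in whether $g_\ell$ is present. The chain moves $\Gamma_0 \to \Gamma_1$ only when $g_\ell$ is considered (this happens at a positive rate under the hypothesis that each subgraph is considered infinitely often) and then added, which occurs with probability $p_\ell$; symmetrically it moves $\Gamma_1 \to \Gamma_0$ with factor $1-p_\ell$. Since $\pi(\Gamma_1)/\pi(\Gamma_0)=p_\ell/(1-p_\ell)$, we get
\[
\pi(\Gamma_0)\,p_\ell \;=\; \pi(\Gamma_1)\,(1-p_\ell),
\]
which is detailed balance. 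Because every pair of states differs by a finite sequence of single-subgraph flips and every subgraph is considered with probability one, the chain is irreducible on $\Omega$, so $\pi$ is the unique stationary distribution.

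Finally, I would push $\pi$ through the map $\Gamma \mapsto \widetilde{S}(\Gamma)$. The pushforward assigns mass
\[
\Prob_\theta^n(\widetilde{S}=s) \;=\; \prod_\ell \binom{\overline{S}^n_\ell}{s_\ell}\, p_\ell^{s_\ell}(1-p_\ell)^{\overline{S}^n_\ell - s_\ell},
\]
which, after substituting $p_\ell = \exp(\theta_\ell)/(1+\exp(\theta_\ell))$ and absorbing the factors $(1+\exp\theta_\ell)^{-\overline{S}^n_\ell}$ into the normalizing constant, is precisely the SERGM form in (\ref{potent2}) with $K^n_\ell(s_\ell)=\binom{\overline{S}^n_\ell}{s_\ell}$. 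This last algebraic identification is exactly the content of Theorem \ref{potential}, so it can be quoted rather than redone. The only genuinely delicate point is making sure the ``considered infinitely often'' hypothesis is enough to deliver irreducibility (hence uniqueness of the stationary distribution); this is essentially immediate since any target configuration is reachable by flipping at most $\sum_\ell \overline{S}^n_\ell$ coordinates, each of which is visited with probability one.
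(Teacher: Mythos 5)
Your proposal is correct and is exactly the argument the paper has in mind: the paper states this corollary as a ``direct implication'' of Theorem \ref{potential} and supplies no separate proof, the implicit reasoning being precisely your Gibbs-sampler/detailed-balance verification that the product Bernoulli measure on configurations of truly generated subgraphs is the unique stationary distribution, followed by the pushforward to statistics already computed in Theorem \ref{potential}. The only cosmetic omission is that the consideration probability of a given subgraph should appear on both sides of your detailed-balance identity (where it cancels, provided it does not depend on whether that subgraph is currently present); otherwise the write-up is complete and fills in details the paper leaves to the reader.
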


\section{Strategic/Preference-Based Random Network Models}\label{strategic-utility}

\

As we have discussed above, SERGMs and SUGMs admit models where both choice and chance are important, and we describe a couple of examples to illustrate how preferences of
individuals over networks can be incorporated.

\subsection{Mutual Consent Formation Models}

\

Here we describe a strategic network model that harnesses some of the power of Theorem \ref{sparse}.
The key aspect of the model is that decisions to link are not only bilateral but instead multilateral: sub-groups of
individuals decide whether or not to form
subgraphs.   A pair of individuals may meet and decide (mutually) as to whether to add a single link, but also
a group of three (or more) may meet and decide whether to form a some subgraph such as a clique or some other
form (e.g., a ring, star, etc.).\footnote{For additional theoretical underpinnings of coalition-based network formation models see \cite{jacksonvanden2005,caulier2009}.}
Moreover, the probability that they form the subgraph could depend
upon the characteristics of the individuals involved.

Consider subgraphs $g_\ell\in G^n_\ell$ with associated individuals $N(g_\ell)$.

The members of $N(g_\ell)$ meet according to a random process and have the opportunity to form $g_\ell$.
Both the probability with which the members meet and their preferences for forming $g_\ell$ can depend
on their characteristics $X(g_\ell)=(X_i)_{i\in N(g_\ell)}$.

There are certain aspects of the members' characteristics, $H_i(X(g_\ell))$, that affect $i$'s benefits from the subgraph.\footnote{
For instance if $X(g_\ell)$ were a list of the individuals' ages, then it might be that $i$'s benefit from the subgraph is a function of $i$'s distance from the average characteristics:
$h_i(X(g_\ell))= \left\Vert X_i -\sum_{j\neq i } X_j / (m_\ell-1)\right\Vert$.  It could also be that $i$ benefits from the maximum value of $X_{-i}$, or suffers from variation in the characteristics.  $h$ can be tailored to the specific application and list characteristics.  }

There is a probability
$\pi_\ell$ that a subgraph $g_\ell$ of individuals with characteristics $X(g_\ell)$ meets and decides whether
to form $g_\ell$.
So it might be more likely that individuals of similar ages meet than ones with different ages.

Individual $i$ obtains a utility\footnote{Here we simplify notation by omitting the dependence of the utility on a given individual's position in the subnetwork. Everything stated here extends directly allowing utility to depend on position:  for instance, getting higher utility from being the center of a star rather than on its periphery, but the notation becomes cumbersome.}
\[
U_{i,\ell}(X(g_\ell))=\gamma_{\ell,X_i} H_i(X(g_\ell)) - \epsilon_{i,\ell}
\]
from the formation of a given subnetwork $g_\ell$, where $\gamma_{\ell,X_i}$ depends on the subnetwork in question and possibly on the
characteristics of $i$ and $\epsilon_{i,\ell}$ is a random idiosyncratic term.

The subnetwork then forms conditional upon it having met if and only if $U_{i,\ell}(X(g_\ell)) \geq 0$ for all $i$ (say with at least one strictly positive).\footnote{ This then corresponds nests pairwise stability as defined by \citet{jackson1996strategic}, subject to the meeting process. One can adjust this to take into account other rules for group formation, and this also easily handles directed networks.}  If the error term has an atomless distribution, then the strictness is inconsequential.
Let $F_{\ell}(\cdot)$ be the distribution of error terms for the formation of subgraphs in $G^n_\ell$.
So, the probability that a subgraph $g_\ell\in G^n_\ell$ with characteristics $X(g_\ell)=(X_i)_{i\in N(g_\ell)}$ is formed is
\begin{equation}
p_{\ell}=\pi_\ell \prod_{i\in N(g_\ell)} F_{\ell}\left(\gamma_{\ell,X_i} H_i(X(g_\ell))\right).
\label{eq:utility}
\end{equation}

Let $S_{\ell}(g)$ denote the number of subnetworks in $G^n_\ell$ (consisting of individuals with characteristics $X(g_\ell)$) that form in a network $g$, counted excluding
networks already counted as subset of some $\ell'<\ell$, as under the well-ordered condition.
Under the conditions of Theorem \ref{sparse}, the SUGM in (\ref{eq:utility}) is then easily estimated and consistent.\footnote{
Although the probabilities of various subgraphs are directly estimable (and hence identified) under the conditions of Theorem \ref{sparse}, of course whether the various parts of
$\pi_\ell \prod_{i\in N(g_\ell)} F_{\ell}\left(\gamma_{\ell} H_i(X(g_\ell))\right)$
are well identified depends on the specifics of the the functional forms involved.
Just as an example, consider a situation with
two types.  Set $X_i=1$ if $i$ is of type 1 and $X_i=2$ if $i$ is of type 2 and $H_i(X)= X_i-X_j$  and consider $m=2$.
Then $\gamma_2$ and $-\gamma_2$ both lead to the same value of $ F_{2}\left(\gamma_2 (X_i-X_j)\right)  \times F_{2}\left(\gamma_2 (X_j-X_i)\right)$.
So here it could not be judged whether the type 2 has a greater expected utility (net of the random term) from the match than the type 1 or whether it is the reverse.
There are some obvious simplified
formulations that allow for identification, for example setting instead $h_i(X)= \left\vert X_i-X_j \right\vert $. It might also require specifying a (nonlinear) functional form for $\pi_\ell$ as in
\citet{currarini2010pnas}.}

It is important to note that such a formulation allows us to do welfare computations and
changes in welfare due to changes in, say, the distribution of $X$
if they include parameters that a policymaker may control - or it may be that a policymaker could change the $\pi$ function in some well-defined way by say, subsidizing interactions among groups with certain sorts of
characteristics who might tend to meet infrequently.

\subsection{Strategic Network Formation and Potential Functions}\label{pairwise-strategic}

\

There is a nice connection between strategic network formation models and potential functions that spans a series of papers:  \cite{jacksonwatts2001exist}, \cite{butts2009}, \cite{mele2011}, \cite{badev2013}.   For example, \cite{butts2009} and \cite{mele2011} show that
if links are recognized independently over time, and then added or deleted based on individual choices according to a logistic function, then the steady-state distribution can be represented as an ERGM.  In those models, only one agent makes a decision at a time and links must be directed.
Here we generalize the set of models that are covered, and also extend to allow for mutual consent.
We also provide a directed version of the formation model which generalizes the results of \cite{mele2011}.

Agent $i$'s payoff from network $g$ can depend on which subgraphs $i$ is a member of, as well as things such as to whom his or her neighbors are connected.

Let $\mathcal{G}$ denote some set of subgraphs from which agents derive utility.

Consider some agent $i$ and a subgraph $g_\ell \in \mathcal{G}$ that $i$ is a member of, $i\in g_\ell$.  Let
the members of $g_\ell$ (including $i$) have a vector of characteristics described by $ X_\ell$.
Agent $i$ gets some marginal payoff,
$$  v(g_\ell, X_\ell), $$
from having this subgraph in the network
where this function can depend on the type of subgraph and the characteristics of all the agents involved in the subgraph.

Agent $i$'s utility from a network $g$ is then
\begin{equation}
\label{upotential}
u_i(g) =  \sum_{g_\ell \in \mathcal{G}, i\in g_\ell}  v(g_\ell, X_\ell).
\end{equation}
Note that this allows the agent's utility to  depend on the presence of ``friends of friends'' by including subgraphs of the form $g_\ell=\{ij,jk\}$.  Of course, it also allows agents' payoffs to depend on direct links, cliques, and so forth.

Next, consider a network formation process where agents can form links in pairs and they add the link whenever their {\sl mutual} gain
is positive.   The idea is that they can bargain and make side payments (either in cash or by exchange of favors) to add links whenever those links are mutually beneficial.

In particular, a network $g$ is to be {\sl pairwise stable with transfers} if:
\footnote{This definition is from \cite{blochjackson2006} and is related to the definition of pairwise stability allowing for side payments that appears in the conclusion of \cite{jackson1996strategic}.}
\begin{itemize}
\item $ij\in g$
implies that $u_i(g)+u_j(g) \geq u_i(g-ij)+u_j(g-ij)$,
and
\item
$ij\notin g$
implies that
$u_i(g)+u_j(g) \geq u_i(g+ij)+u_j(g+ij)$.
\end{itemize}

For this setting, we can then define the following function $f$, which is a potential function for network formation under pairwise stability with transfers:
\begin{equation}
\label{potentialC}
f(g)  =  \sum_{g_\ell \subset g}2v(g_\ell, X_\ell).
\end{equation}
It follows that $f(\cdot)$ is a potential function for a network formation game that follows pairwise stable with transfers.
In particular, direct calculations show that for any $g$ and $ij\in g$:
\begin{equation}
\label{potentialB}
f(g) - f(g-ij)  = \left(u_i(g)+u_j(g)\right) - \left(u_i(g-ij)+u_j(g-ij)\right).
\end{equation}
Thus, the difference between the value that $f$ assigns to $g$ and what it assigns to $g-ij$ is exactly the
sum of the differences that $i$ and $j$ assign to the two networks.

For any such setting, Theorem 1 in \cite{jacksonwatts2001exist} implies that there exists at least one network that is pairwise stable with transfers, and moreover that there are no cycles in the improving paths.\footnote{An improving path is a sequence of networks that differ from each other by one link such
that if a link is added or deleted then the pair of agents in the link see an increase in their summed utilities.}

Now let us describe a dynamic process of network formation.  Let $g^0$ be some starting network, and let $g^t$ denote the network in place at the end of time $t$.
Let $g^t_{-ij}$ denote the network of links other than $ij$.
In each period, there exists some positive probability of each given link being recognized (the two agents in question ``meet'').  The recognition probabilities can depend on the pair in question and the network in place at the time and the probability that link $ij$ is recognized conditional on the network in place $g^t$ is denoted $p(ij, g^t_{-ij})$.\footnote{Since each link probability could depend on the network other than the link $ij$, if each link's recognition probability does not exceed $1/\binom{n}{2}$ then the sum of all link recognition probabilities
does not exceed 1, and that leave the residual probability $1-\sum_{ij} p(ij, g^t_{-ij})$ to be the probability that there is no link recognized in the current period and the period simply advances.  The scaling of probabilities is irrelevant to the steady state, and so it is fine to allow periods to pass without any recognition.}

We emphasize that the meeting process is quite general as it is allowed to depend on the attributes of the agents $i$ and $j$ as well as the network in question.  Thus, for example, it allows their meeting probability to depend on whether or not the pair have friends in common, and can even depend on how many friends in common they have, and can depend on any other aspect of the network, as well as the agents' characteristics.

Once recognized at some time $t$, $i$ and $j$ decide whether to add or delete the link, conditional upon the rest of the network in place at that time $g^{t-1}_{-ij}$.
The probability that the link is added/kept is a logistic function of the mutual value of the link:\footnote{With a slight abuse of notation, we allow $g^t_{-ij}{+ij}$ to denote
the network where $ij$ is present and the network of other links is described by $g^t_{-ij}$, and similarly $g^t_{-ij}{-ij}$ denotes
the network where $ij$ is not present and the network of other links is described by $g^t_{-ij}$.}
\begin{equation}
\label{log-add}
\frac{\exp\left(u_i(g^t_{-ij}+ij)+u_j(g^t_{-ij}+ij)\right)
}{\exp\left(u_i(g^t_{-ij}+ij)+u_j(g^t_{-ij}+ij)\right) + \exp\left(u_i(g^t_{-ij}-ij)+u_j(g^t_{-ij}-ij)\right)}.
\end{equation}

This defines an aperiodic and irreducible Markov chain over the space of all networks, and so it has a unique steady-state distribution.
Moreover, it is a reversible Markov chain and
the unique steady-state distribution
is given by\footnote{We omit the standard proof as, for instance, it is a direct extension of the proof of Theorem 1 in \cite{mele2011}, noting that the link recognition probability
can depend on $g^t_{-ij}$ without affecting the steps of his proof.}
\[
\Pr(g)=\frac{\exp(f(g))
}{\sum_ {g'} \exp(f(g'))}.
\]
Thus, this is an ERGM:
\[
\Pr(g)= \frac{\exp\left( \sum_{g_\ell \subset g}2v(g_\ell, X_\ell)\right)}{\sum_{g'}\exp\left(  \sum_{g_\ell \subset g'}2v(g_\ell, X_\ell)\right)}.
\]
We can then rewrite $\sum_{g_\ell \subset g}2v(g_\ell, X_\ell)$ as a function that simply keeps track of statistics of how many subgraphs of a network $g$ are of a given form, $(\ell, X_\ell)$, denoted $S_{\ell, X_\ell}(g)$.
 This then has a SERGM representation:
\[
\Pr(s)= \frac{\exp\left( \sum_{\ell} s_{\ell, X_\ell} 2v_{\ell, X_\ell}\right) }{\sum_{s'}\exp\left(  \sum_{\ell} s'_{\ell, X_\ell} 2v_{\ell, X_\ell}\right)}.
\]

\noindent We remark that link recognition probabilities do not enter the final steady-state distribution, which is only determined by the preferences as captured via the $v$ functions.

\subsection{Directed Network Formation}

\

Everything stated above has an analog for directed links $ij$ where the decision to add the link is taken by agent $i$ (with stability defined by Nash equilibrium), and where the subgraphs, $g_\ell$'s, are directed.  The only change is to drop the `2' in the above formulas and require that each agent obtain utility from each subgraph in which they direct some link.\footnote{The model can be specified to allow agents to derive utility from subgraphs in which they have some ``in-links'' but no ``out-links'', but can also allow them not to.}  The directed version of the above generalizes a result in \cite{mele2011}.

\subsection{Search Intensity Models}

\

Another interesting class of strategic/random network formation models that we can extend to the setting here are where agents face overall costs of forming relationships - not just costs associated with various subgraphs (as in the models above).
To account for such overall tradeoffs in the network formation processes, we can also include search intensities as have been analyzed in various formation models such as
\citet{currarini2009economic,currarini2010pnas,borgs2010,golub2010netform}.
Those models are of bilateral link formation; but are easily extended to more general SUGMs as we briefly describe.

Each agent $i$ with characteristics $X_i$ puts in a search effort $e(X_i,m,X)\in [0,1]$ to form cliques of size $m$ with characteristics $X$.
$m=2$ indicates links, and so $e(X_i,2,(X_i,X_j))$ is the effort that agent $i$ expends in trying to form links with agents who have characteristics $X_j$; and $e(X_i,3,(X_i,X_j,X_h))$ is the effort that agent $i$ expends in trying to form triangles where the other two agents have characteristics $X_j$ and $X_h$, and so forth.
``Effort'' is simply a shorthand for either the time spent socializing, 
or else it could simply indicate a relative openness to forming relationships of various types.

An agent obtains a utility
$
u (X_i,m, X)
$
from being of type $X_i$ in a clique of size $m$ with characteristics $X$.

The probability that a given clique $Cl_m$ forms depends on the vector of efforts for such cliques of those in the clique, $(e_j(,m,X))_{j\in Cl_m}$, according to a function $\pi_{m,X}((e_j(m,X))_{j\in Cl_m})$
that is nondecreasing in each of its arguments.

An agent also pays a cost of network formation: $c(X_i, (e_i(m,X))_{m,X})$ that depends on his or her characteristics $X_i$ and the
search efforts that he or she exerts in forming various links and cliques, $(e_i(,m,X))_{m,X}$.

Thus, an agent $i$'s overall expected payoff as a function of the all of the agents' efforts is described by
\[
\left(\sum_{m,Cl_m: i\in Cl_m} \pi_{m,X}((e_j(m,X))_{j\in Cl_m})u (X_i,m, X)\right) -  c(X_i, (e_i(m,X))_{m,X})
\]

In a case where the $u$'s are nonnegative, this defines a supermodular game: agent $i$'s change in payoff from increasing any dimension
of $(e_i(m,X))_{m,X}$ is nondecreasing in the vector of strategies $(e_j(m,X))_{j\neq i, m,X}$.
In such games, pure strategy equilibria exist and form a complete lattice (e.g., see \citet{topkis2001}).
Additional conditions on $\pi$, $u$, and $c$ can ensure uniqueness of equilibrium, depending on the specific functional forms that are used
to parameterize the model, or one can appeal to equilibrium selection.\footnote{Here there are positive spillovers/externalities from strategies,
and so generally the maximal equilibrium will Pareto dominate the others, and so a standard refinement would be to look at the Pareto efficient equilibrium which is then unique and pure (e.g., see \citet{vives2007} for some background).}

Models of this structure define SUGMs, where the relative frequencies $p_{m,X}$ of cliques of size $m$
consisting of agents with characteristics described by the profile $X$.
Specifying functional forms for $\pi$, $u$ and $c$ then allows for estimation of parameters of the model and of the equilibrium, provided the specification is tight enough to be well-identified.

Although the above formulation is described for cliques, it is easily adjusted for any subgraphs (for instance an agent may value being the center of a star with $m$ agents).  In the obvious extension one needs to keep track of the positions of the various types of agents in the subgraph as there are then asymmetries in positions and, for instance, agents might care about the characteristics of the agent at the center of a star.

\section{Network properties generated by SUGMs}\label{applications}

\

First we compare a simple model that estimates linking probabilities based on node characteristics (caste and geography) with a SUGM that also includes triangles.  The idea is to compare how well these replicate various features of actual networks, such as clustering, the size of the giant component, average path length, degree distributions, and various eigenvalue properties of the adjacency matrices.

For this exercise we use the \citet{banerjee2013diffusion} data consisting of networks in 75 Indian villages with an average of 220 nodes.  Here we focus on ``advice'' networks: an edge represents whether a household speaks to another household when having to make an important decision and we use an undirected, unweighted graph. This is a simple representation of the informational network structure within the sample of villages, and the networks are reasonably connected (with more than two-thirds of the nodes being in a giant component) and yet also reasonably sparse for small networks.

In addition to the average degree and clustering (which are at least partly captured by links and triangles), we are interested in other quantities motivated by theory.
We look at the first eigenvalue of the adjacency matrix, which is a measure of diffusiveness of a network under a percolation process (e.g., Bollob\'{as} et al., \citet{jackson2008social}).  A related quantity is the spectral gap, which is the difference in the magnitudes of the first and second eigenvalues of the adjacency matrix. This is intimately related to the expansiveness of the network -- namely, for any subset of nodes the number of links leaving the subset relative to the number of links within the subset.
We are also interested in the second eigenvalue of the stochasticized adjacency
matrix.\footnote{The stochastized adjacency matrix $T$ is defined as
$T_{ij} = \frac{g_{ij}}{\sum_{k} g_{ik}}$, where either $g_{ii}=1$, or
$g_{ik}>0$ for some $k\neq i$, as this captures the set of people to whom $i$ listens.}
This is a quantity that is key in local average learning processes and modulates
the time to consensus (\citet{demarzo2003persuasion}, \citet{golub2009homophily}).
Additionally, we look at the fraction of nodes that belong to the giant component
of the network, as empirical networks are often not completely connected.
Finally, we consider average path length (in the largest component).

Our procedure is as follows.  For every village, we estimate each of two network formation models.  The first network formation model
is a link-based model where the probabilities can also depend on geographic and caste covariates.  In particular, pairs of household are categorized as either being ``close'' or ``far'' and then separate probabilities of links are estimated for ``close'' and ``far'' pairs.  ``Close'' refers to pairs of nodes that are of the same caste and are below the median geographic distance (the median GPS distance taken across all pairs of households), and ``far'' to those that either differ in caste or are further than the median distance.
The second network formation model is a SUGM with the same structure except for the addition of triangles.\footnote{Similarly, we categorize triangles as being ``close'' if all nodes are of the same caste and
all pairs are below the median distance, and ``far'' otherwise. }
We estimate parameters for the village network for each model and then
generate a random network from the model based on the estimated parameters.   We do 100 such simulations for each of
the 36 villages and for each of the two models.  We then compare the aforementioned network characteristics from the simulations with the actual data.\footnote{We have complete GPS and caste data for only 36 villages.}

Table \ref{tab-emprops}  presents the results.  We find that networks simulated from the SUGM better match the structural
properties exhibited by the empirical Indian village networks than those simulated
from a link-based model.

\begin{table}[h]
\caption{Network Properties}\label{tab-emprops}
\begin{center}
\includegraphics[trim = 15mm 194mm 0mm 29mm, clip = true, scale = 0.9]{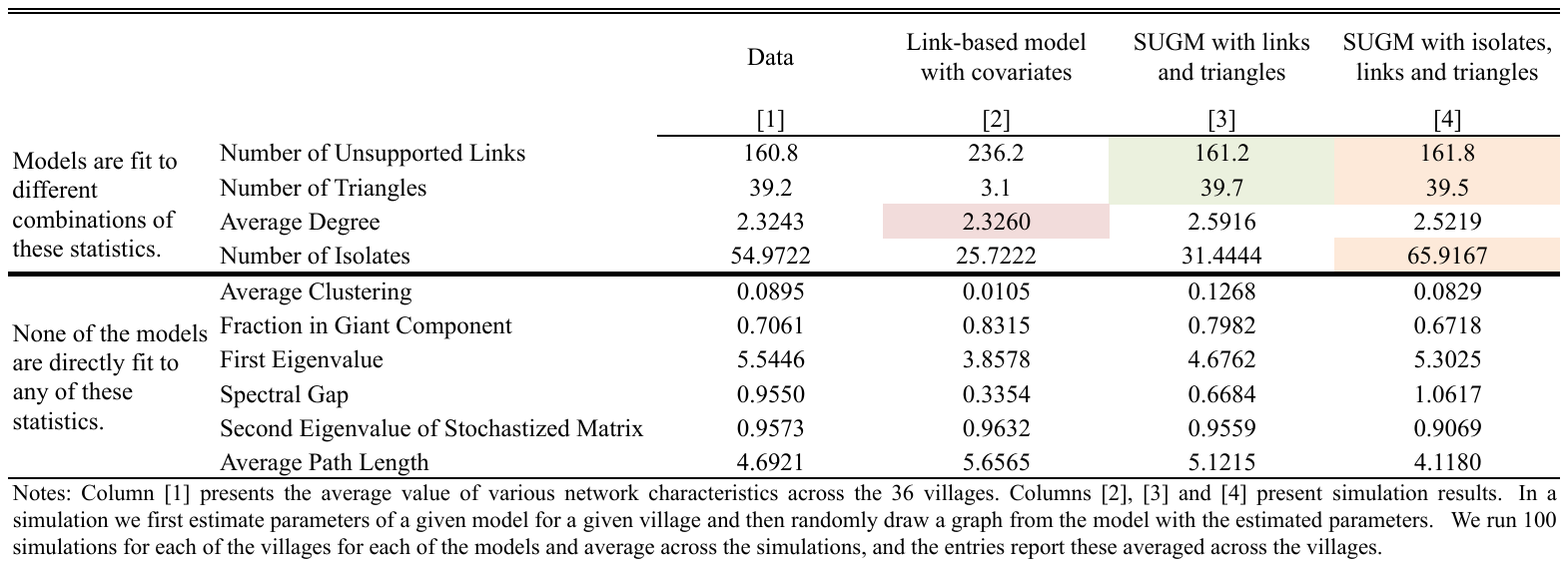}
\par\end{center}
\end{table}

Both the SUGM and the link-based model do quite well for average degree.  As expected, the SUGM matches the triangle count and the unsupported link count (as these are the statistics on which the model is based) whereas the link-based model matches average degree quite closely (as this is the moment on which this model is based).

Neither model is based on the remaining statistics.
The first and most obvious thing to note is that the link-based model does extremely poorly when it comes to matching clustering while the SUGM does much better, which is natural given that the SUGM explicitly includes triangles.
More interestingly, conditioning on the triangles in the SUGM is enough to deliver better matches on all of the other dimensions. For instance, the link-based model considerably underestimates the first eigenvalue (3.86 as compared to 5.54), whereas the SUGM performs better (4.68). Similarly, the link-based model underestimates the expansiveness of the networks with a spectral gap of 0.34 instead of 0.96.  The SUGM again performs considerably better (0.67). These sorts of results also hold true for the average path length, fraction of nodes in the giant component, and the second eigenvalue of the stochasticized matrix.

Beyond these two models, we also fit a SUGM that includes isolates, in addition to links and triangles.   Not surprisingly, it fits isolates better than either of the previous models.
The more interesting aspects are in the other features to which none of the models are fit.  Here we see that including isolates significantly improves, beyond the improvement from triangles, the fits on clustering, the size of the giant component, the first eigenvalue, and spectral gap.   Accounting for isolated nodes changes the density among remaining nodes in ways that better match the overall structure of the network.   The dimension on which it does not perform as well is the second eigenvalue (the homophily measure).  However, that is likely because the model is not sufficiently geared towards the covariates that affect segregation, and so densifying the remaining network reduces segregation.  Including a richer set of covariates into the model would help counter-act that, but is beyond our illustrative purposes here.

We also examine distributional outcomes. In Figure \ref{fig-Compare}, we show CDFs of node degrees and clustering. The CDFs from the empirical data are computed as follows. For every village, we compute the degree and clustering coefficient for each 5th percentile from 5 to 95. We then average these values across the villages in our sample. The simulated CDFs are computed by taking the analogous cross-village average from simulated data as described in Table \ref{tab-emprops}.  For parsimony, we compare only the isolates-links-triangles SUGM and the links-based model.

Figure \ref{fig:DegreeDistribution} shows the degree distributions. The SUGM does considerably better than the links-based model in matching the entire degree distribution. Specifically, the links-based model undershoots both the lower and upper tails of the degree distribution, despite hitting the average correctly. The SUGM, though slightly overshooting the average degree, better matches the distribution overall.

Figure \ref{fig:Clustering} shows the distribution of clustering coefficients. The link-based model is unable to generate any non-trivial clustering and essentially has a degenerate distribution (the short red curve in the upper left). The SUGM generates a distribution similar to the data, significantly outperforming the link-based model.

\begin{figure}[h!]
\centering
\subfloat[Average CDF of Degree]{
\label{fig:DegreeDistribution}
\includegraphics[trim = 0mm 0mm 0mm 0mm, clip,width=0.5\textwidth]{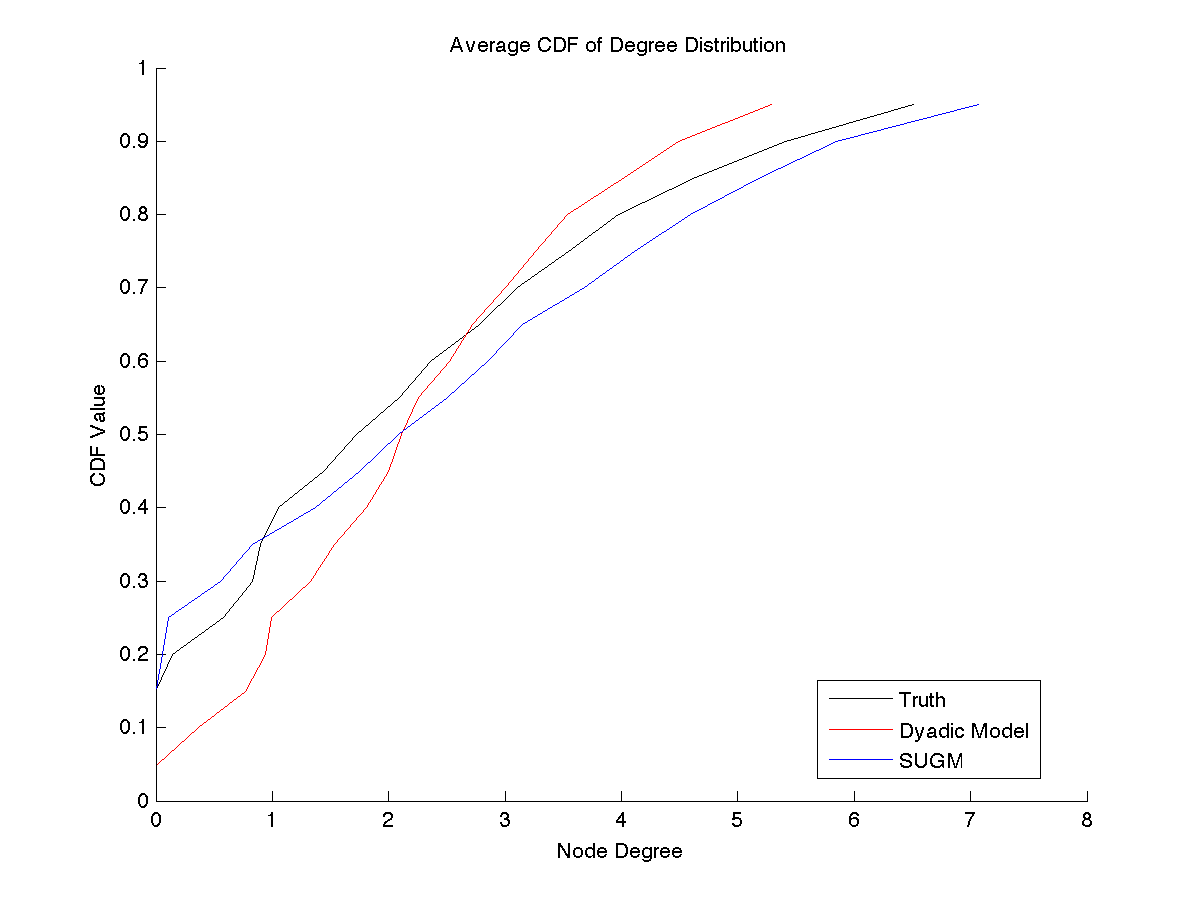}
}
\subfloat[Average CDF of Clustering Coefficient]{
\label{fig:Clustering}
\includegraphics[trim = 0mm 0mm 0mm 0mm, clip,width=0.5\textwidth]{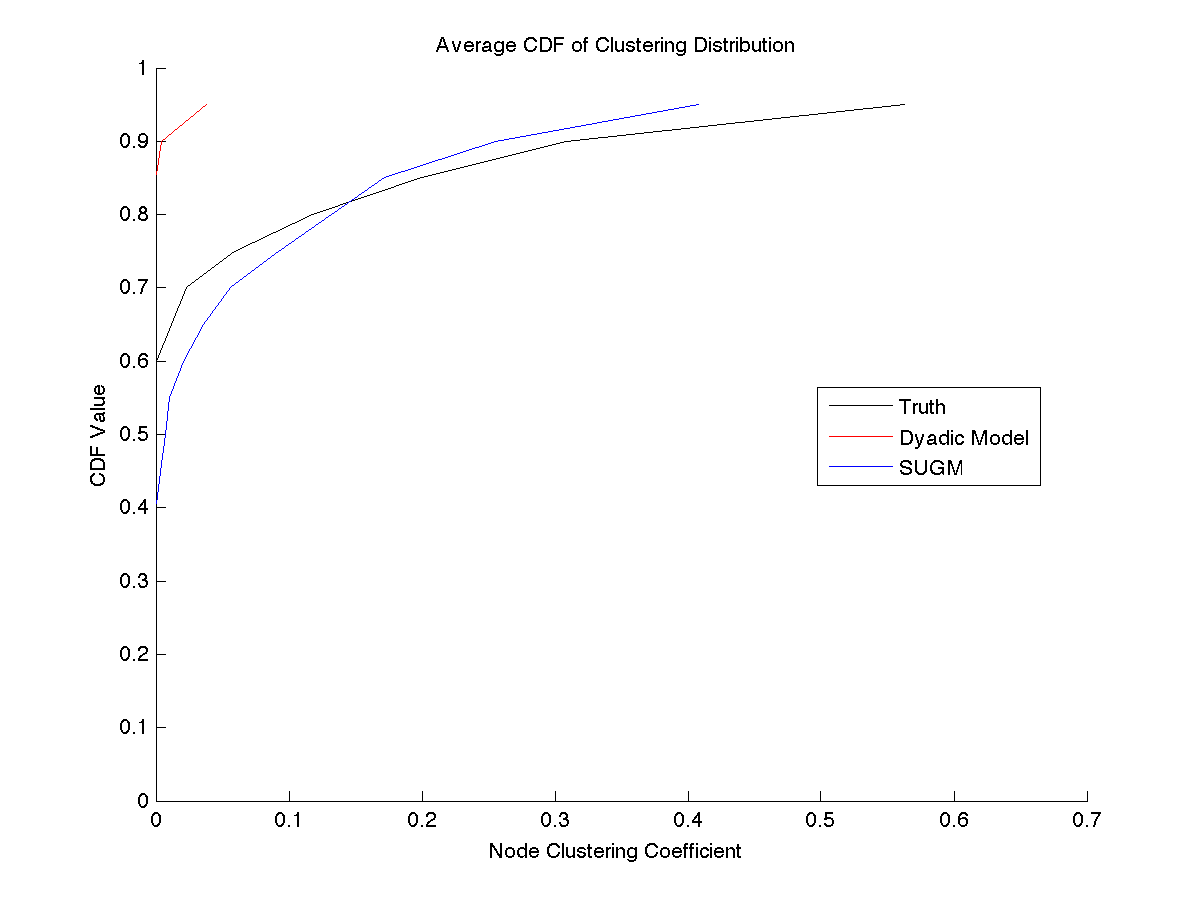}
}
\caption{\label{fig-Compare} Distributions of degree and clustering coefficients - averaged across the 75 villages. The figure displays the CDFs from the data (grey), the isolates-links-triangles SUGM (blue), and the link-based model (red).}
\end{figure}

The results of the analysis in this section are not sensitive to the covariates included. That is, it is not simply that
the SUGM allows for more parameters that enable it to better match the data.  It is that it includes richer network structures.
In Appendix \ref{simsappend}, we enrich the links-based model to include polynomials of a large set of demographic covariates including geographic distance, caste composition, quality of access to electricity, quality of latrines in the household, number of beds, number of rooms, etc.  We show that the links-based model, even aided by a considerable amount of data and more degrees of freedom, cannot replicate structural features of the network that are captured by very simple SUGMs that rely on minimal amounts of covariate data.

It is perhaps not surprising that SUGMs do a much better job at recreating network structures that standard link-based models, but nonetheless it is important.
 Moreover, the fact that the SUGMs do a better job than a link-based model of recreating not only local clustering and triangle patterns but also many other features of the real networks that it is not based upon suggests that there is substantial value added of modeling the formation of triangles and isolates.

Finally, knowing that our model is better able to capture the realistic correlational features of links within real world networks should make us more confident in trusting the results of the empirical application in Section \ref{crosscaste}. When we look at links across social boundaries, we can be comfortable that to first order, thinking about a SUGM with links and triangles across and within caste groups is a reasonable cut of the data.

\section{Conclusion}\label{conclusion}

We presented two new classes of models of network formation, SERGMs and SUGMs, based on the
idea that network formation is driven by the properties of a network and/or by the formation of various subgraphs.
This turns the focus away from the network as the unit of analysis, and instead focuses on its properties, subgraphs and statistics.
This perspective allows us to develop direct estimation techniques and to derive consistency and asymptotic distributions of parameter estimates.  Given the growing literature on estimation of network formation, such results are essential. Our framework provides a set of sufficient conditions that a user could check to see if their preferred microfoundation is likely to satisfy and we give a number of such examples.

Models of network formation, such as those described here, can be useful beyond simply studying
network patterns, such as in situations where network features are drivers of economic behaviors.
 For example, farmers may be significantly more likely to learn to use a new farming technology in one network than another, and so influencing
  the network of communication among farmers could be useful.
SERGMs and SUGMs, can help us understand the drivers of network formation and thus which sorts of interventions might lead to improvements in network features of interest.
Moreover, it is apparent that accounting for endogeneity of networks is important in studying peer influence (e.g., \cite{aral2009,goldsmith-pinkham2013,jackson2013,lindquist2013}) and so having practical models
of network formation is useful beyond direct estimation.

It is important to emphasize that our models can easily simulate networks.  This can be difficult (impossible) for standard ERGMs.  Here, our work suggests several avenues.  First, in the case of sparse networks, the SUGMs are perfectly suited for easily generating networks:  the model directly translates into an algorithm for generating networks by generating various subnetworks.
Second, in the case of SERGMs, the models are well-adapted to generating statistics of networks, even though in some cases it might be difficult to generate the networks themselves.   For example, if some profile of statistics $S$ is generated, then randomly picking a network $g$ that exhibits statistics exactly $S$ can be a hard problem.
This suggests a third avenue in line with our interpretation of SUGMs.    Instead of viewing $S$ as the realized statistic of the network, nature forms a network by forming subnetworks, even when they are dense.  So, the $S$ profile of generated subnetworks is picked by nature based on the SERGM.  This will generate some incidental subnetworks, and so a different observed $S'$ from $S$, but is still a perfectly well-defined model for generative purposes.\footnote{In fact, by simulating such processes one can estimate the relationship between $S'$ and $S$ which can then be used for estimation purposes when networks are not sparse, similar to the algorithm we provide for SUGMs.  We leave the general treatment of such algorithms and estimations for future research.}

Also, given our results, a researcher can use variations of standard approaches of model selection to deduce which statistics to incorporate in a SERGM or SUGM.
Consider the following.  Suppose that a researcher is interested in developing a model that captures the density, homophily, and clustering in an observed network.
An objective function can be built where a model's score is based on the total difference between its predictions of the relevant statistics (under best-fit parameters) and the observed statistics; and, as is commonly done, a penalty can be included for the number of parameters in the model.  Then one can examine SUGMs that incorporate various subsets of characteristic-based links, triangles, larger cliques, isolated nodes, and so forth, and find which model minimizes the objective function and is thereby selected. A similar logic could be used to construct goodness-of-fit tests for the model.

Finally, we note that the approach we have taken can be further extended.  In fact, once one adopts a SERGM formulation, many other sorts of applications beyond networks, such as matching problems, partitioning problems, club membership and others can also be incorporated.

\bibliographystyle{ecta}
\bibliography{metrics,networks}

\appendix

\section{Proofs}\label{proofs}

\noindent
\begin{proof}[{\bf Proof of Theorem \ref{prop-consistERGMs}}]

Consider a sequence of count statistics $S^n=(S^n_1, \ldots, S^n_k)$ whose $\ell$-th entry takes on nonnegative integer values with some maximum value $\overline{S}^n_\ell\rightarrow \infty$, and the count
SERGMs specified with
$K^n(s)= \prod_\ell \binom{\overline{S}^n_\ell}{s_\ell}$.

\begin{equation}
\mathrm{P}_{\beta}\left(S^n=s\right)=\frac{\prod_\ell \binom{\overline{S}^n_\ell}{s_\ell}\exp\left(\beta^n \cdot s \right)}{\sum_{s'\in A^n} \prod_\ell \binom{\overline{S}^n_\ell}{s'_\ell}\exp\left(\beta^n \cdot s'\right)}.
\label{eq:scs}
\end{equation}

We rewrite (\ref{eq:scs}) as
\begin{equation}
\mathrm{P}_{\beta}\left(S^n=s\right)=\frac{\prod_\ell \left[\binom{\overline{S}^n_\ell}{s_\ell}\exp\left(\beta^n_\ell  s_\ell \right)\right]}{\sum_{s'\in A^n} \prod_\ell \left[\binom{\overline{S}^n_\ell}{s'_\ell}\exp\left(\beta^n_\ell  s'_\ell \right)\right]}.
\label{eq:new0}
\end{equation}

If we consider the distribution conditional on $S^n$ lying in
\[
B^n=
\prod_\ell
\{\lfloor \E_{\beta^n_\ell}[S^n_\ell](1 - \varepsilon) \rfloor,\ldots,  \lceil \E_{\beta^n_\ell}[S^n_\ell](1 + \varepsilon) \rceil \}\subset A^n
\]
(for large enough $n$ under the non-conflicted condition), then
we can write
\[
\mathrm{P}_{\beta}\left(S^n=s| s\in B^n\right)
=\frac{\prod_\ell \left[\binom{\overline{S}^n_\ell}{s_\ell}\exp\left(\beta^n_\ell  s_\ell \right)\right]}{\sum_{s'\in B^n_\ell} \prod_\ell \left[ \binom{\overline{S}^n_\ell}{s'_\ell}\exp\left(\beta^n_\ell  s'_\ell \right)\right]}
=\frac{\prod_\ell \left[\binom{\overline{S}^n_\ell}{s_\ell}\exp\left(\beta^n_\ell  s_\ell \right)\right]}{\prod_\ell \left[\sum_{s'_\ell\in B^n_\ell}  \binom{\overline{S}^n_\ell}{s'_\ell}\exp\left(\beta^n_\ell  s'_\ell \right)\right]},
\]
or
\begin{equation}
\mathrm{P}_{\beta}\left(S^n=s| s\in B^n\right)=\prod_\ell \frac{ \binom{\overline{S}^n_\ell}{s_\ell}\exp\left(\beta^n_\ell  s_\ell \right)}{ \sum_{s'_\ell\in B^n_\ell}  \binom{\overline{S}^n_\ell}{s'_\ell}\exp\left(\beta^n_\ell  s'_\ell \right)}.
\label{eq:new1}
\end{equation}

Next, we consider binomial distributions which we will relate back to the above expressions for the SERGM.

For each $\ell$, consider a binomial distribution that has $p^n_\ell= \frac{\E_{\beta^n_\ell}[S^n_\ell ]}{\overline{S}^n_\ell}$, with range from $0$ to $\overline{S}^n_\ell$.
Taking a product of independent binomial distributions
\begin{equation}
\mathrm{P}^{Bin} \left(\widetilde{S}^n=s\right)= \prod_\ell \frac{ \binom{\overline{S}^n_\ell}{s_\ell}\exp\left({\beta}^n_\ell  s_\ell \right)}{ \sum_{s'_\ell\in [0,\overline{S}^n_\ell]}  \binom{\overline{S}^n_\ell}{s'_\ell}\exp\left({\beta}^n_\ell  s'_\ell \right)} =  \frac{ \prod_\ell \binom{\overline{S}^n_\ell}{s_\ell}\exp\left({\beta}^n_\ell  s_\ell \right)}{ \sum_{s'_\ell\in [0,\overline{S}^n_\ell]} \left[ \prod_\ell \binom{\overline{S}^n_\ell}{s'_\ell}\exp\left({\beta}^n_\ell  s'_\ell \right)\right]}.
\label{eq:new2}
\end{equation}
The corresponding probability that $\widetilde{S}^n=s$ given $ s\in B^n$ can be written as
\begin{equation}
\mathrm{P}^{Bin} \left(\widetilde{S}^n=s| s\in B^n\right)=\prod_\ell \frac{ \binom{\overline{S}^n_\ell}{s_\ell}\exp\left(\beta^n_\ell  s_\ell \right)}{ \sum_{s'_\ell\in B^n_\ell}  \binom{\overline{S}^n_\ell}{s'_\ell}\exp\left(\beta^n_\ell  s'_\ell \right)}.
\label{eq:new3}
\end{equation}

For a binomial distribution, the probability that $\widetilde{S}^n_\ell \in B^n_\ell \rightarrow 1$.\footnote{This comes from Chernoff's inequality since the probability that the ratio of the total number of successes to the expected number of successes is within an $\epsilon$ window of 1 as $n \rightarrow \infty$.}
Thus, under independent binomial distributions, the probability that  $\widetilde{S}^n \in B^n \rightarrow 1$,
and so
it follows from (\ref{eq:new2}) and (\ref{eq:new3}) that
\begin{equation}
\frac{\mathrm{P}^{Bin} \left(\widetilde{S}^n=s| s\in B^n\right)}{\mathrm{P}^{Bin} \left(\widetilde{S}^n=s\right)}=
\frac{ \sum_{s'_\ell\in [0,\overline{S}^n_\ell]}  \binom{\overline{S}^n_\ell}{s'_\ell}\exp\left({\beta}^n_\ell  s'_\ell \right)}
{ \sum_{s'_\ell\in B^n_\ell}  \binom{\overline{S}^n_\ell}{s'_\ell}\exp\left(\beta^n_\ell  s'_\ell \right)}
\rightarrow 1,
\label{eq:new4}
\end{equation}
uniformly for $s\in B^n$.

Collecting from (\ref{eq:new0}) and (\ref{eq:new2}) it follows that for $s\in B^n$
\begin{equation}
\mathrm{P}_\beta \left(\widetilde{S}^n=s\right) \geq   \mathrm{P}^{Bin} \left(\widetilde{S}^n=s\right).
\label{eq:new51}
\end{equation}
Collecting from (\ref{eq:new1}) and (\ref{eq:new3}) it follows that for $s\in B^n$
\begin{equation}
\mathrm{P}^{Bin} \left(\widetilde{S}^n=s| s\in B^n\right) = \mathrm{P}_\beta \left(\widetilde{S}^n=s| s\in B^n\right) .
\label{eq:new52}
\end{equation}
Then (\ref{eq:new51}) and (\ref{eq:new52}) and the fact that $\mathrm{P}_\beta \left(\widetilde{S}^n=s| s\in B^n\right) \geq \mathrm{P}_\beta \left(\widetilde{S}^n=s\right)$, together with (\ref{eq:new4}), imply that
\begin{equation}
\frac{\mathrm{P}_\beta \left(\widetilde{S}^n=s\right)}{\mathrm{P}^{Bin} \left(\widetilde{S}^n=s\right)}
\rightarrow 1,
\label{eq:new6}
\end{equation}
uniformly for $s\in B^n$.

The remainder of the claimed results then follows easily from standard properties of the binomial distribution.

In particular, the variance terms are computed as follows. First consider a binomial ${\rm Bin}(p^n_\ell;n)$ with $p^n_\ell n \rightarrow \infty$. By the Lindeberg-Feller Central Limit Theorem
$
\sqrt{n}\left(\widehat{p}^n_\ell-p^n_\ell\right)\rightsquigarrow\mathcal{N}\left(0,p^n_\ell\left(1-p^n_\ell\right)\right).
$
Letting
$
\beta^n_\ell=g\left(p^n_\ell\right)=\log\frac{p^n_\ell}{1-p^n_\ell},
$
note
\[
g'(p)=\frac{1}{p}+\frac{1}{1-p}=\frac{1}{p(1-p)}.
\]
Finally, by the delta method,
\[
\sqrt{n}\left(\widehat{\beta^n_\ell}-\beta^n_\ell\right)\rightsquigarrow\mathcal{N}\left(0,p^n_\ell\left(1-p^n_\ell\right)\left[g'(p^n_\ell)\right]^{2}\right)
\]
and therefore observing that $p^n_\ell=\frac{\exp\beta^n_\ell}{1+\exp\beta^n_\ell}$, $p^n_\ell(1-p^n_\ell)\left[\frac{1}{p^n_\ell(1-p^n_\ell)}\right]^{2}=\frac{1}{p^n_\ell(1-p^n_\ell)}$
and substituting for $\beta^n_\ell$, we have
\[
\sqrt{n}\left(\widehat{\beta^n_\ell}-\beta^n_\ell\right)\rightsquigarrow\mathcal{N}\left(0,\frac{1}{\frac{\exp\beta^n_\ell}{1+\exp\beta^n_\ell}
\left(1-\frac{\exp\beta^n_\ell}{1+\exp\beta^n_\ell}\right)}\right).
\]
This argument replacing $n$ with $\bar{S}_{\ell}^{n}$
shows the result.\end{proof}

\bigskip

\begin{proof}[{\bf Proof of Theorem \ref{sparse}}]
We provide the proof without covariates to save on notation, but it extends directly. We begin the proof by showing the following.  For any $\ell$, the fraction of counts of subnetworks $\ell$ generated incidentally by some other subnetworks goes to 0.
That is, consider some $\ell$ and  $g_\ell \in G^n_\ell$ on $m_\ell$ nodes.  Let
$$\widetilde{p}^n_\ell =\frac{\E(\widetilde{S}^n_\ell)}{y_\ell^n \binom{n}{m_\ell}}. $$
This is no more than $p^n_\ell$, as the denominator includes all possible subgraphs of size $\ell$ (where $y_\ell^n$ is the number of subgraphs of type $\ell$ that can be formed on any given $m_\ell$ nodes).\footnote{Here we provide the proof that applies without subgraphs being characteristic dependent.  The extension to characteristic dependent subgraphs is straightforward simply by adjusting all numbers to reflect possible
networks with given node characteristics as dependent on $n$ and the sets of nodes that have particular characteristics as a function of $n$, but it is notationally much more intensive.}
Let us consider the probability $z^n_\ell$ that $g_\ell$ is incidentally generated by other subnetworks.

We show that $z^n_\ell/ \widetilde{p}^n_\ell \rightarrow 0$, which implies that $z^n_\ell/ {p}^n_\ell \rightarrow 0$.

Consider $g_\ell \in G^n_\ell$ and an incidentally generating subclass  $(\ell_j, h_j)_{j\in J}$.

We show that the probability $z^n_\ell(J)$ that it is generated by this subclass goes to zero relative to $\widetilde{p}^n_\ell$,
so that $z^n_\ell(J)/ \widetilde{p}^n_\ell \rightarrow 0$ for each $J$, and since there are at most
$M_\ell\leq k^{m_\ell}$ such generating classes, this implies that $z^n_\ell/ \widetilde{p}^n_\ell \rightarrow 0$.

For a subnetwork in $G^n_{\ell_j}$, the probability of getting {\sl at least one} such network
that has the $h_j$ nodes out of the $m_\ell$ in $g_\ell$ is no more than\footnote{This is a loose upper bound as it simply
adds the probability that each possible one forms - but becomes more accurate as the probability of any one occurring vanishes.}
$$y_{\ell_j}^n \binom{n}{m_{\ell_j}-h_j} \widetilde{p}^n_{\ell_j} \leq y_{\ell_j}^n n^{m_{\ell_j}-h_j}\widetilde{p}^n_{\ell_j}.$$
Thus,
$$\frac{z^n_\ell(J)}{ \widetilde{p}^n_\ell} \leq  \frac{\Pi_{j\in J} n^{m_{\ell_j}-h_j}y_j^n \widetilde{p}^n_{\ell_j}}{ \widetilde{p}^n_\ell}.$$
Therefore
$$\frac{z^n_\ell(J)}{ \widetilde{p}^n_\ell} \leq  \frac{\Pi_{j\in J}y_j^n n^{m_{\ell_j}}\widetilde{p}^n_{\ell_j}}{n^{\sum_j h_j} \widetilde{p}^n_\ell}.$$

Recall that $M_J=\sum_{j\in J} h_j -m_\ell$ and that $M\geq 1$ (since $|J|\geq 2$ and each set of $h_j$ intersects with at least one other set of $h_{j'}$ for some $j'\neq j$).

Therefore
$$\frac{z^n_\ell(J)}{ \widetilde{p}^n_\ell}\leq  \frac{\Pi_{j\in J}y_{\ell_j}^n n^{m_{\ell_j}}\widetilde{p}^n_{\ell_j}}{n^{M_J} n^{m_\ell} \widetilde{p}^n_\ell}.$$
The numerator is of the order $\Pi_{j\in J} \E(S^n_j)$ while the denominator is of the order $n^{M_J} \E(S^n_\ell)$.

Under the sparseness condition,
$$\frac{\Pi_{j\in J} \E(S^n_{\ell_j})}{ n^{M_J} \E(S^n_\ell)}\rightarrow 0, $$
and so we have verified the claim.

To finish the ratio consistency proof, note that the claim then implies that $\frac{\widehat{S}^n_\ell(g)}{{S}^n_\ell} \rightarrow 1$.
Thus, dividing top and bottom by $\overline{S}^n_\ell(g)$, it follows
that $\frac{\widehat{p}^n_\ell(g)}{{S}^n_\ell/\overline{S}^n_\ell(g)}\rightarrow 1$.
Given the growing condition and properties of the binomial distribution, it also follows that
$\frac{{S}^n_\ell/\overline{S}^n_\ell(g)}{{p}^n_\ell}\rightarrow 1$, and so $\frac{\widehat{p}^n_\ell(g)}{p^n_\ell}\rightarrow 1$.

Next, note that the above
also implies that the distribution
$D_{n}^{1/2}\left(\widehat{p}_{1}^n(g),...,\widehat{p}_{k}^n(g)\right)'$ converges to the distribution of
$D_{n}^{1/2}\left(\widetilde{p}_{1}^n(g),...,\widetilde{p}_{k}^n(g)\right)'$, where $\widetilde{p}_{\ell}^n(g) = {S}^n_\ell/\overline{S}^n_\ell(g)$.

The asymptotic normality of the (joint) distribution then follows from the usual Linderberg-Feller central limit theorem applied to triangular arrays of binomial random variables.  This applies under the growing conditions of the theorem.\footnote{\label{LF_clt}
Let $X_{1T},...,X_{TT}$ be a triangular array of ${\rm Ber}(p_T)$ random variables and $p_{T}T\rightarrow \infty$ (as under the growing condition).
To apply the Lindeberg-Feller central limit theorem for triangular arrays, we check the Lindeberg condition: for any $\epsilon>0$,
\[
\frac{1}{Tp_T(1-p_T)}\sum_{t\leq T} \E \left[ X_t^2 {\bf 1}\left\{|X_t| \geq \epsilon \sqrt{Tp_T(1-p_T)}\right\} \right] = o(1).
\]
The condition is implied by the fact that $Tp_T\rightarrow \infty$, and from the sparsity conditions which imply that $p_T$ is bounded away from 1.}
\end{proof}

\

\begin{proof}[{\bf Proof of Theorem \ref{potential}}]

Let $\mathcal{G}^n$ denote the set of all possible subgraphs in the model:
\begin{equation}
\mathcal{G}^n = \cup_{\ell}G^n_\ell.
\end{equation}

Letting $G$ be the set of subgraphs that are truly formed, we can write
$$
\Pr(G) = \prod_{g_\ell \in G} \frac{\exp(\theta_{\ell})}{ \exp(\theta_{\ell}) + 1}
\prod_{g_\ell\notin G} \frac{1}{ \exp(\theta_{\ell}) + 1}.
$$
This can be rewritten as
$$
\Pr(G) =  \frac{\exp\left(\sum_{ g_\ell\in G}\theta_{\ell}\right)}{\prod_{g_\ell\in \mathcal{G}^n} \left(\exp(\theta_{\ell}) + 1\right)}.
$$
Note that
$$\prod_{g_\ell\in \mathcal{G}^n} \left(\exp(\theta_{\ell}) + 1\right)
=
\sum_{A\subset \mathcal{G}^n} \left(\exp(\sum_{\ell} |A\cap G^n_{\ell}|\theta_{\ell})\right).$$
Given that the number of $A$'s that have counts $s' \in \prod_\ell \{0,\ldots, \overline{S}_\ell^n\}$ is exactly
$$\prod_\ell \binom{\overline{S}_\ell^n}{s'_\ell},$$
it follows that
$$\prod_{ g_\ell\in \mathcal{G}^n} \left(\exp(\theta_{\ell}) + 1\right)
=
\sum_{s' \in \prod_\ell \{0,\ldots, \overline{S}_\ell^n\}}\left(\prod_\ell K^n_\ell(s'_\ell)\right) \exp(\sum_{\ell} s'_\ell \theta_{\ell}),$$
where $K^n_\ell(s'_\ell)=\binom{\overline{S}_\ell^n}{s'_\ell}$.
This means that we can write
\begin{equation}
\label{potent9}
{\rm P}(G) =\frac{\exp\left(\sum_{\ell} |G^n_{\ell}\cap G |\cdot\theta_{\ell}\right)}{\sum_{s' \in \prod_\ell \{0,\ldots, \overline{S}_\ell^n\}}\left(\prod_\ell K^n_\ell(s'_\ell)\right)
\exp\left(\sum_\ell s'_\ell \theta_\ell\right)
}.
\end{equation}
Note that there are $K_\ell^n(\widetilde{S})$ different collections of truly generated subgraphs $G$ that have the same value $\widetilde{S}$ and that each is
equally likely.
Thus
\begin{equation}
\label{potent}
{\rm P}(\widetilde{S}) =
\frac{K_\ell^n(\widetilde{S})\exp\left(\sum_{\ell} \widetilde{S}_\ell\cdot\theta_{\ell}\right)}{\sum_{s' \in \prod_\ell \{0,\ldots, \overline{S}_\ell^n\}}\left(\prod_\ell K^n_\ell(s'_\ell)\right)
\exp\left(\sum_\ell s'_\ell \theta_\ell\right)
},
\end{equation}
which is the same as
\eqref{potent2},
which completes the proof.
\end{proof}

\

\begin{proof}[{\bf Proof of Lemma \ref{lem:geomodel}}]
Having two randomly picked nodes bump into each other within a community, there is a $f^2+(1-f)^2$ probability of the nodes being of the same type, and a $1-(f^2+(1-f)^2)$ probability of them being of different types.\footnote{ To keep things simple, we consider equal-sized groups, but the argument extends with some adjustments to asymmetric sizes.}     Thus, the relative meeting frequency of different type links compared same type links is
\[
\frac{\pi_L(diff)}{\pi_L(same)}=\frac{1-(f^2+(1-f)^2)}{f^2+(1-f)^2}.
\]
For triangles, picking three individuals out of the community at any point in time would lead to a $f^3+(1-f)^3$ probability that all three are of the same type, and $1-(f^2+(1-f)^2)$ of them being of mixed types,
and so
\[
\frac{\pi_T(diff)}{\pi_T(same)}=\frac{1-(f^3+(1-f)^3)}{f^3+(1-f)^3}.
\]
It follows directly that for $f\in (0,1)$:
\begin{equation}
\label{ratios}
\frac{\pi_T(same)}{\pi_T(diff)}<\frac{\pi_L(same)}{\pi_L(diff)}.
\end{equation}
So different type triangles are more likely to have opportunities to form under this random mixing model than different type links.
In particular, note that
$$\frac{p_T(diff)}{p_T(same)} < \frac{p_L(diff)}{p_L(same)} {\rm \ if \  and \  only\  if\ }\left(\frac{P_T(diff)}{P_T(same)}\frac{\pi_T(same)}{\pi_T(diff)}\right)^{1/3} < \left(\frac{P_L(diff)}{P_L(same)}\frac{\pi_L(same)}{\pi_L(diff)}\right)^{1/2}.$$

In summary, given (\ref{ratios}),
a sufficient condition for
$\frac{p_T(diff)}{p_T(same)} < \frac{p_L(diff)}{p_L(same)}$ is that
$$(P_T(diff)/P_T(same))<(P_L(diff)/P_L(same))^{3/2}$$
which completes the argument.
\end{proof}

\

\section{A Useful Lemma on SERGM Statistic Domains}\label{A-sum-sect}

Although polynomial, the denominators of a SERGM can still involve large numbers of calculations.
There are substantial simplifications that can be made.
For example, we can estimate the denominator of (\ref{sergm-LT}) by
summing across some subset of $s'$ that has high probability rather than summing over the full set, as although $n^6$ is polynomial it still can be a large sum to do exhaustively as $n$ grows.
In particular, suppose that for some parameter $\beta$, the probability that the observed statistic ends up
taking a value in some set $A$ is at least $1-\varepsilon$: $\Pr_\beta(s\in A)\geq 1-\varepsilon$.
Then
by setting
\[
\overline{\mathrm{P}}_{\beta}\left((S_I,S_L,S_T)=s\right) =
\frac{N_{S}(s)\exp\left(\beta_I s_I+\beta_L s_L+\beta_T s_T\right)}{\sum_{s'\in A} N_{S}(s')\exp\left(\beta_I s_I'+\beta_L s_L'+\beta_T s_T'\right)}.
\]
it follows that for any $s\in A$
\[
\frac{1}{1-\varepsilon}\geq \frac{\overline{\mathrm{P}}_{\beta}\left((S_I,S_L,S_T)=s\right)}{\Pr_{\beta}\left((S_I,S_L,S_T)=s\right)} \geq 1.
\]
Thus, we can work with $\overline{\mathrm{P}}_{\beta}\left((S_I,S_L,S_T)=s\right)$ which only requires computations over $s'\in A$ in its denominator.

While we have to worry about determining $A$, which depends on $\beta$ which is presumed to be unknown to the researcher, if the probability of various statistics concentrates in a small neighborhood around the observed statistics with high probability, the above approximation becomes quite useful.   By observing $s$, and then choosing $A$ to be a large enough neighborhood around the observed $s$, one can be sure that under the true (unobserved) $\beta$,  $\Pr_\beta(A)\geq 1-\varepsilon$.   In particular, it is easy to choose a small set $A$ based on the observed $s$ over which to sum the denominator {\sl without knowing $\beta$}, and which with arbitrarily high probability will give an arbitrarily accurate estimate for large enough $n$.
A general version of such a result is Lemma \ref{A-sum}:

\begin{lem}
\label{A-sum}
Consider a SERGM with associated $\Pr_\beta(s)$ as described in (\ref{eq: sergm}) and consider any $\varepsilon>0$.
Suppose that for each $\beta$ in some set $B$ there exists $A_{\beta}$ such that $\Pr_\beta(A_{\beta})\geq 1-\varepsilon$.
For each $s$ let $A_s =\cup_{\beta\in B: s\in A_\beta} A_\beta$.
Letting
\[
\overline{\mathrm{P}}_{\beta}\left(s\right) =
\frac{K_S(s)\exp\left(\beta s\right)}{\sum_{s'\in A_s} K_S(s')\exp\left(\beta s'\right)},
\]
it follows that for any $\beta\in B$ and $s\in A_\beta$:
\[
\frac{1}{1-\varepsilon}\geq \frac{\overline{\mathrm{P}}_{\beta}\left(s\right)}{\Pr_{\beta}\left(s\right)} \geq 1.
\]
\end{lem}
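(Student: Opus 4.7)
The plan is to observe that the ratio $\overline{\Pr}_\beta(s)/\Pr_\beta(s)$ simplifies dramatically because the numerators $K_S(s)\exp(\beta s)$ cancel. What remains is just the ratio of the two normalizing sums, and so the entire statement reduces to showing
\[
1 \leq \frac{\sum_{s'} K_S(s')\exp(\beta s')}{\sum_{s'\in A_s} K_S(s')\exp(\beta s')} \leq \frac{1}{1-\varepsilon}.
\]
The lower bound is immediate: the denominator on the right sums a subset of non-negative terms from the numerator, so the ratio is at least one, giving $\overline{\Pr}_\beta(s)\geq \Pr_\beta(s)$.

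For the upper bound the main step is to exploit the definition of $A_s$ to insert $A_\beta$ into the inequality chain. Because we have fixed $\beta \in B$ and $s \in A_\beta$, the pair $(\beta, A_\beta)$ appears in the union defining $A_s$, so $A_\beta \subseteq A_s$. Hence
\[
\sum_{s'\in A_s} K_S(s')\exp(\beta s') \;\geq\; \sum_{s'\in A_\beta} K_S(s')\exp(\beta s').
\]
Next I would use the hypothesis $\Pr_\beta(A_\beta)\geq 1-\varepsilon$, which when rewritten via the definition of the SERGM states exactly that
\[
\sum_{s'\in A_\beta} K_S(s')\exp(\beta s') \;\geq\; (1-\varepsilon)\sum_{s'} K_S(s')\exp(\beta s').
\]
Chaining these two inequalities and dividing yields the desired upper bound $1/(1-\varepsilon)$.

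There is essentially no obstacle here; the only point that requires any care is recognizing that $A_\beta \subseteq A_s$ whenever $s\in A_\beta$, which is just unpacking the definition of $A_s$ as a union over all $\beta'\in B$ with $s\in A_{\beta'}$. After that, the proof is a one-line manipulation of the exponential-family normalizing constant. I would present it compactly as: cancel the numerators, use the set inclusion to pass from $A_s$ to $A_\beta$, and apply the hypothesized probability bound.
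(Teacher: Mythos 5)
Your proof is correct and follows essentially the same route as the paper's: both rest on the observation that $A_\beta\subseteq A_s$ when $s\in A_\beta$ together with the translation of $\Pr_\beta(A_\beta)\geq 1-\varepsilon$ into a bound on the normalizing sums. The paper merely packages the intermediate step as an auxiliary quantity $\widehat{\mathrm{P}}_\beta$ with denominator summed over $A_\beta$, whereas you chain the inequalities on the sums directly; the content is identical.
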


\begin{proof}[{\bf Proof of Lemma \ref{A-sum}}]

Let
\[
\widehat{\mathrm{P}}_{\beta}\left(s\right) =
\frac{K_S(s)\exp\left(\beta s\right)}{\sum_{s'\in A_\beta} K_S(s')\exp\left(\beta s'\right)}.
\]
Since $\Pr_\beta(A_{\beta})\geq 1-\varepsilon$,
it follows that
\[
\frac{1}{1-\varepsilon}\geq
\frac{\sum_{s'} K_S(s')\exp\left(\beta s'\right)}{\sum_{s'\in A_\beta} K_S(s')\exp\left(\beta s'\right)} \geq 1.
\]
this implies that for any $\beta$ and $s\in A_\beta$:
\[
\frac{1}{1-\varepsilon}\geq \frac{\widehat{\mathrm{P}}_{\beta}\left(s\right)}{\Pr_{\beta}\left(s\right)} \geq 1.
\]
Note also that for any $\beta$ and $s\in A_\beta$
\[
\widehat{\mathrm{P}}_{\beta}\left(s\right)\geq \overline{\mathrm{P}}_{\beta}\left(s\right) \geq \Pr_{\beta}\left(s\right).
\]
The claimed result follows from the last two sets of inequalities.
\end{proof}

\newpage

\setcounter{page}{1}

\setcounter{table}{0}
\renewcommand{\thetable}{C.\arabic{table}}

\setcounter{figure}{0}
\renewcommand{\thefigure}{C.\arabic{figure}}

\section{Online Appendix: Extension to Continuous and Interdependent Covariates}\label{ext}

For simplicity, we have focused on models in which covariates are captured by indexing subgraphs by covariates.
This encompasses covariates that take on a finite set of values or are approximated by a finite set of values, and is a flexible approach,
although it may not work as well with fully continuous data that take on a wide range of values. Such continuous covariates can also easily be handled, as our models and results have natural extensions. 

We discuss the SUGM extension.
Let node $i$ be associated with a covariate vector $X_{i}$ that lies in a compact subset of $\mathbb{R}^{d}$.
Let the probability that a given subnetwork $g_\ell \in G_\ell$ forms be a function
$p_\ell^n(X_\ell;\gamma)$ of the vector of node covariates, where $\gamma$ is some vector of parameters.

Estimating the parameters $\gamma$ depends on the functional form of $p_\ell^n(x_\ell;\gamma)$.
It could take many forms, such as a linear probability model, a logistic form, etc.  Consistency and asymptotic normality of the estimators depend on the rate at which $\gamma$ tends to extremes -- thereby affecting the probabilities of various subgraphs and their dependence on covariate values.  We provide some sufficient conditions for consistency and asymptotic normality of the estimators below.

We consider an environment in which nodes draw covariates that can be continuous and even interdependent. Then, based on their characteristics, they form a graph via the SUGM process.  We are interested in estimating both probability functions as well as possible parameters which may correspond to random utility foundations (e.g., coefficients in a logistic regression term).

\

\subsection*{Environment}

Every node $i\in\left\{ 1,...,n\right\} $ draws a $d$-dimensional
covariate vector $x_{i}^{n}\in\mathcal{X}$. For simplicity we let
$\mathcal{X}=\prod_{k=1}^{d}\left[x_{L,k},x_{H,k}\right]$ be a $d$-dimensional
product of intervals of $\mathbb{R}$.\footnote{We will allow these covariates to be interdependent. The substantive assumption we need to make is that the sequences of design matrices and have full rank.} Letting $x_\ell$ denote the $d \times n$ matrix of data, we assume $x_\ell x_\ell'$ has full rank along the sequence. For expositional simplicity in our proofs we considering a sequence of fixed-regressors, $x_{\ell,n}$ where $n$ indexes the sequence. Clearly stochastic regressors can be accomodated.

\subsubsection*{Example \ref{ext}.1} Let $x^n_i = (1,u^n_i)$, where $u^n_i \in [0,1]$ such that the design matrix carries full rank.  In the simulation exercise corresponding to this example, we will draw them as independent $U[0,1]$ random variables.

\

\subsection*{SUGM Formation}

Given characteristics, the $n$ nodes engage in a SUGM graph formation
process. The realized data sequence consists of a triangular array
of random graphs and covariate vectors drawn from a random field $\left\{ \left(g^{n},\left(x_{1}^{n},...,x_{n}^{n}\right)\right):\ n\in\mathbb{N}\right\} $.
The researcher observes this for a given $n$ and a given realization.

Specifically, consider a set of nicely ordered statistics $\left(S_{\ell}^{n}\right)$
again with each statistic counting subgraphs $H_{\ell}$ with $m_{\ell}$
nodes, where the statistics $S_{\ell}$ do not condition on covariates.
We are therefore counting, for instance, 4-cliques, triangles (not
in 4-cliques), and unsupported links.

A group of size $m_{\ell}$ forms with a probability $p_{\ell}^{n}(x_{\ell,j};\gamma_{\ell})$
which depends on some function of the $m_{\ell}$ individuals' characteristics
and a parameter $\gamma_{\ell}$, whose value in theory may depend
on $n$.%
\footnote{It is easy to modify this such that $f_{\ell}=f_{\ell,i}$ so that
every node makes its own decision to be in the group or not, and its
covariates are not treated symmetrically with the other $m_{\ell}$
nodes.%
}

To make things concrete, examples of $p_{\ell}^{n}(x_{\ell};\gamma)$
include:
\begin{enumerate}
\item a linear probability model with uniform link function $p_{\ell}^{n}(x_{\ell,j};\gamma_{\ell})=\gamma_{\ell,n}'x_{j,\ell},$
\item a logistic regression model $p_{\ell}^{n}(x_{\ell,j};\gamma_{\ell})=\frac{\exp(\gamma_{\ell,n}'x_{j,\ell})}{1+\exp(\gamma_{\ell,n}x_{j,\ell})}$,
\end{enumerate}
for $j\in\{1,...,\overline{S}_{\ell}(g)\}$. It should be clear that
there are any number of examples here that could be used and the choice
is up to the modeler's discretion as to what best describes the nature
of the problem at hand.

A truly generated object is a subgraph on $m_{\ell}$ nodes that is
generated in the $\ell$th phase independently with probability $p_{\ell}^{n}(x_{\ell,j};\gamma_{\ell})$.
Incidental generation may occur and the union is the graph $g^{n}$.

The group-level characteristic, $x_{\ell}$, is of course a function
of individual level characteristics: $x_{\ell,i_{1},...,i_{m_{\ell}}}=f_{\ell}\left(x_{i_{1}},....,x_{i_{\ell}}\right)$.
For example, $f_{\ell}\left(x_{i},x_{j}\right)=\left|x_{i}-x_{j}\right|$.

\subsubsection*{Example \ref{ext}.1}[Continued] The sequence of graphs $g^n$ are triangles and links-based.  A triangle forms with probability defined by log-odds
\[
\log \frac{p^n_T(x_T;\gamma_T)}{1-p^n_T(x_T;\gamma_T)} = \gamma_{0,n,T}'x_T =(\alpha_{0,n,T},\beta_{0,T} ) x_T
\]
where $x_T = (1, u_T)$ and $u_T = (|u_i - u_j| + |u_j - u_k| + |u_k - u_i| ) / 3$.

A link forms with probability
\[
\log \frac{p^n_L(x_L;\gamma_L)}{1-p^n_L(x_L;\gamma_L)} = \gamma_{0,n,L}'x_L = (\alpha_{0,n,L},\beta_{0,L} ) x_L
\]
where $x_L = (1, u_L)$ and $u_L = |u_i - u_j|  /2$.

Pairs and triples that are further in covariate space are less likely to link.

\

\subsection*{Estimation}

The above defines a well-defined network-generation process.  As before, we need a relative sparsity condition to hold so that when we count a structure, with probability approaching one it was not incidentally generated.  Here we provide a sufficient condition for relative
sparsity hold as the continuous covariates vary. The condition is
that given $m_{\ell}$ nodes, no matter what the value of each covariate
is among these nodes, the probability of forming the subgraph isomorphic
to $H_{\ell}$ shrinks at the same as $n$ grows to infinity. This
will ensure the relative rate of incidentally generated objects is
unaffected by the particular values of the covariates.%
\footnote{Such an assumption excludes the possibility that individuals who are
close in wealth are more likely to form pairs than triads for wealth
levels below some threshold but beyond this threshold it is when individuals
are far from others in wealth that pairs are more likely to form than
triads. (More specifically, in this example a wealth covariate should
not be used, but rather, a wealth covariate with an indicator for
whether individuals are below or above the threshold must be used.)%
}

\

\begin{lem}Given a growing sequence of graphs with associated covariates and
covariate space $\mathcal{X}$, and probability functions $p_{\ell}^{n}(x_{\ell};\gamma_{\ell})$
smooth in both arguments,
\[
\min_{x_{1},...,x_{m_{\ell}}\in\mathcal{X}^{m_{\ell}}}p_{\ell}^{n}\left(x_{\ell};\gamma_{\ell}\right)=O\left(\max_{x_{1},...,x_{m_{\ell}}\in\mathcal{X}^{m_{\ell}}}p_{\ell}^{n}\left(x_{\ell};\gamma_{\ell}\right)\right).
\]
If relative sparsity
is satisfied at $x_{i}=1$ for all $i$, then relative sparsity is
satisfied for any sequence of covariates.\end{lem}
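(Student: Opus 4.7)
My plan is to prove this in two stages: first I would show that $\min_{x}p_\ell^n$ and $\max_{x}p_\ell^n$ are of the same asymptotic order as $n\to\infty$, and then use that equivalence to transfer the relative-sparsity condition from the reference configuration $x_i=\mathbf{1}$ to an arbitrary sequence of covariates.

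For the first stage, I would exploit compactness of $\mathcal{X}^{m_\ell}$ (a finite product of closed intervals) together with continuity of $p_\ell^n(\cdot;\gamma_{\ell,n})$: both extrema are attained at some $x_{\min}^n,x_{\max}^n$. I would then bound the log-ratio $\log\bigl(p_\ell^n(x_{\max}^n;\gamma_{\ell,n})/p_\ell^n(x_{\min}^n;\gamma_{\ell,n})\bigr)$ uniformly in $n$ by invoking a Lipschitz estimate on $\log p_\ell^n(\cdot;\gamma_{\ell,n})$ along a straight-line path from $x_{\min}^n$ to $x_{\max}^n$. For the leading families used in the paper, linear-probability and logistic specifications in which only the intercept component of $\gamma_{\ell,n}$ drifts with $n$ while the slope coefficients live in a fixed compact set, this Lipschitz constant is bounded by the slope norm times $\operatorname{diam}(\mathcal{X}^{m_\ell})$, which is finite and independent of $n$. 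This yields $\max_{x}p_\ell^n\le C\cdot\min_{x}p_\ell^n$ for a constant $C$ not depending on $n$, delivering the first assertion.

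For the second stage, the expected count factors as $\E_{p^n}[S_\ell^n(g)]=\sum_{j}p_\ell^n(x_{\ell,j};\gamma_{\ell,n})$ and is therefore sandwiched by
\[
c_1\,\overline{S}_\ell^n\cdot p_\ell^n(\mathbf{1};\gamma_{\ell,n})\;\le\;\E_{p^n}[S_\ell^n(g)]\;\le\;c_2\,\overline{S}_\ell^n\cdot p_\ell^n(\mathbf{1};\gamma_{\ell,n})
\]
for finite positive constants $c_1,c_2$ depending only on $C$ and on the subgraph type. Substituting these two-sided bounds (for $\ell$ and for each $\ell_j$ in a generating class $J$) into the relative-sparsity ratio $\prod_{j\in J}\E_{p^n}(S^n_{\ell_j}(g))/(n^{M_J}\E_{p^n}(S^n_\ell(g)))$, the constants cancel up to an overall multiplicative factor, and the ratio at the general covariate sequence is pinned between constant multiples of the same ratio evaluated at $x_i=\mathbf{1}$. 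Hence if the reference ratio tends to zero, so does the ratio along any sequence of covariates, establishing relative sparsity in full generality.

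The main obstacle is making the uniform-in-$n$ control of Stage 1 airtight. The lemma asserts only that $p_\ell^n$ is smooth in its arguments and does not explicitly quantify the Lipschitz constant as $\gamma_{\ell,n}$ drifts. Without further structure, the ratio $\max/\min$ could in principle blow up, for instance in a logistic model where a slope coefficient also grew without bound. The conclusion really requires the tacit hypothesis, natural in the sparse-asymptotics setting of the paper, that the drifting component of $\gamma_{\ell,n}$ is an intercept (location) parameter while any slope coefficients on the covariates remain in a fixed compact set; I would make this explicit as a regularity condition in the statement. With that in place, the rest is compactness and bookkeeping.
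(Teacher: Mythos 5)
Your Stage 2 is, in substance, the paper's entire proof: the paper's argument is two sentences long --- bound the numerator of the relative-sparsity ratio by replacing each incidental-generation probability with its maximum over covariates, bound the denominator by replacing the true-generation probability with its minimum, and note that all of these are of the same order as the values at $x_i=1$ ``by hypothesis.'' That last phrase reveals the key difference in reading: the paper treats the displayed order condition $\min_x p_\ell^n = O(\max_x p_\ell^n)$ as an \emph{assumption} of the lemma (it is introduced in the surrounding text as ``a sufficient condition''), whereas you read it as a conclusion to be derived and devote your Stage 1 to proving it. Your Stage 1 is therefore extra work relative to the paper, but it is not wasted: your observation that smoothness alone cannot deliver a uniform-in-$n$ bound on $\max/\min$ when $\gamma_{\ell,n}$ drifts --- and that one needs the drifting component to be an intercept while slope coefficients stay in a compact set --- is exactly right, and it identifies the tacit regularity the paper leaves implicit by simply postulating the order condition. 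One small caution on your Stage 2 bookkeeping: the sandwich should be stated for the probabilities entering the numerator and denominator of the ratio separately (max for incidentals, min for the truly generated count), rather than for $\E_{p^n}[S_\ell^n(g)]$ symmetrically, but since max, min, and the value at $x_i=1$ are all of the same order, the constants cancel exactly as you describe and the conclusion follows.
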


\begin{proof}We can always replace incidental generation probabilities
with their maximal values over the covariates, the truly generating
probability with its minimal probability. These are all of the same
order as when evaluated with $x_{i}=1$ by hypothesis.\end{proof}

\

\noindent Of course this isn't the only condition to maintain relative sparsity, but it may often be a natural condition to assume.

\

We now show properties of estimators from the two examples of $p_\ell^n(x_\ell; \gamma_\ell)$ we have discussed.

\subsubsection*{Linear Probability Model}

Consider the linear probability model discussed above:
$$ p^n_\ell (x_\ell ; \gamma_\ell)  = \sum_k \gamma^k_\ell x_{k,\ell}, \ k=1,...,d$$
where $\gamma_{0,n,\ell}^k \rightarrow 0$ as $n\rightarrow \infty$. It is straightforward to check that the following is true.

\begin{thm} Assume $\left\Vert \gamma_{0,n,\ell} \right\Vert_1 = \Theta(1/n^{m_\ell - h_\ell})$\footnote{$f_n \in \Theta(g_n)$ means $\exists k_1>0, \exists k_2>0, \exists n_0>0, \forall n > n_0$ such that $g_n k_1 < f_n < g_n k_2$.} with $0<h_\ell<m_\ell$ and the $h_\ell$ are such that relative sparsity condition is satisfied.  Then
$$\sqrt{n^{m_\ell+h_\ell}}\left( \widehat{\gamma} - \gamma_{0,n,\ell} \right) \wkto \mathcal{N}(0,V)$$ where $V = \plim \frac{1}{n^{m_\ell}}(x_\ell'x_\ell)^{-1} (\frac{n^{h_\ell}}{n^{m_\ell}}x_\ell' \epsilon_\ell \epsilon_\ell'x_\ell) \frac{1}{n^{m_\ell}}(x_\ell'x_\ell)^{-1}$.
\end{thm}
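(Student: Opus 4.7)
My plan is to treat $\widehat{\gamma}$ as the OLS estimator from regressing observed subgraph indicators on the group‐level covariates, and to reduce the asymptotic analysis to a Lindeberg–Feller CLT for a triangular array of independent (but heteroskedastic) Bernoulli–shifted summands. To that end, index the $\overline{S}^n_\ell = \Theta(n^{m_\ell})$ candidate subgraphs of type $\ell$ by $j$, let $\widetilde{Y}_j = \mathbf{1}\{\text{subgraph }j\text{ is truly generated}\}$, let $Y_j = \mathbf{1}\{\text{subgraph }j\text{ is observed in }g\}$, and let $\widetilde{\epsilon}_j := \widetilde{Y}_j - \gamma_{0,n,\ell}'x_{\ell,j}$. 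Stack these into $\widetilde{Y}$, $Y$, $\widetilde{\epsilon}$ and the $\overline{S}^n_\ell \times d$ design $x_\ell$. The OLS estimator admits the algebraic decomposition
\[
\widehat{\gamma}-\gamma_{0,n,\ell} \;=\; (x_\ell'x_\ell)^{-1}x_\ell'\widetilde{\epsilon} \;+\; (x_\ell'x_\ell)^{-1}x_\ell'(Y-\widetilde{Y}).
\]

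Step one will show the second (incidental‐generation) term is $o_p\bigl(n^{-(m_\ell+h_\ell)/2}\bigr)$. Relative sparsity, exactly as deployed in the proof of Theorem~\ref{sparse}, implies that the number of incidentally generated subgraphs of type $\ell$ is of strictly smaller order than $\E[\widetilde{S}^n_\ell(g)] = \Theta(n^{h_\ell})$, so $\|Y-\widetilde{Y}\|_1 = o_p(n^{h_\ell})$. Combined with the full–rank assumption giving $\|(x_\ell'x_\ell)^{-1}\|_{\mathrm{op}} = O(n^{-m_\ell})$ and $\|x_\ell\|_{\mathrm{op}} = O(n^{m_\ell/2})$, this bias is dominated by the stochastic term at the stated rate.

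Step two handles the leading term $(x_\ell'x_\ell)^{-1}x_\ell'\widetilde{\epsilon}$. Conditional on the (fixed or full–rank stochastic) design, $\widetilde{\epsilon}_j$ are independent, mean zero, with per–observation variance $p_{\ell,j}(1-p_{\ell,j}) = \Theta(n^{-(m_\ell - h_\ell)})$ under the assumption $\|\gamma_{0,n,\ell}\|_1 = \Theta(n^{-(m_\ell - h_\ell)})$. For a fixed direction $c\in\mathbb{R}^d$, the sum $c'x_\ell'\widetilde{\epsilon} = \sum_j (c'x_{\ell,j})\widetilde{\epsilon}_j$ is a triangular array whose total variance equals $c'\!\left(\sum_j p_{\ell,j}(1-p_{\ell,j})x_{\ell,j}x_{\ell,j}'\right)\!c$, i.e.\ the ``meat'' appearing inside $V$. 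Lindeberg's condition reduces to a uniform bound on $\max_j |c'x_{\ell,j}|^2 p_{\ell,j}$ divided by the total variance; the uniform compactness of $\mathcal{X}$ and the fact that $p_{\ell,j}T \to \infty$ with $T = \overline{S}^n_\ell$ (the growing condition, paralleling footnote~\ref{LF_clt}) give this, so the CLT applies to $c'x_\ell'\widetilde{\epsilon}$. The Cramér–Wold device upgrades this to joint asymptotic normality of $x_\ell'\widetilde{\epsilon}$ after the appropriate normalization.

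Step three assembles the pieces. Multiplying by $\sqrt{n^{m_\ell + h_\ell}}$ and inserting the identity $(x_\ell'x_\ell)^{-1} = n^{-m_\ell}\bigl(n^{-m_\ell}x_\ell'x_\ell\bigr)^{-1}$ yields the standard sandwich form; the outer factors converge in probability to the limit of $(n^{-m_\ell}x_\ell'x_\ell)^{-1}$ by the full–rank assumption, and the middle factor to the $\plim$ inside $V$ by the CLT of step two. Slutsky's theorem then gives $\sqrt{n^{m_\ell+h_\ell}}(\widehat{\gamma}-\gamma_{0,n,\ell}) \wkto \mathcal{N}(0,V)$. I expect the main obstacle to be the bookkeeping in step one: controlling $Y-\widetilde{Y}$ uniformly over the $\Theta(n^{m_\ell})$ candidate subgraphs at the required rate requires leveraging the minimal–generating–class decomposition from the proof of Theorem~\ref{sparse} and, crucially, the uniform lower bound on $p_\ell^n$ over $\mathcal{X}^{m_\ell}$ from the preceding lemma, so that every candidate subgraph—not merely those corresponding to ``nice'' covariate configurations—inherits the sparsity–driven negligibility of incidentals.
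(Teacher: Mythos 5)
The paper never actually proves this theorem --- it states ``We omit the proof, which is entirely standard'' --- so the only in-paper template to compare against is the logistic-case argument (Lemma \ref{lem:hjort-pollard} plus Lemma \ref{lem:mismeasure}), which has exactly the two-part structure you propose: a Lindeberg--Feller CLT for the correctly specified triangular array, plus a negligibility argument for incidentally generated (``invalid'') observations. Your OLS sandwich decomposition is the standard route the authors are gesturing at, so in approach you match their intent.

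There are, however, two genuine gaps. First, the rate claim in your step one does not follow from what you cite. Relative sparsity delivers only that incidentals are a vanishing \emph{fraction} of the $\Theta(n^{h_\ell})$ truly generated subgraphs, i.e.\ $\|Y-\widetilde{Y}\|_1 = o_p(n^{h_\ell})$; combined with $\|(x_\ell'x_\ell)^{-1}\| = O(n^{-m_\ell})$ and bounded covariates this bounds the incidental term only by $o_p(n^{h_\ell-m_\ell})$. But by your own step-two computation the stochastic term has standard deviation $\Theta\bigl(n^{-m_\ell}\cdot n^{h_\ell/2}\bigr) = \Theta(n^{h_\ell/2-m_\ell})$, which is of strictly \emph{smaller} order whenever $h_\ell>0$. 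So the incidental term is not dominated unless the incidental count is actually $o_p(n^{h_\ell/2})$ --- a quantitative strengthening that relative sparsity does not supply in general (the paper's own Lemma \ref{lem:mismeasure} smuggles in the analogous requirement by taking the invalid share to be $z_n = n^{-h-\delta}$ for some $\delta>0$). You need either to add such a rate hypothesis or to extract the $o_p(n^{h_\ell/2})$ bound explicitly from the generating-class estimates in the proof of Theorem \ref{sparse}. Second, your steps two and three do not cohere with the stated normalization: with $\mathrm{var}(x_\ell'\widetilde{\epsilon}) = \Theta(n^{h_\ell})$ and $(x_\ell'x_\ell)^{-1} = \Theta(n^{-m_\ell})$, the nondegenerate scaling is $\sqrt{n^{2m_\ell-h_\ell}}$, which equals the theorem's $\sqrt{n^{m_\ell+h_\ell}}$ only when $m_\ell = 2h_\ell$; correspondingly the ``meat'' $\frac{n^{h_\ell}}{n^{m_\ell}}x_\ell'\epsilon_\ell\epsilon_\ell'x_\ell$ is $\Theta(n^{2h_\ell-m_\ell})$ and has no nondegenerate probability limit in general. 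Asserting in step three that the middle factor converges ``by the CLT of step two'' passes over this mismatch; a careful version of your argument would either correct the normalization or flag the theorem's stated rate as an error.
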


\noindent We omit the proof, which is entirely standard. We get super-consistent rates as the parameters are going to zero rapidly, but not too rapidly so that a central limit theorem still applies. Because relative sparsity applies, only a vanishing proportion of $\ell$-objects are incidentally generated.

\

\subsubsection*{Logistic Regression}

We turn to our main example where $p_{\ell}^{n}(x_{\ell,j};\gamma_{\ell})$ is given by a logistic link function. In all that follows $\gamma_{0,n}$ consists of elements that are either order constant or tending to $-\infty$. The rates will be set in the assumptions.

\begin{thm} Assume that $\left\Vert \gamma_{0,n} \right\Vert_1 \cdot \sup_{x \in \mathcal{X}}\left\Vert x \right\Vert_\infty \lesssim h_\ell \cdot \log n^{m_\ell}$ for $0\leq h_\ell < m_\ell$.  Additionally, assume that relative sparsity holds. Then
$$ J_n^{1/2} \left( \gammahat_\ell -  \gamma_{0,\ell,n} \right) \wkto \mathcal{N} \left(0, I_d\right).$$
\end{thm}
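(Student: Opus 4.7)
The plan is to reduce the logistic-SUGM MLE to a standard logistic regression over a triangular array of independent Bernoulli trials, and then invoke a Lindeberg--Feller style CLT of the kind already used in Theorem \ref{sparse}. I proceed in four steps.

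\textbf{Step 1: Reduce to truly generated subgraphs.} The MLE uses the observed counts, but the underlying data-generating process makes each potential subgraph $j \in \{1,\ldots,\overline{S}^n_\ell\}$ independently ``truly generated'' with probability $p_\ell^n(x_{\ell,j};\gamma_{0,n})$. From the proof of Theorem \ref{sparse}, relative sparsity implies that the ratio of incidentally generated to truly generated subgraphs of type $\ell$ tends to zero in probability. Consequently, the observed $y_j = \mathbf{1}\{\text{subgraph } j \text{ appears in } g\}$ differs from the truly-generated indicator $\tilde y_j$ only on a vanishing fraction of the $\overline{S}^n_\ell$ potential subgraphs, and I will show this perturbation is $o_p(J_n^{1/2})$ in the score. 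It then suffices to analyze the idealized MLE in which $\tilde y_j$ is directly observed.

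\textbf{Step 2: Standard logistic regression machinery.} The idealized log-likelihood is $\mathcal{L}_n(\gamma) = \sum_{j=1}^{\overline{S}^n_\ell}\bigl[\tilde y_j \gamma' x_{\ell,j} - \log(1+\exp(\gamma'x_{\ell,j}))\bigr]$, with score $U_n(\gamma) = \sum_j \bigl(\tilde y_j - p_\ell^n(x_{\ell,j};\gamma)\bigr) x_{\ell,j}$ and information $J_n(\gamma) = \sum_j p_\ell^n(x_{\ell,j};\gamma)(1-p_\ell^n(x_{\ell,j};\gamma)) x_{\ell,j} x_{\ell,j}'$. Strict concavity of $\mathcal{L}_n$ gives existence and uniqueness of $\widehat{\gamma}_\ell$ once $J_n(\gamma_{0,n})$ is invertible, which follows from the full-rank assumption on $x_\ell x_\ell'$ combined with the rate control in the next step.

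\textbf{Step 3: CLT for the score.} The rate condition $\|\gamma_{0,n}\|_1 \sup_x \|x\|_\infty \lesssim h_\ell \log n^{m_\ell}$ guarantees $p_\ell^n(x_{\ell,j};\gamma_{0,n}) \gtrsim n^{-(m_\ell - h_\ell)}$, so the expected number of truly generated subgraphs is of order $n^{h_\ell} \to \infty$, and the smallest eigenvalue of $J_n(\gamma_{0,n})$ diverges at the same rate (using full rank of the design matrix). I will verify the Lindeberg condition for the triangular array $\{(\tilde y_j - p_\ell^n(x_{\ell,j};\gamma_{0,n}))x_{\ell,j}\}_j$ exactly as in footnote~\ref{LF_clt}: each summand is bounded by $\sup_x \|x\|_\infty$, while the standard deviation of the sum diverges as $n^{h_\ell/2}$, so the truncated second moments vanish after normalization. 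This yields $J_n(\gamma_{0,n})^{-1/2} U_n(\gamma_{0,n}) \wkto \mathcal{N}(0,I_d)$.

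\textbf{Step 4: From score to MLE.} A second-order Taylor expansion gives $0 = U_n(\widehat{\gamma}_\ell) = U_n(\gamma_{0,n}) - J_n(\bar\gamma)(\widehat{\gamma}_\ell - \gamma_{0,n,\ell})$ for some $\bar\gamma$ on the segment between $\widehat{\gamma}_\ell$ and $\gamma_{0,n,\ell}$. A standard consistency argument (using strict concavity and the divergence of the minimal eigenvalue of $J_n$) shows that $\widehat{\gamma}_\ell$ lies in a shrinking neighborhood of $\gamma_{0,n,\ell}$, on which the logistic derivatives are Lipschitz, so $J_n(\bar\gamma)^{-1/2} J_n(\gamma_{0,n}) J_n(\bar\gamma)^{-1/2} \to I_d$. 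Combining with Step 3 gives $J_n^{1/2}(\widehat{\gamma}_\ell - \gamma_{0,n,\ell}) \wkto \mathcal{N}(0, I_d)$.

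The main obstacle is Step 1: the observed $y_j$ is typically $\tilde y_j$ \emph{or} $1$ when an incidental generator occurs, so the perturbation $y_j - \tilde y_j$ is nonnegative and correlated across $j$ through shared generating subgraphs. I will need to bound the resulting bias in the score by $O_p(\text{incidental count} \cdot \sup_x \|x\|_\infty)$ and compare with $\|J_n(\gamma_{0,n})^{1/2}\| \asymp n^{h_\ell/2}$; relative sparsity gives incidental counts that are $o_p(n^{h_\ell})$, so dividing by $n^{h_\ell/2}$ leaves a term that is $o_p(n^{h_\ell/2})$ only after a careful check that the log factors from the rate condition do not overwhelm the gap. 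This is the delicate place where the assumed rate on $\|\gamma_{0,n}\|_1$ interacts with the sparsity hypothesis.
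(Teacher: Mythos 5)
Your overall architecture is the same as the paper's: the paper proves this theorem by combining a CLT for the MLE of a sparse logistic regression over a triangular array (Lemma \ref{lem:hjort-pollard}) with a measurement-error correction for incidentally generated subgraphs (Lemma \ref{lem:mismeasure}), which is exactly your Steps 2--4 plus Step 1. The one methodological difference is in the core normality argument: the paper uses the Hjort--Pollard convexity device, writing the reparametrized log-likelihood increment as $U_n's - \tfrac{1}{2}s's - r_n(s)$ and reading off consistency and asymptotic normality in a single step from convergence of this convex function to a quadratic; you instead propose the classical route of a score CLT followed by a Taylor expansion around $\gamma_{0,n}$. Your route is viable but strictly harder here, because with an intercept drifting like $-h_\ell\log n^{m_\ell}$ and success probabilities vanishing polynomially, the "standard consistency argument" in your Step 4 must deliver consistency at a rate (so that the Hessian at the intermediate point $\bar\gamma$ is asymptotically equivalent to $J_n(\gamma_{0,n})$), and that rate argument is essentially what the convexity trick lets the paper avoid.

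The genuine gap is the one you flag yourself at the end, and your own arithmetic shows it does not close as budgeted. If you bound the score perturbation from incidental generation by $(\text{incidental count})\cdot\sup_x\|x\|_\infty$ and relative sparsity only gives an incidental count that is $o_p$ of the truly generated count ($\asymp n^{h_\ell}$ in your parametrization), then after multiplying by $J_n^{-1/2}\asymp n^{-h_\ell/2}$ you are left with $o_p(n^{h_\ell/2})$, which does not vanish. The paper's Lemma \ref{lem:mismeasure} closes this differently: it normalizes the invalid-observation contribution $\sum_{n^*<i\leq n}(y_{in}-q_{in})x_{in}$ by the \emph{square root} of the invalid count, treating it as an $O_p(1)$ fluctuation at that scale, so that the normalized remainder is $O_p\bigl(\sqrt{z_n n}/\sqrt{n^{1-h}}\bigr)=O_p\bigl(\sqrt{z_n n^{h}}\bigr)=o_p(1)$ precisely under the relative-sparsity rate $z_n n = o(n^{1-h})$. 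To repair your Step 1 you must either justify that square-root scaling for the incidental terms (which is delicate, since incidental generation only ever flips $y_j$ from $0$ to $1$, so these residuals share a sign and do not obviously center) or impose the stronger rate $z_n = o(n^{-(1+h_\ell)/2})$ on the incidental fraction, which is more than relative sparsity alone provides. As written, the proposal identifies the right pressure point but does not supply the idea that resolves it.
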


\begin{proof} Follows from Lemma \ref{lem:mismeasure}. The first hypothesis of the lemma is the same as that in Lemma \ref{lem:hjort-pollard} and is assumed here for each $\ell$. Additionally, assumption (2) of Lemma \ref{lem:mismeasure} follows from relative sparsity. Relative sparsity implies that the $h_\ell$ are ordered such that for every $\ell$ share of incidentally generated $\ell$-th objects goes to zero, corresponding to the number of incidentals being on the order of $O_p(z_{n,\ell} \cdot n^{m_\ell})$ in Lemma \ref{lem:mismeasure}.
\end{proof}

\

This means that the rate of convergence of the parameters governing the probability is given by $\sqrt{n^{m_\ell - k_\ell}}$ where $0 < h_\ell < m_\ell$ tunes the sparsity of the model.

\

\subsubsection*{Example \ref{ext}.1}[Continued] Consider $\alpha_{0,L}^n = \log (1/n^{0.7})$ and $\alpha_{0,T}^n = \log (1/n^{1.75})$,  $\beta_{0,L} = -2$ and $\beta_{0,T} = -3$.   Then triangles form at order $1/n^{1.75}$ and links at order $1/n^{0.7}$.  The theorem shows that all parameters have estimators that are consistent and, in the case of links, are asymptotically normally distributed at $\sqrt{n^{1.3}}$-rate and $\sqrt{n^{1.25}}$-rates (for links and triangles, respectively).

\

For some intuition as to why this works, first consider the case of a triangular array of $n$ i.i.d. Bernoulli random variables distributed with probability $p_n \downarrow 0$ at a rate $\Theta(1/n^h)$ for $0<h<1$.  Then the log odds is given by $\log\frac{p_n}{1-p_n}=\alpha_{n}$ where $\alpha_n = -h \log (C\cdot n)$ for some constant $C > 0$.  It is easy to show by the Lindeberg-Feller central limit theorem for triangular arrays that in this case  $$\sqrt{n}\left(\frac{\widehat{p}_{n}-p_{n}}{\sqrt{p_{n}}}\right)\rightsquigarrow\mathcal{N}\left(0,1\right)$$ provided $p_n n\rightarrow \infty$.
 This implies that  $\sqrt{np_{n}}\left(\widehat{\alpha}-\alpha_{n}\right)=\sqrt{n^{1-h}}\left(\widehat{\alpha}-\alpha_{n}\right)\rightsquigarrow\mathcal{N}\left(0,1\right)$.  This follows from observing that $\alpha_{n}$ will be consistent\footnote{
$\left|\widehat{\alpha}-\alpha\right| 
=  \left|\log\frac{\widehat{p}_{n}}{1-\widehat{p}_{n}}-\log\frac{p_{n}}{1-p_{n}}\right|\leq\left\{ \left(\frac{1-\bar{p}_{n}}{\bar{p}_{n}}\right)\left(\frac{1}{\left(1-\bar{p}_{n}\right)^{2}}\right)\right\} \left|\widehat{p}_{n}-p_{n}\right|
\lesssim_{p}  \frac{\left|\widehat{p}_{n}-p_{n}\right|}{\bar{p}_{n}}=O_{p}\left(\sqrt{\frac{1}{np_{n}}}\right)\rightarrow0.
$
}
and by the delta method
\[
\sqrt{\frac{n}{p_{n}}}\left(\widehat{\alpha}-\alpha\right)\rightsquigarrow\mathcal{N}\left(0,\left[\partial_{p}\left\{ \log\frac{p_{n}}{1-p_{n}}\right\} \right]^{2}\right)=\mathcal{N}\left(0,\frac{1}{(p_{n}(1-p_n))^{-2}}\right)
\]
which implies $
\sqrt{np_{n}}\left(\widehat{\alpha}-\alpha\right)\rightsquigarrow\mathcal{N}\left(0,1\right),$ noting that clearly the $(1-p_n)$ term is irrelevant in the rate normalization under the hypothesized asymptotic sequence.
\

Next we offer an intuition for why this works with a finite set of discrete covariates.  Let $\log\frac{p\left(x\right)}{1-p\left(x\right)}=\alpha_{n}+\beta x$
for $x$ in some finite discrete set.   It is clear that repeating the above argument delivers the same rate of convergence at every covariate value.

\

We now consider the general case. The data consists of a triangular array $\{(y_{i,n},x_{i,n}):\ n \in \mathbb{N} \}$ where $y_{i,n}$ is a binomial outcome governed by $p^n(x_{i,n}; \gamma_{0,n})$.  To conserve on notation let $q_{in}=p\left(x_{in}'\gamma_{0n}\right)$ and put $J_{n}=\sum_{i\leq n}q_{in}\left(1-q_{in}\right)x_{in}x_{in}'$. Under the maintained assumptions it will be the case that $\frac{n^h}{n}J_{n}\cvgto J$.

\begin{lem} \label{lem:hjort-pollard}Assume that 
 $\left\Vert \gamma_{0,n} \right\Vert_1 \cdot \sup_{x \in \mathcal{X}}\left\Vert x \right\Vert_\infty \lesssim h \cdot \log n$ for $0\leq h < 1$. Then,
$$J_{n}^{1/2}\left(\widehat{\gamma}_{n}-\gamma_{0n}\right)\wkto \mathcal{N}(0,I_d).$$
\end{lem}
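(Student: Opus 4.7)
The plan is to follow the standard convexity-based MLE asymptotics for (triangular-array) logistic regression, adapted to handle the fact that $\gamma_{0n}$ is drifting toward $-\infty$ and the success probabilities $q_{in}$ are vanishing. Write the log-likelihood, score and Hessian as
\[
\ell_n(\gamma) = \sum_{i\le n}\bigl[y_{in}\, x_{in}'\gamma - \log(1+\exp(x_{in}'\gamma))\bigr],\qquad
U_n(\gamma)=\sum_{i\le n}(y_{in}-p(x_{in}'\gamma))x_{in},
\]
\[
-H_n(\gamma)=\sum_{i\le n} p(x_{in}'\gamma)(1-p(x_{in}'\gamma))x_{in}x_{in}',
\]
so that $-H_n(\gamma_{0n})=J_n$. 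Because $\ell_n$ is strictly concave, the MLE is characterized by $U_n(\widehat\gamma_n)=0$, and it will suffice to establish (i) a CLT for the score at the truth, $J_n^{-1/2}U_n(\gamma_{0n})\wkto \mathcal N(0,I_d)$, and (ii) a local uniform approximation $-H_n(\gamma_{0n}+J_n^{-1/2}t)=J_n+o_p(J_n)$ for $\|t\|$ bounded. Given these two ingredients, the Hjort--Pollard convexity argument (reparametrize $t=J_n^{1/2}(\gamma-\gamma_{0n})$, expand $\ell_n$, observe that the remainder is $o_p(1)$ uniformly on compacts, and use convexity to transfer convergence of the argmax of the quadratic approximation to the argmax of the actual objective) gives $J_n^{1/2}(\widehat\gamma_n-\gamma_{0n})\wkto \mathcal N(0,I_d)$.

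For step (i), $U_n(\gamma_{0n})$ is a sum of independent mean-zero $d$-vectors with covariance $J_n$, so I would apply the Lindeberg--Feller CLT to the scalar projections $a'J_n^{-1/2}U_n(\gamma_{0n})$ for arbitrary unit $a\in\mathbb R^d$. The summands are bounded in absolute value by $\|J_n^{-1/2}x_{in}\|\le \lambda_{\min}(J_n)^{-1/2}\sup_x\|x\|_\infty$, and the Lindeberg condition reduces to checking that this bound vanishes; this is where the rate condition bites. The hypothesis $\|\gamma_{0n}\|_1\cdot\sup_x\|x\|_\infty\lesssim h\log n$ gives the two-sided control $q_{in}(1-q_{in})\asymp n^{-h}$ uniformly in $i$, and combined with the design-matrix full-rank assumption $\tfrac{1}{n}x'x\to \Sigma\succ 0$ we obtain $\lambda_{\min}(J_n)\gtrsim n^{1-h}\to\infty$, so the individual summand bound goes to zero, verifying Lindeberg.

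For step (ii), use the representation $p(1-p)=\sigma'(u)$ with $\sigma$ the logistic link and note $|\sigma'(u+\delta)-\sigma'(u)|\le |\delta|\sup|\sigma''|\le |\delta|\,\sigma'(u)$ for $|\delta|\le 1$ (Lipschitz in the natural parameter, up to a constant). Applied pointwise to $u=x_{in}'\gamma_{0n}$ and $\delta=x_{in}'J_n^{-1/2}t$, whose magnitude is $O(\|x\|_\infty\lambda_{\min}(J_n)^{-1/2}\|t\|)=o(1)$ under our rate, this yields $-H_n(\gamma_{0n}+J_n^{-1/2}t)=J_n(1+o(1))$ uniformly over $\|t\|\le R$. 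Together with step (i), the convexity lemma closes the argument.

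The main obstacle is the tension between the probabilities vanishing fast enough that the model is genuinely sparse (so relative sparsity holds and incidental generation is negligible) and slow enough that $J_n$ still diverges so the Lindeberg condition is satisfied; this is exactly what the bound $h<1$ enforces, and I would make sure the two-sided bound $q_{in}(1-q_{in})\asymp n^{-h}$ (following from $\|\gamma_{0n}\|_1\sup\|x\|_\infty\lesssim h\log n$ together with compactness of $\mathcal X$ and boundedness away from the boundary of the covariate support where $\gamma_{0n}$ points) is verified cleanly, as it is the hinge on which both the CLT and the Hessian stability rest. With those ingredients in hand the remainder of the argument is standard convex MLE asymptotics.
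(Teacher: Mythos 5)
Your proposal is correct and follows essentially the same route as the paper: both rest on the Hjort--Pollard convexity argument, with a Lindeberg--Feller CLT for the normalized score $J_n^{-1/2}\sum_i (y_{in}-q_{in})x_{in}$ and control of the expansion remainder via $\max_{i\le n}\|J_n^{-1/2}x_{in}\|\to 0$, which follows from compactness of $\mathcal{X}$ together with the rate condition forcing $q_{in}\asymp n^{-h}$ and hence $\lambda_{\min}(J_n)\gtrsim n^{1-h}\to\infty$. The only cosmetic difference is that you phrase the remainder control as local Hessian stability, whereas the paper writes out the third-order term $r_n(s)$ explicitly and cites Theorem 5.2 of Hjort and Pollard (1993) for its negligibility.
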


Equivalently, the result implies that $\sqrt{n^{1-h}}\left(\widehat{\gamma}_{n}-\gamma_{0n}\right)\wkto \mathcal{N}(0,J^{-1})$.  This shows the sub-$\sqrt{n}$ rate of convergence.

Observe that in the example where $q_{in} \propto \exp (\alpha_{0n}+\beta_{0}w_{in})$, then this corresponds to $\alpha_{0n} = \log (C\cdot n^{-h})$ where $0\leq h < 1$ and some constant $C>0$.  More generally, the requirement ensures that the parameter (times covariate value) does not diverge too rapidly so that a central limit theorem can be applied.

\

\begin{proof}[{\bf Proof of Lemma \ref{lem:hjort-pollard}}]
The result is an extension of/corollary to Theorem 5.2 of Hjort and Pollard (1993). The convexity-based argument allows consistency and asymptotic normality to be argued in one step. Consider the random convex function
\[
A_{n}\left(s\right)=\sum_{i\le n}\log f_{i}\left(y_{in},\gamma_{0n}+J_{n}^{-1/2}s\right)-\log f_{i}\left(y_{in},\gamma_{0n}\right),
\]
where $f_i$ is the logistic function. This is minimized by $s=J_{n}^{1/2}\left(\widehat{\gamma}_{n}-\gamma_{0n}\right)$.

This can be expressed as
\[
A_{n}\left(s\right)=U_{n}'s-\frac{1}{2}s's-r_{n}\left(s\right)
\]
where\footnote{Observe that $J_{n}^{-1/2}=\sqrt{n^{1-h}}\left(\frac{n^{h}}{n}\sum_{i\leq n}q_{in}\left(1-q_{in}\right)x_{in}x_{in}'\right)^{-1/2}$.%
}
\[
U_{n}=J_{n}^{-1/2}\sum_{i\leq n}\left(y_{in}-q_{in}\right)x_{in}\rightsquigarrow\mathcal{N}\left(0,I\right),
\]
which applies by a Lindeberg-Feller central limit theorem for triangular arrays, as $\min_x q_i(x)  = \Theta( \max_x q_i(x) ) = \omega(1/n)$ by hypothesis on $\gamma_{0,n}$, $x_\ell$, and the Bernoulli probability.
Meanwhile
\[
r_{n}\left(s\right)=\sum_{i\leq n}\frac{1}{6}q_{i}\left(1-q_{in}\right)\cdot\eta_{i}\left(s'J_{n}^{-1/2}x_{in}\right)\cdot\left(s'J_{n}^{-1/2}x_{in}\right)^{3}.
\]
The proof of Theorem 5.2 of Hjort and Pollard (1993) shows $r_{n}(s)\rightarrow0$.  This exploits that $\lambda_{n}:=\max_{i\leq n}\left|J_{n}^{-1/2}x_{in}\right|\rightarrow0$, which holds by the fact that the covariates live in a compact set (making clear that this isn't a tight assumption).
\end{proof}

\

Because of relative sparsity, incidental generation is rare.  Therefore, for most of the data the preceding result directly applies. However, for a vanishing proportion of $m_\ell$-tuples, the structures are present due to incidental generation. We only need to show that this happens for a vanishing proportion of data and is asymptotically negligible.

To make this argument, out of the $n$ observations, we say that each observation is ``invalid'' (i.e., observed with measurement error such as $y_{in} = 1$ when the true value is 0) with some probability. Our claim can be written in the notation of the preceding lemma by saying that some of our $n$ data points are ``invalid'' and we will show the probability that an observation is invalid is bounded by $z_n \downarrow 0$ at a fast enough rate. Our relative sparsity assumption directly implies that $z_n \downarrow 0$.

\begin{lem}\label{lem:mismeasure} Assume the hypotheses of Lemma \ref{lem:hjort-pollard}. Assume either
\begin{enumerate}
\item every observation becomes
invalid with probability at most $z_n \downarrow 0$, or
\item an $O_p(z_n \cdot n)$ share of observations become invalid, with $z_n \downarrow 0$.
\end{enumerate}
Then the conclusion of Lemma \ref{lem:hjort-pollard} holds.
\end{lem}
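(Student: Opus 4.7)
The plan is to transfer the Hjort--Pollard convexity argument of Lemma \ref{lem:hjort-pollard} to the contaminated setting by showing that the invalid observations perturb the reparameterized log-likelihood only by an $o_p(1)$ amount, so that the same quadratic expansion and Gaussian limit carry over. First I would partition $\{1,\ldots,n\}$ into random sets $\mathcal{V}_n$ of valid and $\mathcal{I}_n$ of invalid indices. Under hypothesis (1), $\E|\mathcal{I}_n|\le n z_n$, so Markov's inequality gives $|\mathcal{I}_n|=O_p(z_n n)$; under hypothesis (2) this bound is assumed outright. Writing the Hjort--Pollard objective as
\[
A_n(s)=A_n^{\mathcal{V}}(s)+A_n^{\mathcal{I}}(s),
\]
I would apply Lemma \ref{lem:hjort-pollard} verbatim to $A_n^{\mathcal{V}}(s)$. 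Since $|\mathcal{V}_n|/n\to 1$ in probability and $J_n$ scales like $n^{1-h}$ (the valid block inherits this up to a $1+o_p(1)$ factor), this delivers $A_n^{\mathcal{V}}(s)=U_n's-\tfrac12 s's+o_p(1)$ pointwise in $s$, with $U_n\wkto \mathcal{N}(0,I_d)$ by the triangular-array Lindeberg--Feller CLT already invoked.

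The crucial step is to control the contaminated block. For each $i\in\mathcal{I}_n$, the mean value theorem applied to $\gamma\mapsto \log f_i(y_{in},\gamma)$ bounds the increment $\log f_i(y_{in},\gamma_{0n}+J_n^{-1/2}s)-\log f_i(y_{in},\gamma_{0n})$ in absolute value by a constant times $\|J_n^{-1/2}x_{in}\|$, since the logistic score is bounded by $\|x_{in}\|$ on the compact covariate space. Summing gives
\[
|A_n^{\mathcal{I}}(s)|\lesssim |\mathcal{I}_n|\cdot \|J_n^{-1/2}\|\cdot \sup_{x\in\mathcal{X}}\|x\|_\infty = O_p\bigl(z_n\, n^{(1+h)/2}\bigr),
\]
which is $o_p(1)$ once $z_n$ decays at the ``fast enough'' rate that relative sparsity guarantees. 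The same calculation applied linearly to the score shows that the contribution of $\mathcal{I}_n$ to $U_n$ is asymptotically negligible, so the Gaussian limit of the score is unaffected. Combining the two blocks gives $A_n(s)=U_n's-\tfrac12 s's+o_p(1)$ as a convex function of $s$ whose minimizer is $J_n^{1/2}(\widehat{\gamma}_n-\gamma_{0n})$; the standard convexity lemma used in Hjort and Pollard (1993, Theorem 5.2) then yields $J_n^{1/2}(\widehat{\gamma}_n-\gamma_{0n})\wkto \mathcal{N}(0,I_d)$.

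The main obstacle is the bound on $A_n^{\mathcal{I}}(s)$ together with the verification that $z_n\, n^{(1+h)/2}\to 0$: the informal ``fast enough rate'' in the statement must be made quantitative by pairing the sparsity exponent $h$ of Lemma \ref{lem:hjort-pollard} with the contamination rate $z_{n,\ell}$ supplied by relative sparsity. A secondary difficulty is confirming that the Lindeberg condition for the triangular-array CLT applied to the score $U_n$ is not spoiled when the small invalid block is excised; here one uses that the maximal summand bound $\lambda_n=\max_i \|J_n^{-1/2}x_{in}\|\to 0$ of Lemma \ref{lem:hjort-pollard} is preserved under removal of an $o_p(n)$ share of indices, so the Lindeberg verification for $A_n^{\mathcal{V}}$ is essentially the same as in the uncontaminated case.
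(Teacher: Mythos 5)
Your overall strategy is the same as the paper's: split the sample into valid and invalid observations, note that hypothesis (1) reduces to hypothesis (2) via Markov's inequality, run the Hjort--Pollard argument on the valid block (after checking that the invalid block perturbs the information matrix $J_n$ only by a $1+o_p(1)$ factor, which needs $z_n n^{h}\to 0$ and matches the paper's step $\tfrac{n^h}{n}\sum_{n^*<i\leq n}q_{in}(1-q_{in})x_{in}x_{in}' = o_p(1)$), and then show the invalid block is asymptotically negligible. The difference, and the gap, is in how you control the invalid block. Your bound is a worst-case triangle inequality: each invalid summand contributes at most $\|x_{in}\|\,\|J_n^{-1/2}\|$, giving $|A_n^{\mathcal I}(s)| \lesssim |\mathcal I_n|\cdot\|J_n^{-1/2}\| = O_p\bigl(z_n n\cdot n^{-(1-h)/2}\bigr) = O_p\bigl(z_n n^{(1+h)/2}\bigr)$, and the same linear bound governs the invalid contribution to $U_n$. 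This requires $z_n = o\bigl(n^{-(1+h)/2}\bigr)$. But the contamination rate actually delivered by relative sparsity is only $z_n = o(n^{-h})$ (the number of incidentally generated $\ell$-objects is a vanishing fraction of the $\Theta(n^{m_\ell - h_\ell})$ truly generated ones, out of $n^{m_\ell}$ candidate tuples), say $z_n \asymp n^{-h-\delta}$ for some small $\delta>0$. Then $z_n n^{(1+h)/2} \asymp n^{(1-h)/2 - \delta}$, which diverges whenever $\delta < (1-h)/2$. So the step you flag as ``the main obstacle'' is not merely an unverified rate condition: in the regime the lemma is actually meant to cover, your bound fails.

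The paper avoids this by splitting the invalid block's effect into two pieces with different scalings. Its contribution to the (rescaled) Hessian is bounded linearly in $|\mathcal I_n|$ and needs only $z_n n^h \to 0$. Its contribution to the score, however, is treated with a root-$m$ scaling for the $m = z_n n$ invalid terms: the paper writes $\frac{1}{\sqrt{z_n n}}\sum_{n^*<i\leq n}(y_{in}-q_{in})x_{in} = O_p(1)$, so that after the $J_n^{-1/2}\asymp n^{-(1-h)/2}$ normalization the invalid score contribution is $O_p\bigl(\sqrt{z_n n^h}\bigr) = O_p(n^{-\delta/2}) = o_p(1)$. That gains a factor of $\sqrt{z_n n}$ over your uniform bound and is exactly what closes the argument at the available contamination rate. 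To repair your proof you would need to replace the mean-value/triangle-inequality bound on $A_n^{\mathcal I}(s)$ and on the invalid part of $U_n$ with a second-moment or maximal-inequality bound of this $\sqrt{m}$ type (or else strengthen the hypothesis on $z_n$ to $z_n n^{(1+h)/2}\to 0$, which is not what the lemma assumes). Your secondary point -- that removing an $o_p(n)$ share of indices does not spoil the Lindeberg condition for the valid block -- is fine and agrees with the paper.
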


\begin{proof}
Clearly the second condition is implied by the first, so we only prove the former.
Without loss of generality let $1,..,n^{*}$
denote the set of valid observations and $n^{*}+1,...,n$ the valid
observations. Note that $n^{*}$ is random and is $O_{p}\left(z_{n}n\right)$.
\[
U_{n}=J_{n}^{-1/2}\sum_{i\leq n}\left(y_{in}-q_{in}\right)x_{in}=J_{n}^{-1/2}\left[\sum_{i\leq n^{*}}\left(y_{in}-q_{in}\right)x_{in}+\sum_{n^{*}<i\leq n}\left(y_{in}-q_{in}\right)x_{in}\right].
\]
Observe that
\[
\frac{n^{h}}{n}\sum_{n^{*}<i\leq n}q_{in}\left(1-q_{in}\right)x_{in}x_{in}'=\frac{n^{h}}{n}z_{n}n=o_{p}\left(1\right).
\]
This implies
\begin{eqnarray*}
J_{n}^{-1/2}\sum_{i\leq n}\left(y_{in}-q_{in}\right)x_{in} & = & \left[\frac{n^{h}}{n}\sum_{i\leq n^{*}}q_{in}\left(1-q_{in}\right)x_{in}x_{in}'+\frac{n^{h}}{n}\sum_{n^{*}<i\leq n}q_{in}\left(1-q_{in}\right)x_{in}x_{in}'\right]^{-1/2}\\
 &  & \times\sqrt{\frac{n^{h}}{n}}\left[\sum_{i\leq n^{*}}\left(y_{in}-q_{in}\right)x_{in}+\sum_{n^{*}<i\leq n}\left(y_{in}-q_{in}\right)x_{in}\right].
\end{eqnarray*}
Thus
\[
\left[\frac{n^{h}}{n}\sum_{i\leq n^{*}}q_{in}\left(1-q_{in}\right)x_{in}x_{in}'+\frac{n^{h}}{n}\sum_{n^{*}<i\leq n}q_{in}\left(1-q_{in}\right)x_{in}x_{in}'\right]^{-1/2}\cvgto J^{-1/2}.
\]
Meanwhile, we have $\sum_{i\leq n^{*}}\left(y_{in}-q_{in}\right)x_{in}=O_{p}\left(\frac{1}{\sqrt{n^{1-h}}}\right)$
and to complete the argument
\begin{eqnarray*}
\frac{1}{\sqrt{z_{n}n}}\sum_{n^{*}<i\leq n}\left(y_{in}-q_{in}\right)x_{in} & = & \frac{1}{\sqrt{n^{1-k}}}\sum_{n^{*}<i\leq n}\left(y_{in}-q_{in}\right)x_{in}=O_{p}\left(1\right),\mbox{ where }k=h+\delta\\
\implies O\left(\frac{1}{\sqrt{n^{1-h}}}\right)\sum_{n^{*}<i\leq n}\left(y_{in}-q_{in}\right)x_{in} & = & O\left(\frac{1}{\sqrt{n^{1-k+\delta}}}\right)\sum_{n^{*}<i\leq n}\left(y_{in}-q_{in}\right)x_{in}\\
 & = & O\left(\frac{1}{n^{\delta/2}}\cdot\frac{1}{\sqrt{n^{1-k}}}\right)\sum_{n^{*}<i\leq n}\left(y_{in}-q_{in}\right)x_{in}\\ &=& O_{p}\left(n^{-\delta/2}\right)=o_{p}\left(1\right)
\end{eqnarray*}
showing the result.
\end{proof}

\

\subsubsection*{Example \ref{ext}.1}[Continued] Recall we have set $\alpha_L^n = \log (1/n^{0.7})$ and $\alpha_T^n = \log (1/n^{1.75})$,  $\beta_L = -2$ and $\beta_T = -3$.   Let $n = 100$.  Then the average degree is 3.75, the average clustering is 0.14, the fraction of nodes in the giant component is 92\% and the maximal eigenvalue of the adjacency matrix is 5.5.  Thus, the resulting graph is comparable in structure to the empirical data.

We then run 200 simulations of this process where we generate a graph and then estimate the model parameters via sequential logistic regressions.  First we regress whether a triple exists on a constant and the triad-level covariate over all $\binom{n}{3}$ observations to get $(\widehat{\alpha}_T^b,\widehat{\beta}_T^b)$, for simulation $b=1,...,100$.  Second, on the unused $ij$ pairs not in triangles we regress whether a link exists on a constant and the pair-level covariate which is a logit on all $\binom{n}{2}$ observations less used pairs.  From this we get $(\widehat{\alpha}_L^b,\widehat{\beta}_L^b)$ for $b=1,...,100$.   The results are displayed in Figure \ref{fig-logit}.

\begin{figure}[h!]
\centering
\subfloat[Links parameter]{
	\includegraphics[trim = 0.65in 0in 0.65in 0in, clip = true, scale = 0.35]{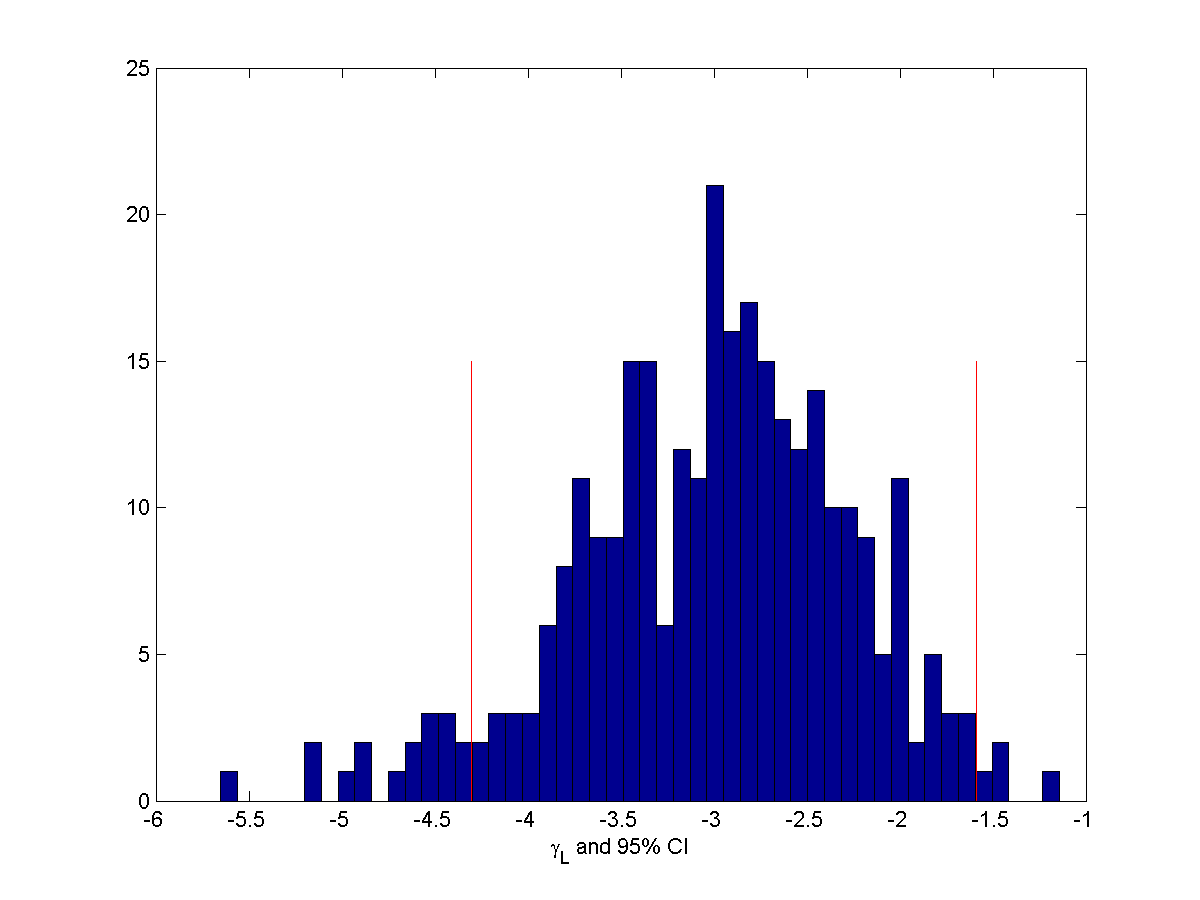}
}
\subfloat[Triads parameter]{
	\includegraphics[trim = 0.65in 0in 0.65in 0in, clip = true, scale = 0.35]{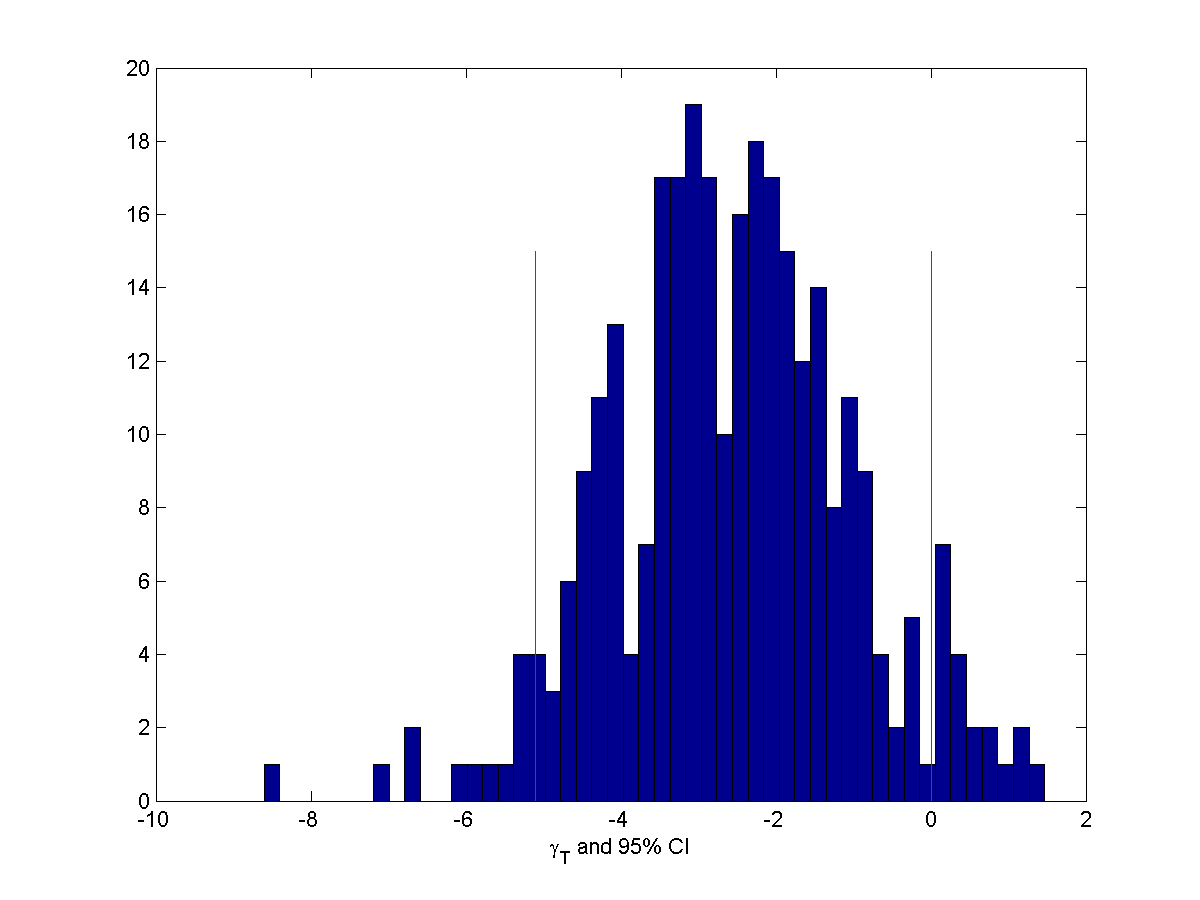}
}
	\caption{Displays the distribution of estimated parameter value as well as the median 95\% confidence interval from a simple logistic regression.}
\label{fig-logit}
\end{figure}

\noindent We show that the parameters are correctly centered and exhibit good coverage properties.

\newpage

\section{Online Appendix: Isolates, Links, Triangles Example}\label{appendILT}

\setcounter{table}{0}
\renewcommand{\thetable}{D.\arabic{table}}

\setcounter{figure}{0}
\renewcommand{\thefigure}{D.\arabic{figure}}

Here we perform some additional diagnostic exercises around the Statnet ERGM estimation from Section \ref{ergm-examplebad}.

First, we first select 17 nodes (one third) to be isolated.  Next, we generate triangles with a probability of .0014 on each possible triangle on the
nodes that are not isolated.  Finally, we generate links with probability  .0415 on the nodes that are not isolated.
Overall, this leads to networks that have on average 20 isolated nodes, 45 links, and 10 triangles (so, $\E[S_I\left(g\right)]=20, \E[S_L\left(g\right)]=45, \E[S_T\left(g\right)]=10)$).
We randomly draw 1000 different networks in this manner.

Again, maximum likelihood estimates are unique for each of the networks, and by continuity should be in a fairly tight range (given
the tight range of the generated networks as shown below).

Using standard ERGM estimation software (\texttt{statnet} via \texttt{R}, \cite{handcock:statnet}) we estimate the parameters of an ERGM with isolates, links and triangles
for each of these randomly drawn networks.  We present the estimates in Figure \ref{fig:ERGM-ILT}.

\begin{figure}[h!]
\centering
\subfloat[Isolate Parameter Estimates]{
\label{fig:isolatesERGM}
\includegraphics[width=0.33\textwidth]{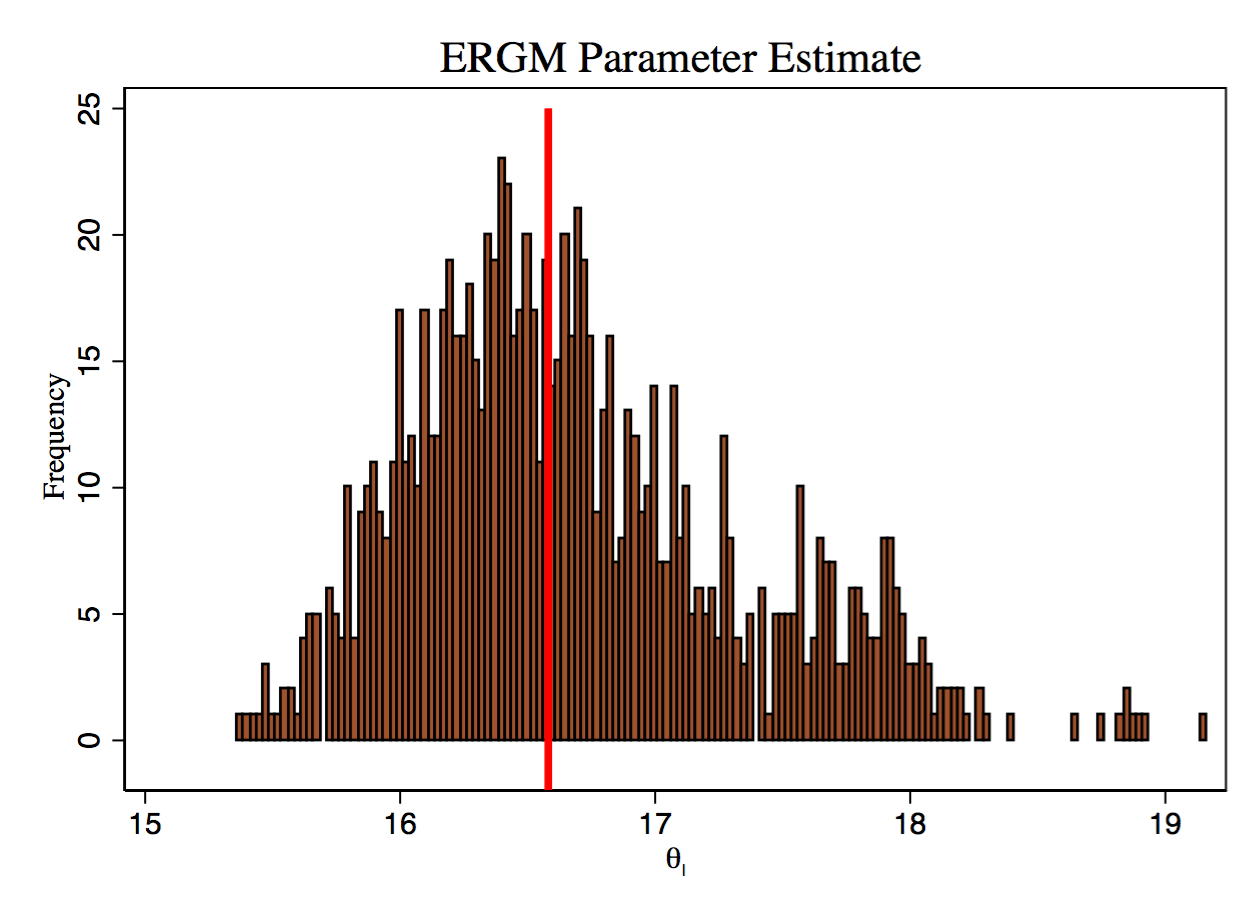}
}
\subfloat[Link Parameter Estimates]{
\label{fig:linksERGM}
\includegraphics[width=0.33\textwidth]{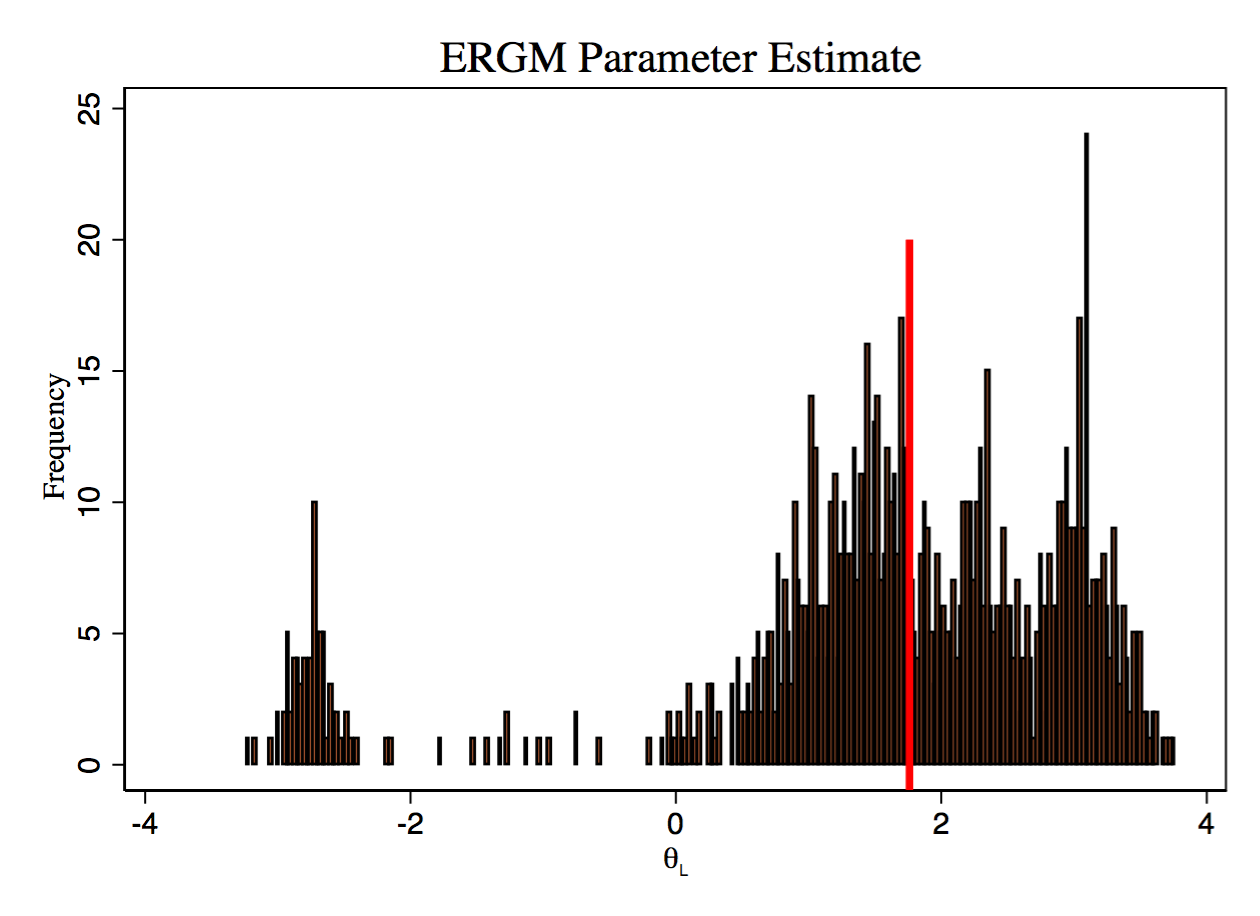}
}
\subfloat[Triangle Parameter Estimates]{
\label{fig:trianglesERGM}
\includegraphics[width=0.33\textwidth]{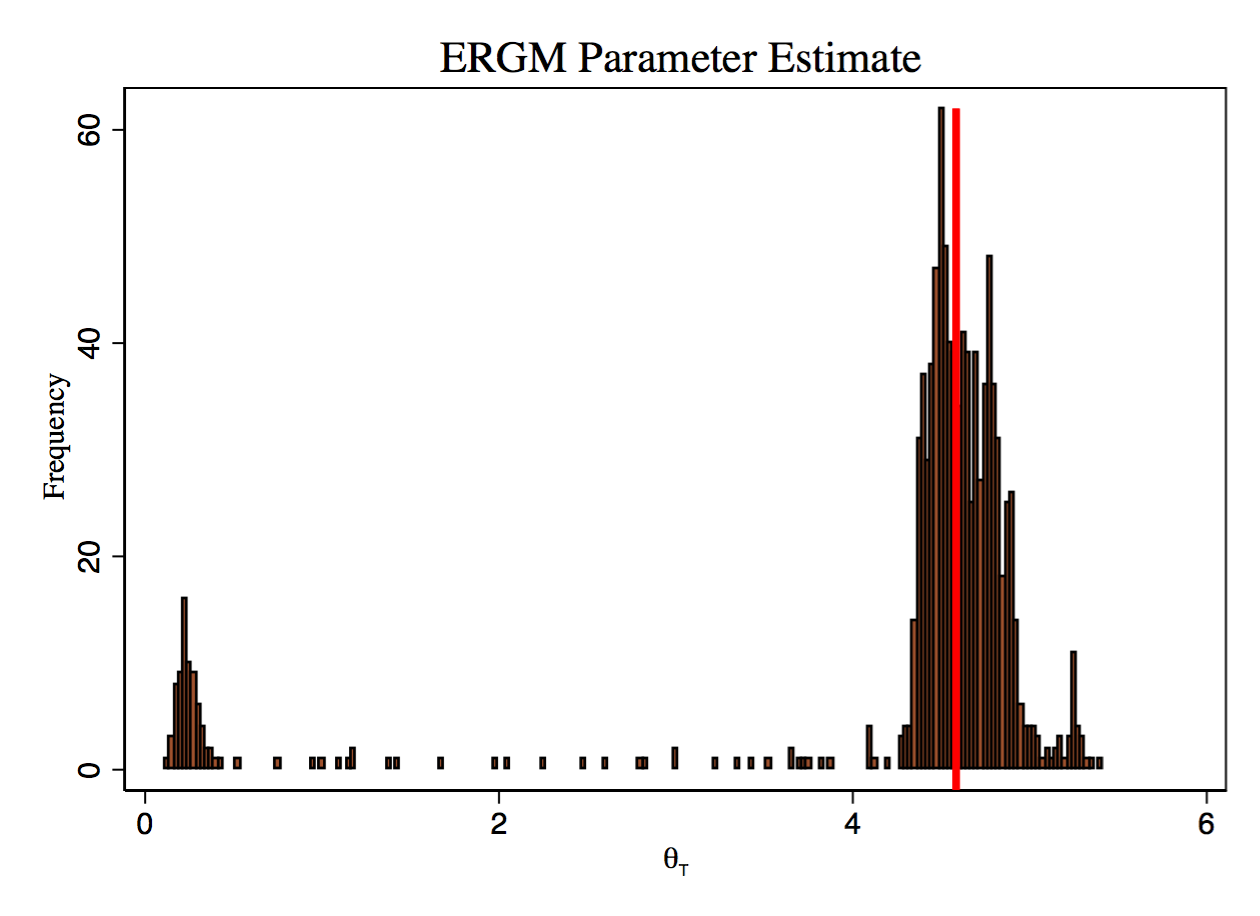}
}
\caption{\label{fig:ERGM-ILT} Standard ERGM estimation software (\texttt{statnet}) output for 1000 draws of networks on 50 nodes, with an
 average of 20 isolated nodes, 45 links, and 10 triangles.
The red lines (on top of each other) are the median left and right 95 percent confidence interval lines (which do not have appropriate coverage).}
\end{figure}

As in Figure \ref{fig:ERGM-ILT2}, but with even more noise, the estimated parameters for links and triangles cover a wide range of values, in fact with the link parameter estimates being both positive and negative and
ranging from below -3 to above 3 (Figure \ref{fig:linksERGM}) and triangles parameter estimates ranging from just above 0 to 5 (Figure \ref{fig:trianglesERGM}).  Only the isolates
parameter estimates are remotely stable (Figure \ref{fig:isolatesERGM}), but even those vary in three different regions with substantial variation.
Second, despite the enormous variation in estimated parameter values from very similar networks, the reported standard errors are quite narrow and almost always report that the parameter estimates are highly significant.  Moreover, the median left and right standard error bars essentially coincide and do not come close to capturing the actual variation.

Second, we report the distribution of the statistics from the simulated networks (Figure \ref{fig:simulstats}) - they are fairly tightly clustered about the mean values.

\begin{figure}[h!]
\centering
\subfloat[Number of Isolates]{
\label{fig:isolatessimulstats}
\includegraphics[width=0.33\textwidth]{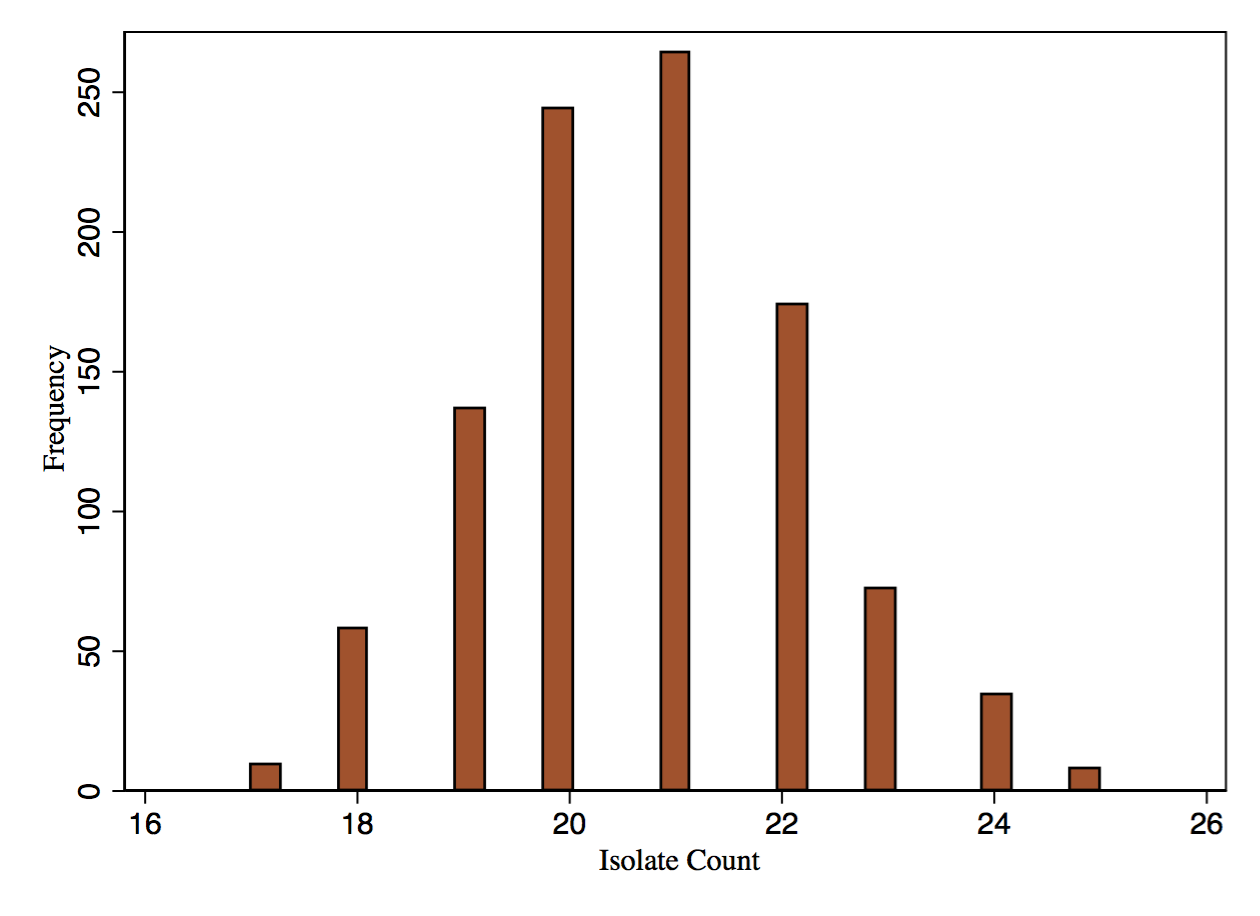}
}
\subfloat[Number of Links]{
\label{fig:linkssimulstats}
\includegraphics[width=0.33\textwidth]{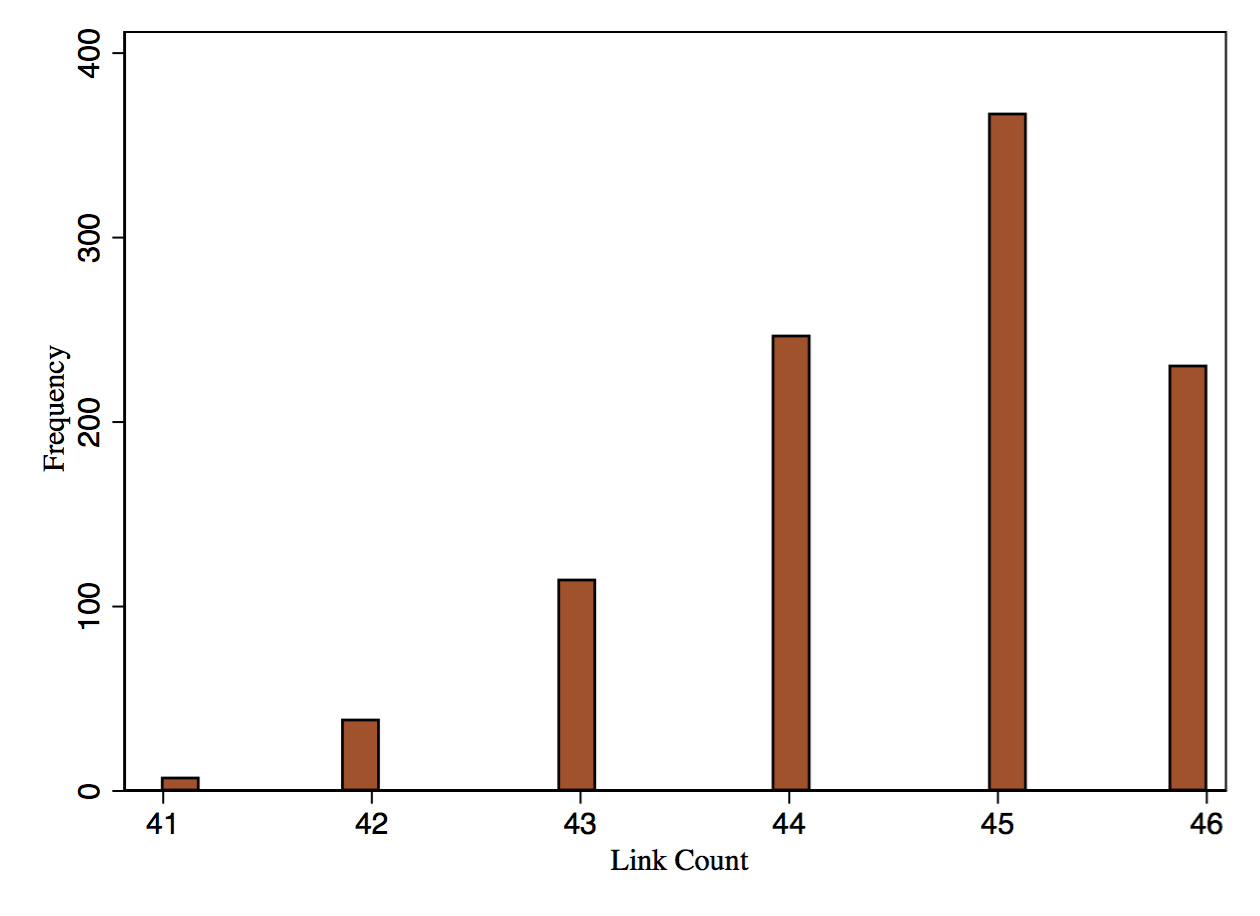}
}
\subfloat[Number of Triangles]{
\label{fig:trianglessimulstats}
\includegraphics[width=0.33\textwidth]{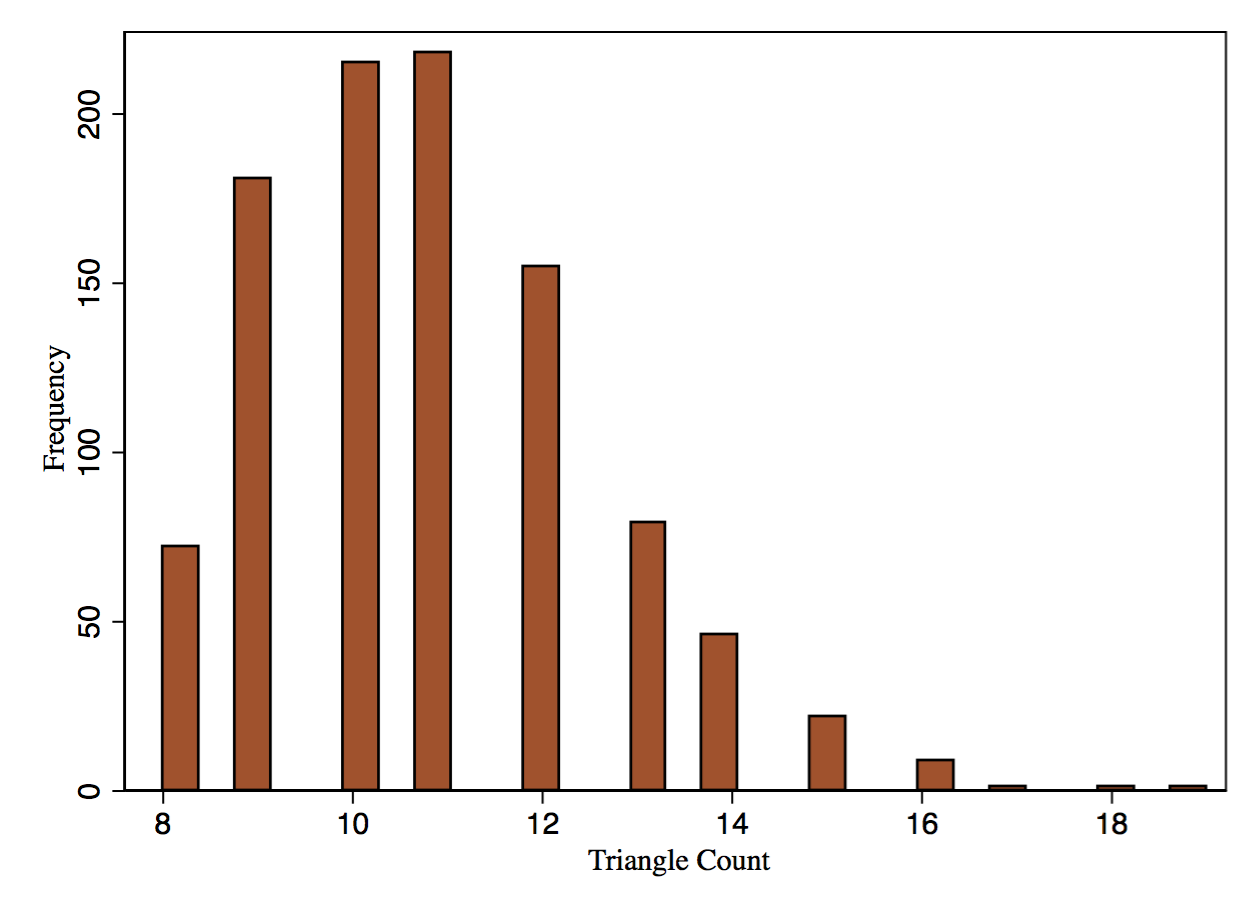}
}
\caption{\label{fig:simulstats} For the 1000 simulated networks we report the
distribution of the number of isolates, links and triangles.}
\end{figure}

Next, for each of the 1000 simulated networks, using the parameter estimates we simulate a network using Statnet's simulation command.  We then check whether the simulated networks come anywhere close to matching the original networks.  Although most of the networks turn out to have nearly 20 isolates, they generally have thousands of links and triangles.   Figure \ref{fig:recreate} looks
nothing like the counts from the original networks (Figure \ref{fig:simulstats}).

\begin{figure}[h!]
\centering
\subfloat[Number of Isolates]{
\label{fig:isolatesRecreate}
\includegraphics[width=0.33\textwidth]{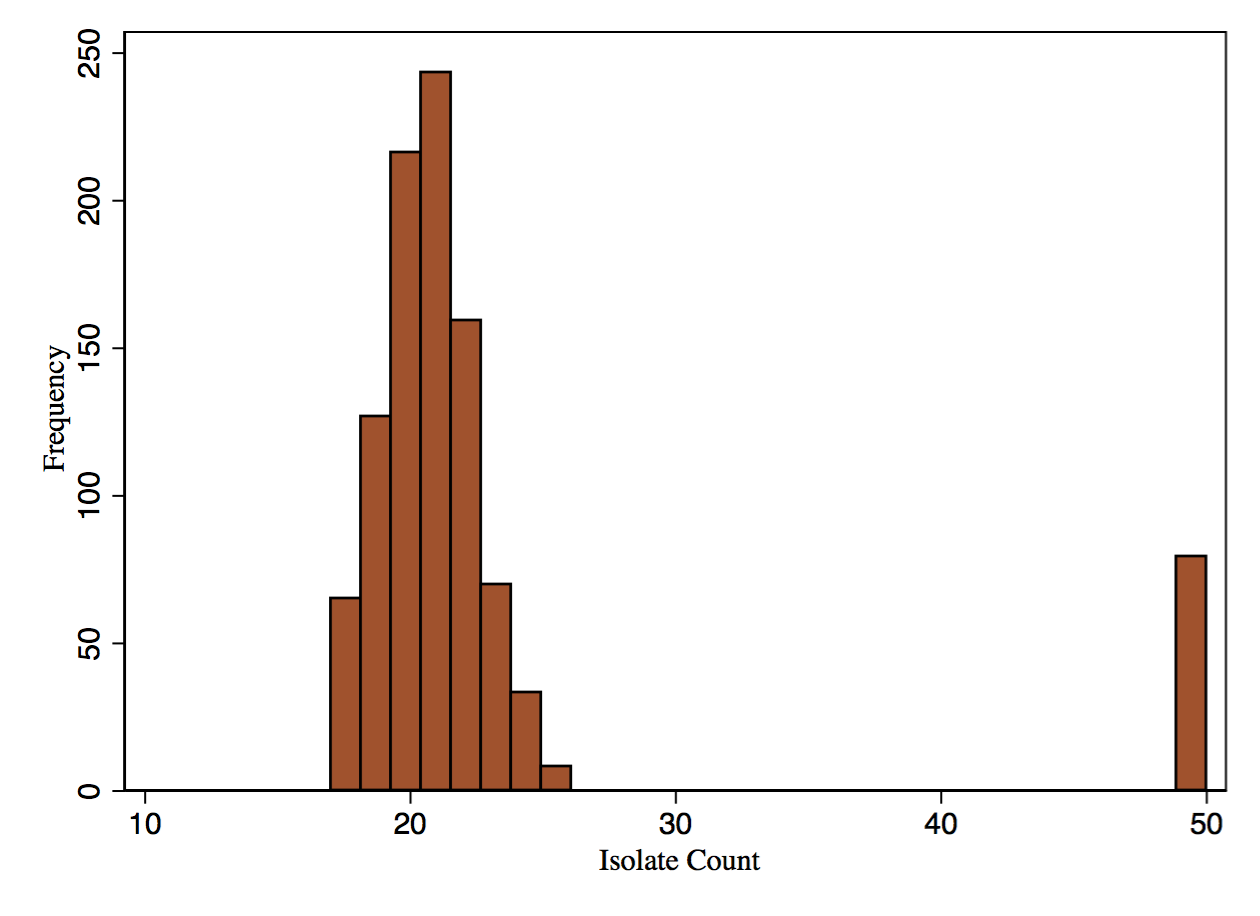}
}
\subfloat[Number of Links]{
\label{fig:linksRecreate}
\includegraphics[width=0.33\textwidth]{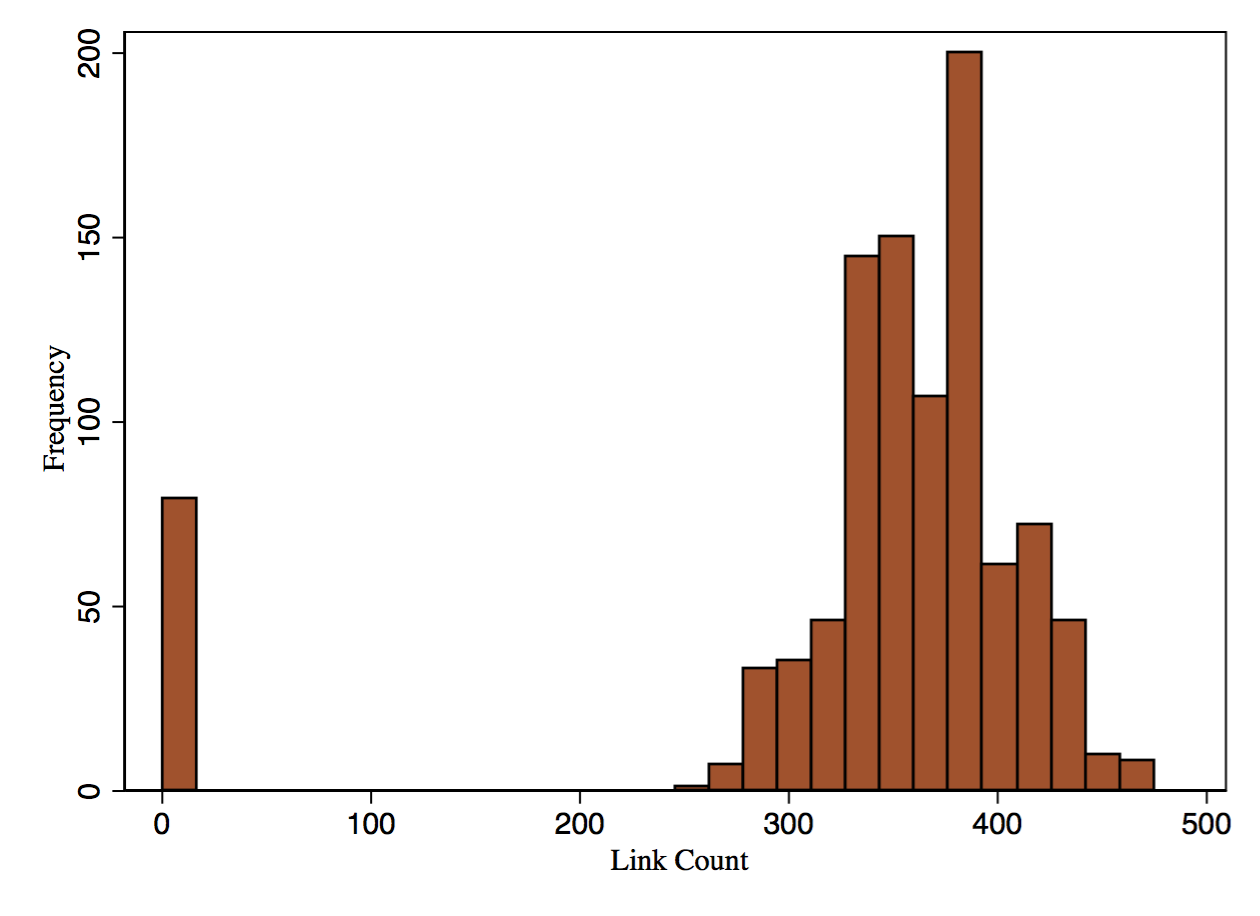}
}
\subfloat[Number of Triangles]{
\label{fig:trianglesRecreate}
\includegraphics[width=0.33\textwidth]{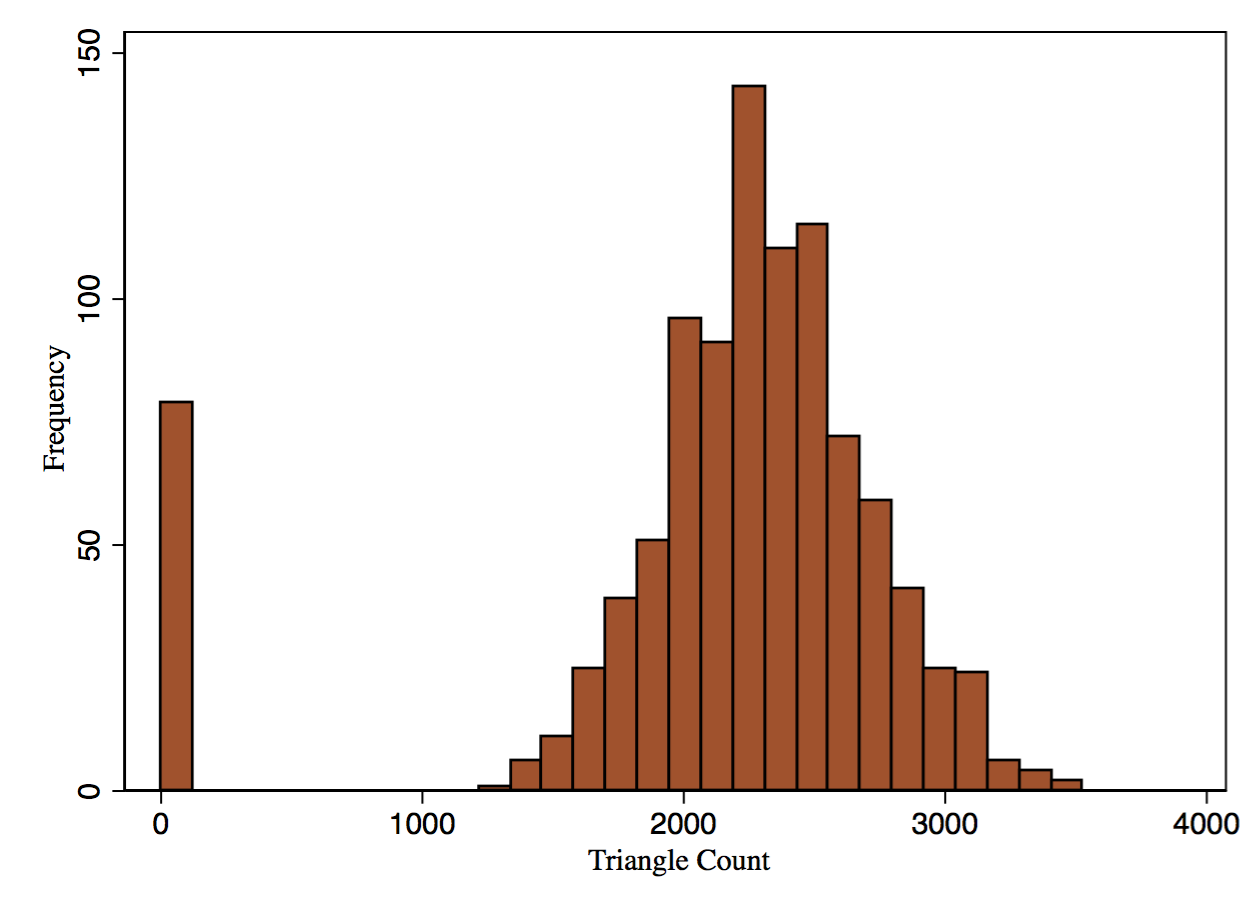}
}
\caption{\label{fig:recreate} For each of the 1000 simulated networks, using the parameter estimates from Statnet we simulate a network using Statnet's simulation command. The resulting
distribution of the number of isolates, links and triangles are pictured here.  They do not match the networks that generated them, which are pictured in Figure \ref{fig:simulstats}.}
\end{figure}

Simulating a network from an ERGM is even a more daunting task than estimating parameters from one, as there is
no obvious network from which to seed the simulation procedure, and again there are far too many from which to calculate likelihoods.  This is another advantage of SUGMs and count SERGMs, which are not only easily estimated but also easily simulated.

\begin{figure}[h!]
\centering
\subfloat[Estimated $p_I$: $\widehat{p}_I$]{
\label{fig:isolatesSUGM}
\includegraphics[width=0.33\textwidth]{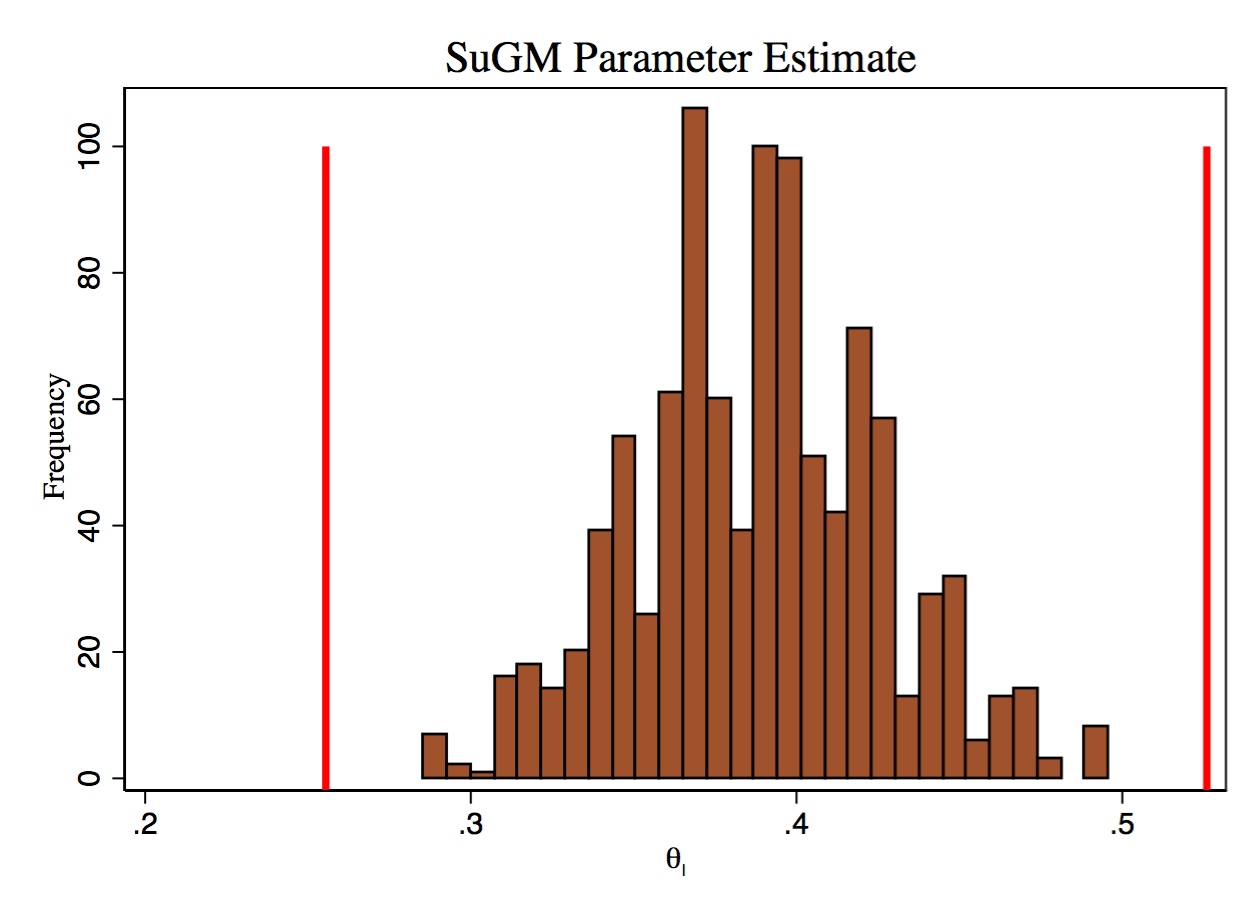}
}
\subfloat[Estimated $p_L$: $\widehat{p}_L$]{
\label{fig:linksSUGM}
\includegraphics[width=0.33\textwidth]{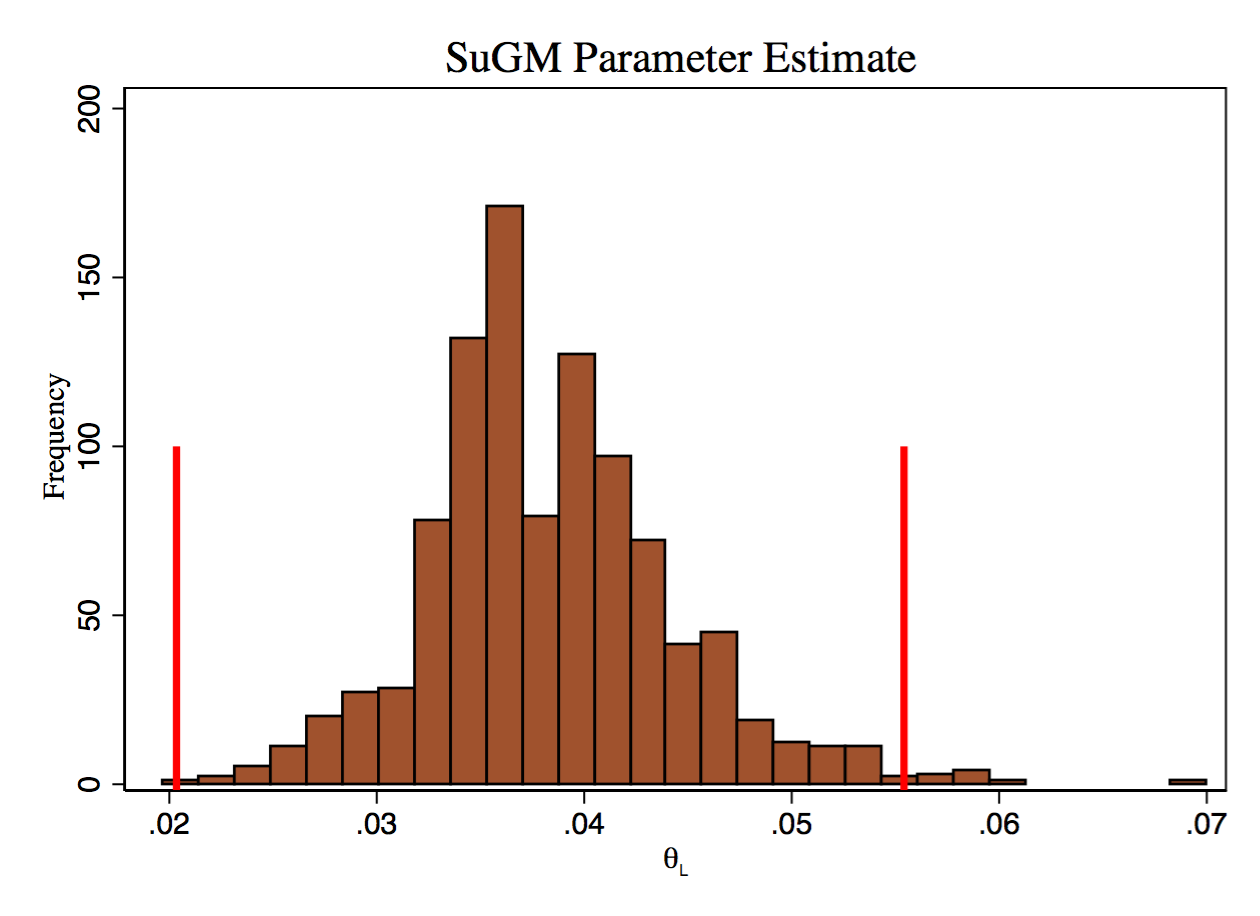}
}
\subfloat[Estimated $p_T$: $\widehat{p}_T$]{
\label{fig:trianglesSUGM}
\includegraphics[width=0.33\textwidth]{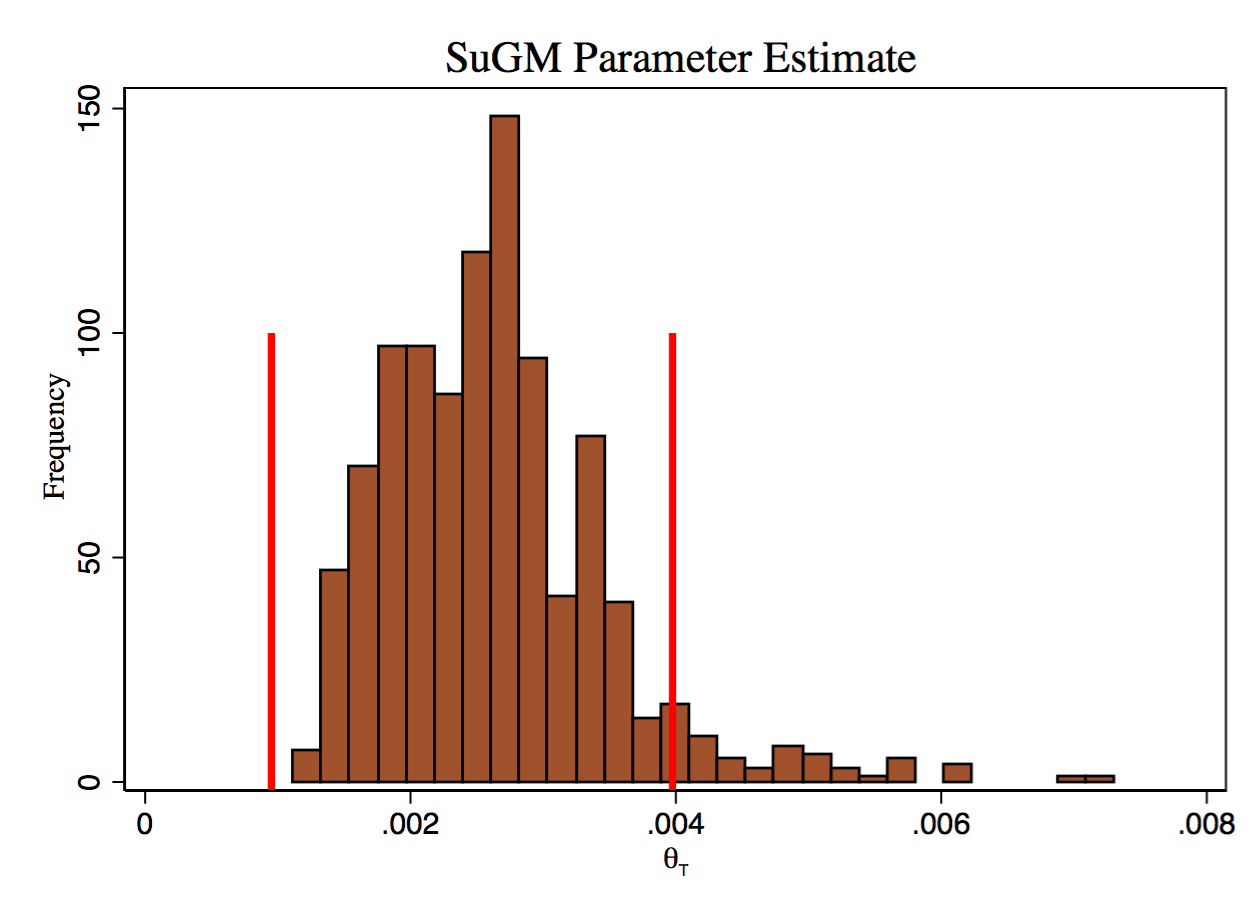}
}
\caption{\label{fig:simulSUGM} For each of the 1000 simulated networks, we solve for estimated probabilities of isolates, links and triangles for a SUGM as defined in (\ref{sugm-ex2}).}
\end{figure}

\begin{figure}[h!]
\centering
\subfloat[Estimated $\theta_I$: $\widehat{\theta}_I$]{
\label{fig:isolatesSERGM}
\includegraphics[width=0.33\textwidth]{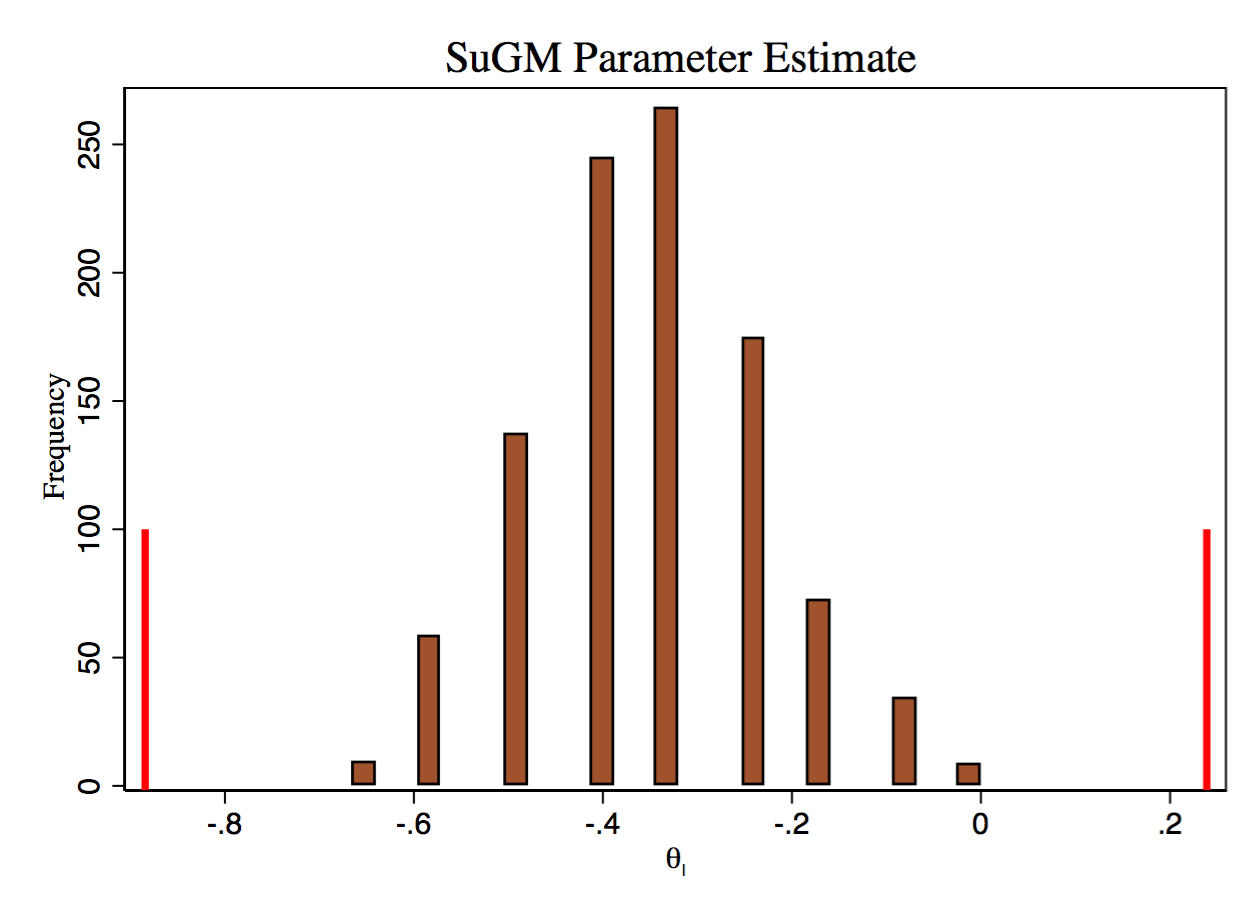}
}
\subfloat[Estimated $\theta_L$: $\widehat{\theta}_L$]{
\label{fig:linksSERGM}
\includegraphics[width=0.33\textwidth]{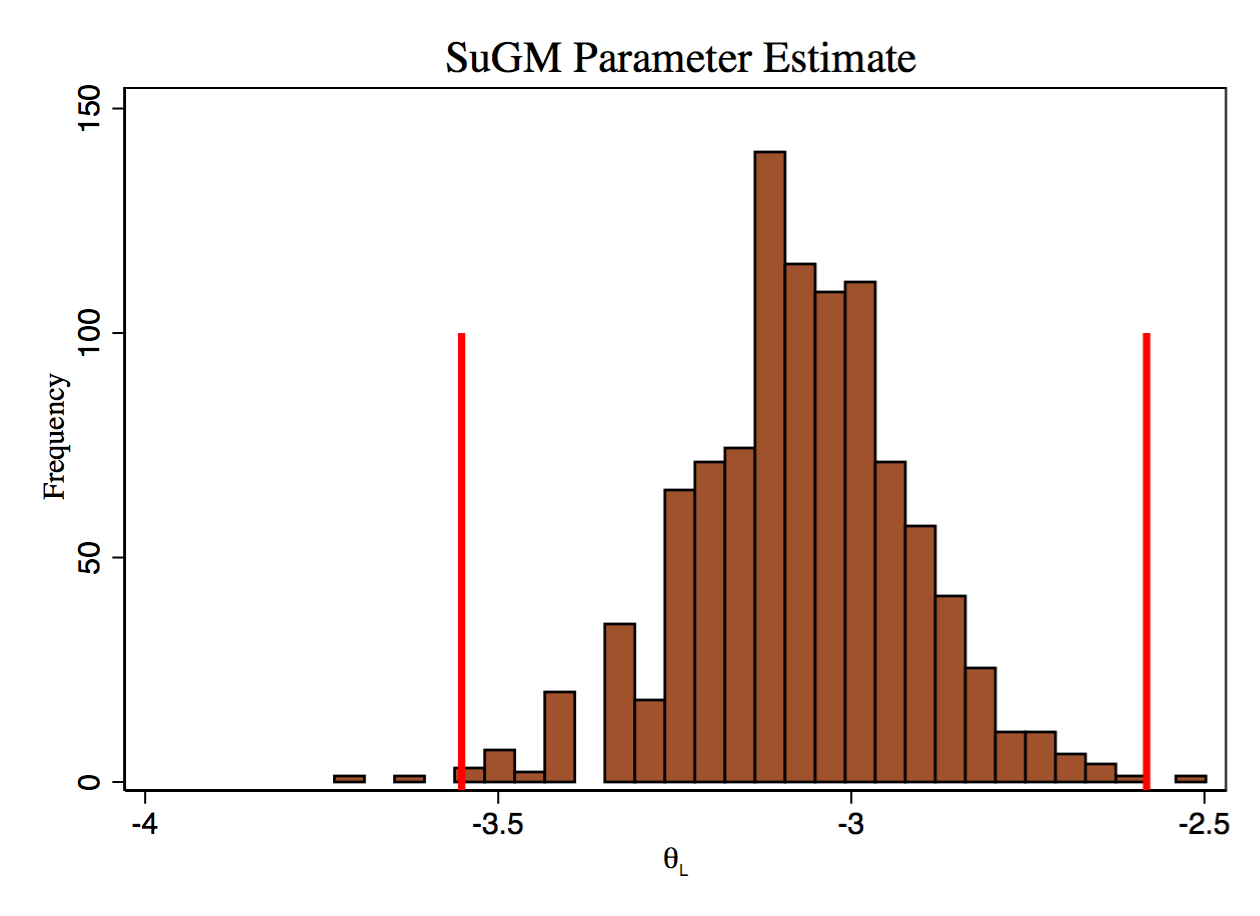}
}
\subfloat[Estimated $\theta_T$: $\widehat{\theta}_T$]{
\label{fig:trianglesSERGM}
\includegraphics[width=0.33\textwidth]{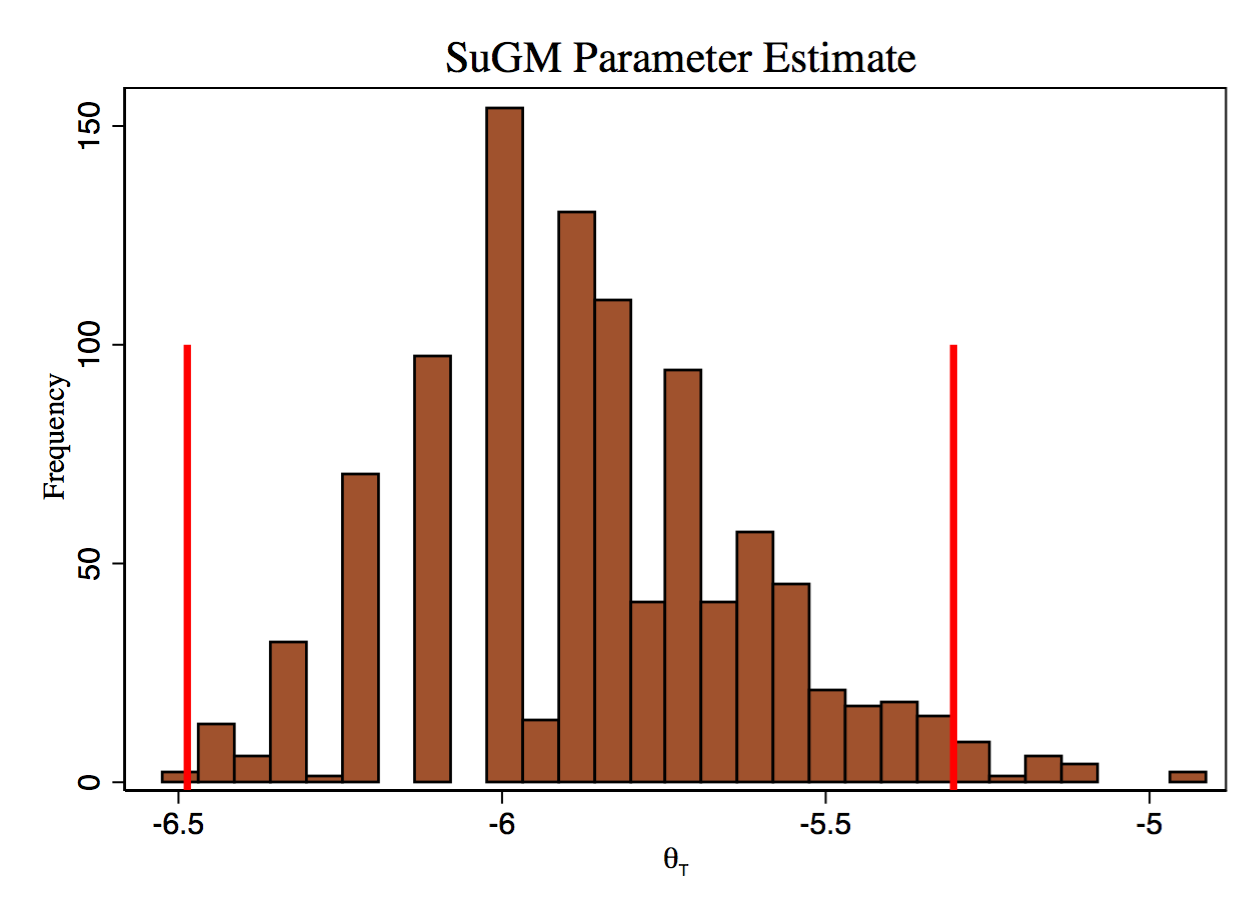}
}
\caption{\label{fig:simulSERGM} For each of the 1000 simulated networks, we solve for estimated SERGM parameters for isolates, links and triangles for as defined in (\ref{sergm-ex2}).}
\end{figure}

\newpage

\
\section{Online Appendix: Additional Consistency Results}\label{addconsistency}

\setcounter{table}{0}
\renewcommand{\thetable}{E.\arabic{table}}

\setcounter{figure}{0}
\renewcommand{\thefigure}{E.\arabic{figure}}

\subsection*{General Consistency Results}

Consider a sequence of SERGMs $(S^n,K^n_{S^n},A_n, \beta^n)$ as defined in (\ref{eq: sergm}).

A sequence of SERGMs  is {\sl expectations-identified} with respect to a
sequence of diagonal matrices $C_n>0$ with positive diagonal entries
if there exists $\gamma>0$ such that
$$ \left\vert C_n\E_{\beta}[S^n]  - C_n\E_{\beta^n}[S^n] \right\vert > \gamma \left\vert \beta-\beta^n \right\vert$$
for all $n$.\footnote{Here the subscript notation $\E_\beta[S^n]$ indicates that the expectation takes the probability
to be specified by (\ref{eq: sergm}) with parameters $(S^n,K^n_{S^n},A_n,\beta)$.}

Expectations identification is an intuitive condition that requires that different parameters distinguish themselves with different means.  It is a sort of minimal
condition since if two different parameter values generate very similar expected statistics, then observing the realized statistic will not allow us to distinguish the parameters.

A sequence of SERGMs  is {\sl concentrated}  with respect to a
sequence of diagonal matrices $C_n>0$ with positive diagonal entries
if
$$C_n ( S^n - \E_{\beta^n}[S^n] ) \cvgto 0\ {\rm for} \  \beta^n \in \mathcal{B},$$
where $\mathcal{B}$ is a set of admissible parameters.

The concentration condition requires that there is some normalization ($C_n$) for which the statistics will concentrate around their means.  As we have not scaled statistics we have to allow for some renormalizations.\footnote{Notice that the exponential term in the associated likelihood can be written $\exp\left(S'\beta\right)=\exp\left(S'C_{n}C_{n}^{-1}\beta\right)$
and we are interested in the associated parameters $\beta$, not $C_{n}^{-1}\beta$ which will typically trail off to infinity at polynomial rates in $n$.}

Note that the choice of $C_n$ in the following theorem links them across the two conditions.  To guarantee concentration there has to exist a sequence of $C_n$ that goes to 0 fast enough, while to guarantee expectations identification they cannot go to 0 too quickly.  So, the $C_n$'s identify the rate at which the statistics approach their means.  The key to verifying consistency is then seeing whether there exists such sequences for which both conditions hold simultaneously.

\smallskip

\begin{prop}
\label{thm-direct}
If a sequence of SERGMs $(S^n,K^n_{S^n},A_n,\beta^n)$  is expectations-identified  and concentrated
with respect to some $C_n$,
then $ | \widehat{\beta}^n (S^n) -\beta^n | \cvgto 0.$
\end{prop}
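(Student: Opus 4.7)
The plan is to exploit the first-order condition \eqref{eq:sergmest} characterizing $\widehat{\beta}^n$, namely that it sets $\E_{\widehat{\beta}^n}[S^n] = S^n$ (the observed realization), and then sandwich $|\widehat{\beta}^n - \beta^n|$ between the two hypotheses of the theorem. Concretely, I would write
\[
C_n\bigl(\E_{\widehat{\beta}^n}[S^n] - \E_{\beta^n}[S^n]\bigr) = C_n\bigl(S^n - \E_{\beta^n}[S^n]\bigr),
\]
where the left-hand side uses the first-order condition for the MLE and the right-hand side is exactly the quantity that the concentration hypothesis forces to $0$ in probability.

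Next I would apply the expectations-identification hypothesis directly to the left-hand side, with $\beta = \widehat{\beta}^n$, to obtain
\[
\gamma\,\bigl|\widehat{\beta}^n - \beta^n\bigr| \;\le\; \bigl|C_n \E_{\widehat{\beta}^n}[S^n] - C_n\E_{\beta^n}[S^n]\bigr| \;=\; \bigl|C_n(S^n - \E_{\beta^n}[S^n])\bigr|.
\]
Since the right-hand side tends to zero in probability by the concentration assumption and $\gamma>0$ is a fixed constant, dividing through yields $|\widehat{\beta}^n - \beta^n| \cvgto 0$, which is exactly the desired consistency statement.

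The argument is essentially a clean deterministic chain once the MLE first-order condition is invoked, so there is no heavy probabilistic work beyond what is already packaged in the hypotheses. The main thing to be careful about is the validity of the first-order condition itself: for extreme values of $S^n$, $\widehat{\beta}^n$ may fail to exist (as discussed around footnote in Section \ref{mle}). However, the concentration hypothesis tells us that $S^n$ lies close to $\E_{\beta^n}[S^n]$ with probability tending to one, which under the implicit regularity ensuring a well-defined MLE in a neighborhood of the expected statistics (discussed following \eqref{eq:sergmest}) means $\widehat{\beta}^n$ is well-defined on an event of probability tending to one; this is the only subtle point, and it can be handled by restricting attention to that event when asserting convergence in probability.
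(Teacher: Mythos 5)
Your proposal is correct and follows exactly the paper's own argument: substitute the MLE first-order condition $\E_{\widehat{\beta}^n}[S^n]=S^n$ into the concentration conclusion, then invert via the expectations-identification inequality with $\beta=\widehat{\beta}^n$ to bound $\gamma|\widehat{\beta}^n-\beta^n|$ by a quantity converging to zero in probability. Your added remark on well-definedness of the MLE near the expected statistics is a sensible caveat that the paper handles by the same implicit regularity assumption stated after \eqref{eq:sergmest}.
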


\smallskip

\noindent The proof of Proposition \ref{thm-direct} is relatively routine.

It is also useful to state a ratio version of Proposition \ref{thm-direct} to address cases where parameters are, for instance, close to 0.  Thus, we wish to have a stronger notion of consistency, not only requiring that the estimator approaches the true parameter, but that it does so in terms of a ratio.  This requires corresponding definitions of concentration and identification that are ratio based.  We do this in Proposition \ref{thm-ratio}.

The result is fairly tight in that a version of a converse holds as well.  In particular, the following result holds.\footnote{We state a version for Proposition \ref{thm-direct}, and the analog
for ratio-consistency of the type in Proposition \ref{thm-ratio} in the appendix is left to the reader.}

A sequence of SERGMs $(S^n,K^n_{S^n},A_n,\beta^n)$,  is {\sl rate-expectations-identified} with rates given by a sequence of diagonal matrices $C_n>0$ with positive diagonal entries
if there exist $\gamma_H>\gamma_L>0$ such that
$$\gamma_H |\beta-\beta^n| >  | C_n\E_{\beta}[S^n]  - C_n\E_{\beta^n}[S^n] | > \gamma_L |\beta-\beta^n|$$
for all $n$.

Rate-expectations-identification is a condition that says that the sequence of diagonal matrices $C_n$ accurately captures the rate at which the expected statistics $ \E_{\beta}[S^n]$ nears $\E_{\beta^n}[S^n] | $ as we let the vector of parameters $\beta$ approach $\beta^n$.

\smallskip

\begin{prop}
\label{thm-direct2}
If a sequence of SERGMs $(S^n,K^n_{S^n},A_n,\beta^n)$  is rate-expectations-identified
with rates $C_n$,
then the random vectors are consistent ($|\widehat{\beta}^n (S^n) -\beta^n|\cvgto 0$) \emph{if and only if}
the sequence is concentrated with respect to $\beta^n$ and the same sequence $C_n$.
\end{prop}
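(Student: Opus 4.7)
The plan is to pivot on the first-order condition for the maximum likelihood estimator derived in \eqref{eq:sergmest}, namely $S^n = \E_{\widehat{\beta}^n}[S^n]$, and then to translate back and forth between deviations of $S^n$ from $\E_{\beta^n}[S^n]$ and deviations of $\widehat{\beta}^n$ from $\beta^n$ using the two-sided bound that rate-expectations-identification provides. Because rate-expectations-identification gives both an upper and a lower Lipschitz-type bound relating $|C_n(\E_\beta[S^n] - \E_{\beta^n}[S^n])|$ to $|\beta - \beta^n|$, each direction of the equivalence corresponds to using one of the two inequalities, and concentration becomes the exact counterpart of consistency on the scale set by $C_n$.

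For the ``if'' direction I would simply observe that the lower inequality in rate-expectations-identification, namely $|C_n\E_{\beta}[S^n] - C_n\E_{\beta^n}[S^n]| > \gamma_L|\beta - \beta^n|$, is exactly the expectations-identification condition used in Proposition \ref{thm-direct}. Since concentration with respect to $C_n$ and $\beta^n$ is also assumed, the hypotheses of Proposition \ref{thm-direct} are met verbatim, so $|\widehat{\beta}^n(S^n) - \beta^n| \cvgto 0$.

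For the ``only if'' direction, I would start from the first-order condition $S^n = \E_{\widehat{\beta}^n}[S^n]$ and rewrite
\[
C_n\bigl(S^n - \E_{\beta^n}[S^n]\bigr) \;=\; C_n\bigl(\E_{\widehat{\beta}^n}[S^n] - \E_{\beta^n}[S^n]\bigr).
\]
Applying the \emph{upper} inequality of rate-expectations-identification with $\beta = \widehat{\beta}^n$ gives
\[
\bigl| C_n(S^n - \E_{\beta^n}[S^n]) \bigr| \;\leq\; \gamma_H \, \bigl| \widehat{\beta}^n - \beta^n \bigr|.
\]
The right-hand side converges to $0$ in probability by the assumed consistency of $\widehat{\beta}^n$, which yields concentration on the $C_n$ scale, as desired.

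The main obstacle is not the core argument — which is essentially a tight two-way Lipschitz sandwich on $C_n\E_\beta[S^n]$ viewed as a function of $\beta$ — but rather handling the event where the MLE is not well defined (extreme realizations of $S^n$ for which \eqref{eq:sergmest} has no interior solution) and ensuring that the first-order condition $S^n = \E_{\widehat{\beta}^n}[S^n]$ holds with probability tending to one. I would address this by invoking the same caveat the paper already makes after \eqref{eq:sergmest}: restrict attention to the asymptotic regime in which $\beta^n$ does not approach extremes too fast, so that $\Pr(S^n \in A^n_{\mathrm{interior}}) \to 1$, and work on that event throughout. A minor secondary point is that in the ``only if'' direction one must verify that the bound is applied at $\beta = \widehat{\beta}^n$ (which is random); but since the upper inequality in rate-expectations-identification holds for \emph{all} $\beta$ uniformly in $n$, this substitution is immediate.
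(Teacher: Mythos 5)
Your proposal is correct and follows essentially the same route as the paper: the ``if'' direction reduces to Proposition \ref{thm-direct} via the lower inequality of rate-expectations-identification, and the ``only if'' direction uses the first-order condition $S^n = \E_{\widehat{\beta}^n}[S^n]$ together with the upper inequality evaluated at $\beta = \widehat{\beta}^n$ to convert consistency into concentration. Your added remarks on the well-definedness of the MLE and the uniformity needed to plug in the random $\widehat{\beta}^n$ are sensible refinements of caveats the paper leaves implicit.
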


\bigskip

Below, we also discuss a characterization of consistency in terms of variance of the statistics in greater detail.  The argument is similar to those for standard estimators, e.g. Amemiya (1973): For consistency, we need enough variation so that the system accumulates information and concentrates around its mean.  This corresponds to needing the norm of the variance matrix tending to infinity, just as we would in typical regression-like applications.

\

\begin{proof}[{\bf Proof of Proposition \ref{thm-direct}}]
Recall that the MLE  $\widehat{\beta}^n (s)$ is the $\beta$ that solves
$$ s  = \E_{\beta}[S^n]  .$$
Thus, since concentration implies that
$$  C^n( S^n - \E_{\beta^n}[S^n] ) \cvgto 0,$$
it follows that
$$ C^n( \E_{\widehat{\beta}^n (S^n)}[S^n] - \E_{\beta^n}[S^n] ) \cvgto 0.$$
Given that expectations identification implies that
$$  | C^n(\E_{\beta}[S^n]  - \E_{\beta^n}[S^n]) | > \gamma |\beta-\beta^n|$$ for all $n$,
it follows that
\[
|\widehat{\beta}^n (S^n) -\beta^n| \cvgto 0
\]
as claimed.
\end{proof}

\

These results can be rephrased in terms of standard properties of extremum estimators and identifiable uniqueness.\footnote{Following, e.g., Gallant and White (1988), we say the sequence $\beta^n$ is identifiably unique on $\mathcal{B}$ if
$$\lim \inf_{n \rightarrow \infty}\inf_{\beta\in B_{\beta^n}(\epsilon)} Q_{0,n}(\beta^n)-Q_{0,n}(\beta) > 0.$$}

\begin{prop}\label{thm-std}
If a sequence of SERGMs $(S^n,K^n_{S^n},A_n,\beta^n)$  satisfies concentration,  rate-expectations identification, and the $\beta^n $ all lie in a compact set, then
$|\widehat{\beta}^n (S^n) -\beta^n|\cvgto 0.$
\end{prop}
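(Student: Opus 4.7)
The plan is to derive Proposition \ref{thm-std} as essentially a corollary of Proposition \ref{thm-direct}. The key observation is that rate-expectations identification is strictly stronger than the expectations-identification condition of Proposition \ref{thm-direct}: its lower-bound inequality $|C_n \E_\beta[S^n] - C_n \E_{\beta^n}[S^n]| \geq \gamma_L|\beta - \beta^n|$ is precisely the expectations-identification condition with $\gamma = \gamma_L$. The upper bound $|C_n \E_\beta[S^n] - C_n \E_{\beta^n}[S^n]| \leq \gamma_H|\beta - \beta^n|$ is not needed for the direct first-order argument, but enters naturally when one phrases the result as an identifiable-uniqueness condition over the compact parameter set in the style of the standard extremum-estimator theorems (e.g., Gallant and White 1988) to which the paragraph preceding the proposition alludes.

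Concretely, I would argue as follows. Since the log-partition function $\beta \mapsto \log\sum_{s'\in A_n} K^n_{S}(s')\exp(\beta \cdot s')$ is convex, the log-likelihood is concave in $\beta$. Compactness of the admissible set $\mathcal{B}$ then guarantees existence of the MLE $\widehat{\beta}^n$. On the event that $\widehat{\beta}^n$ lies in the interior of $\mathcal{B}$, the first-order condition $S^n = \E_{\widehat{\beta}^n}[S^n]$ holds, so that by the concentration hypothesis,
\[
C_n\bigl(\E_{\widehat{\beta}^n}[S^n] - \E_{\beta^n}[S^n]\bigr) = C_n\bigl(S^n - \E_{\beta^n}[S^n]\bigr) \cvgto 0.
\]
Applying the lower bound from rate-expectations identification gives $\gamma_L|\widehat{\beta}^n - \beta^n| \leq |C_n(\E_{\widehat{\beta}^n}[S^n] - \E_{\beta^n}[S^n])|$, which in turn tends to $0$ in probability, delivering $|\widehat{\beta}^n - \beta^n| \cvgto 0$.

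The main obstacle I anticipate is verifying that the interior-FOC event has probability tending to one, so the displayed identity is valid with high probability. The idea is that concentration pins $S^n$ inside a $C_n^{-1}$-neighborhood of $\E_{\beta^n}[S^n]$, which by rate-expectations identification forces any maximizer of the concave log-likelihood to lie within a shrinking neighborhood of $\beta^n$; since each $\beta^n$ is contained in the compact set $\mathcal{B}$, one can, if necessary, work on a slightly enlarged compact set in which all the $\beta^n$ sit in the interior, so that the interior case occurs with probability tending to one. This is the standard identifiable-uniqueness argument adapted to the exponential-family setting, and it is the place where compactness (as opposed to the direct argument in Proposition \ref{thm-direct}) earns its keep.
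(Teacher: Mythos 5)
Your proof is correct, but it takes a genuinely different route from the paper's. You treat the proposition as a near-corollary of Proposition \ref{thm-direct}: since the lower bound in rate-expectations identification is exactly the expectations-identification condition (with $\gamma=\gamma_L$), you run the same first-order-condition argument --- $S^n=\E_{\widehat{\beta}^n}[S^n]$, concentration gives $C_n(\E_{\widehat{\beta}^n}[S^n]-\E_{\beta^n}[S^n])\cvgto 0$, and the lower bound converts this into $|\widehat{\beta}^n-\beta^n|\cvgto 0$ --- using compactness and concavity only to justify existence of the MLE and the interior first-order condition. The paper instead recasts the MLE as an extremum estimator with moment function $m_n(\beta)=C_n(S^n-\E_\beta[S^n])$ and objective $Q_n(\beta)=m_n(\beta)'m_n(\beta)$, and verifies the standard Gallant--White conditions: a uniform law of large numbers over $\mathcal{B}$ (pointwise convergence from concentration plus stochastic equicontinuity from the H\"older bound $|m_n(\beta)-m_n(\beta')|\leq \gamma_H\Vert\beta-\beta'\Vert$ supplied by the \emph{upper} bound in rate-expectations identification) together with identifiable uniqueness from the lower bound; compactness then delivers consistency via the standard theorem. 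What each buys: your argument is more elementary and makes transparent that, once the first-order-condition characterization of the MLE is granted (which the paper itself assumes holds with probability tending to one, excluding extreme realizations of $s$), neither the upper bound $\gamma_H$ nor compactness is doing real work in the conclusion; the paper's argument is the one in which those extra hypotheses actually earn their keep, and it situates the result within the familiar extremum-estimator framework, which is why the proposition is stated with them. Your discussion of the interior-FOC event is the right thing to worry about and is handled consistently with the paper's own implicit assumption, so there is no gap.
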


\smallskip

\begin{proof}[{\bf Proof of Proposition \ref{thm-std}}]
Let
\[
m_{n}\left(\beta\right):=C_{n}\left(S^{n}-{\rm E}_{\beta}\left[S\right]\right).
\]
\noindent The objective function is $Q_n(\beta) := m_n(\beta)'m_n(\beta)$.

First, we want to show that the moment function satisfies a uniform
law of large numbers: $\sup_{\beta\in\mathcal{B}}\left\Vert m_{n}\left(\beta\right)\right\Vert =o_{p}\left(1\right)$.
By concentration, we have pointwise convergence of $m_{n}\left(\beta\right)$
to zero in probability.

Therefore, we need to only check stochastic equicontinuity: that for every $\eta>0$
there is a $\delta>0$ with
\[
{\rm P}\left(\sup_{\left\Vert \beta - \beta' \right\Vert <\delta}\left|m_{n}\left(\beta\right)-m_{n}\left(\beta'\right)\right|>\eta\right)<\eta.
\]

\noindent A sufficient condition (Andrews, 1994) is if a H\"{o}lder condition is satisfied:
\[
\left|m_{n}\left(\beta\right)-m_{n}\left(\beta'\right)\right|\leq X_{n}\cdot\left\Vert \beta-\beta'\right\Vert
\]
where $X_{n}$ is some $O_{p}\left(1\right)$ random variable. This is directly guaranteed by rate-expectations identification with $X_n = \gamma_H$.

Second, one can check that expectations identifiability guarantees identifiable uniqueness.  Together with compactness of $\mathcal{B}$, this implies the above implies that $\betahat$ is consistent for $\beta^n$.
\end{proof}

\medskip

\begin{proof}[{\bf Proof of Proposition \ref{thm-direct2}}]
Recall that the MLE  $\widehat{\beta}^n (s)$ is the $\beta$ that solves
$$ s  = \E_{\beta}[S^n]  .$$
Given Proposition \ref{thm-direct}, we need only show that if consistency holds then
concentration must also hold.

Given that rate-expectations identification implies that
$$ a^n | \E_{\beta}[S^n]  - \E_{\beta^n}[S^n] | < \gamma_H |\beta-\beta_0|$$ for all $n$,
it follows that if if consistency holds so that
\[
|\widehat{\beta}^n (S^n) -\beta^n| \cvgto 0,
\]
then it must be that
$$ a^n | \E_{\widehat{\beta}^n}[S^n]  - \E_{\beta^n}[S^n] |
 \cvgto 0. $$
 This implies that
$$ a^n |S^n  - \E_{\beta^n}[S^n] |
 \cvgto 0 ,$$
which implies concentration.
\end{proof}

\subsection*{Ratio Convergence Results}

A sequence of SERGMs $(S^n,K^n_{S^n},A_n,\beta^n)$  is {\sl ratio-expectations-identified}
with respect to a sequence of diagonal $C^n$ with positive diagonal entries
if there exists $\gamma>0$ such that
$$ \left\vert \frac{C^n_{hh}\E_{\beta}[S^n_h]}{C^n_{hh}\E_{\beta^n}[S^n_h]}  - 1 \right\vert > \gamma \left\vert \frac{\beta_h}{(\beta^n)_h}-1\right\vert$$
for all $n$ and $h$.

A sequence of SERGMs $(S^n,K^n_{S^n},A_n,\beta^n)$  is {\sl ratio-concentrated}
with respect to a sequence of diagonal $C^n$ with positive diagonal entries
if
$$  \frac{C^n_{hh} S^n_h}{C^n_{hh}\E_{\beta_0}[S^n_h]}-1  \cvgto 0$$
for each $h$.

\begin{prop}
\label{thm-ratio}
If a sequence of SERGMs $(S^n,K^n_{S^n},A_n,\beta^n)$   is { ratio-expectations-identified} and ratio-concentrated
with respect to a sequence of diagonal $C^n$ with positive diagonal entries,
then
$\frac{\widehat{\beta}^n (S^n)_h}{  (\beta^n)_h}\cvgto 1$
for each $h$.
\end{prop}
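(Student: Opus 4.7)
The plan is to mirror the argument used for Proposition \ref{thm-direct}, but replace differences with ratios throughout. The crucial observation is that the scalar $C^n_{hh}$ appears in both the numerator and denominator of each defining inequality, so it plays no role in the ratio versions and we may ignore it in the algebra; its presence in the hypotheses merely couples the two conditions in case one wishes to track a rate.

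First I would invoke the MLE first-order condition established earlier in Section \ref{mle}: the maximum likelihood estimator $\widehat{\beta}^n(s)$ is defined (up to the regularity caveats already discussed) by the moment match $s = \E_{\widehat{\beta}^n(s)}[S^n]$, coordinatewise. Plugging in $s = S^n$ gives $\E_{\widehat{\beta}^n(S^n)}[S^n_h] = S^n_h$ for each $h$. Dividing both sides by $\E_{\beta^n}[S^n_h]$ (which is positive for the admissible parameter sequence), this identity becomes
\[
\frac{\E_{\widehat{\beta}^n(S^n)}[S^n_h]}{\E_{\beta^n}[S^n_h]} \;=\; \frac{S^n_h}{\E_{\beta^n}[S^n_h]}.
\]

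Next I would apply the ratio-concentration hypothesis. Since $C^n_{hh}>0$ cancels in the definition, ratio-concentration is exactly the statement that $\frac{S^n_h}{\E_{\beta^n}[S^n_h]} - 1 \cvgto 0$. Combined with the displayed identity above, this yields
\[
\frac{\E_{\widehat{\beta}^n(S^n)}[S^n_h]}{\E_{\beta^n}[S^n_h]} - 1 \;\cvgto\; 0.
\]

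Finally I would apply ratio-expectations-identification in the contrapositive direction. That hypothesis guarantees $\gamma>0$ with
\[
\left|\frac{\beta_h}{(\beta^n)_h} - 1\right| \;\le\; \frac{1}{\gamma}\left|\frac{\E_{\beta}[S^n_h]}{\E_{\beta^n}[S^n_h]} - 1\right|
\]
for every $\beta$ and every $h$. Evaluating at $\beta = \widehat{\beta}^n(S^n)$ and combining with the previous display, the right-hand side tends to zero in probability, and so $\frac{\widehat{\beta}^n(S^n)_h}{(\beta^n)_h} \cvgto 1$ for each $h$, as claimed.

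The only mild subtlety (not really an obstacle) is ensuring that $\E_{\beta^n}[S^n_h]$ stays away from zero along the sequence so that division is legitimate and the ratio notions are well-posed; this is implicit in the assumption that $C^n$ has strictly positive diagonal entries and that the SERGMs are not specified at extreme values where the MLE is ill-defined (the same non-degeneracy caveat flagged in Section \ref{mle}). Everything else is a direct ratio-analog of the short algebraic proof of Proposition \ref{thm-direct}.
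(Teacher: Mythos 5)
Your proof is correct and follows essentially the same route as the paper's: use the MLE moment condition $S^n = \E_{\widehat{\beta}^n(S^n)}[S^n]$, transfer ratio-concentration to the expectations ratio, and then invert via ratio-expectations-identification. Your observation that $C^n_{hh}$ cancels in the ratios is accurate (the paper carries it along notationally but it plays no algebraic role), and your well-posedness caveat is a harmless addition.
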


\bigskip

\begin{proof}[{\bf Proof of Proposition \ref{thm-ratio}}]
Again, recalling that the MLE  $\widehat{\beta}^n (s)$ is the $\beta$ that solves
$$ s  = \E_{\beta}[S^n].$$
Thus, since ratio-concentration implies that
$$  \frac{C^n_{hh}S^n_h}{\E_{C^n_{hh}\beta^n}[S^n_h]}-1  \cvgto 0$$
it follows that
$$   \frac{C^n_{hh}\E_{\widehat{\beta}^n (S^n)}[S^n_h]}{C^n_{hh}\E_{\beta^n}[S^n_h]}  - 1 \cvgto 0.$$
Given that ratio expectations identification implies that
$$ | \frac{C^n_{hh}\E_{\beta}[S^n_h]}{C^n_{hh}\E_{\beta^n}[S^n_h]}  - 1 | > \gamma |\frac{\beta_h}{(\beta^n)_h}-1|$$ for all $n,h$,
it follows that
\[
 \frac{\widehat{\beta}^n (S^n)_h}{(\beta^n)_h}-1  \cvgto 0
\]
or
\[
\frac{\widehat{\beta}^n (S^n)_h}{  (\beta^n)_h}\cvgto 1,
\]
as claimed.\end{proof}

\newpage
\section{Online Appendix: Extension of Table \ref{tab-emprops}}\label{simsappend}

\setcounter{table}{0}
\renewcommand{\thetable}{F.\arabic{table}}

\setcounter{figure}{0}
\renewcommand{\thefigure}{F.\arabic{figure}}

Here we present an extension of the analysis in Table \ref{tab-emprops}.
Instead of simply controlling for ``close'' versus ``far'' links on the dimensions of caste and GPS, we allow for a considerably richer specification. The goal here is to show that even when we control, flexibly, for a rich set of covariates, a link-based model exploiting the observable homophily is unable to replicate key features of observed networks. To do this, we estimate a link-based model within each village using the following vector of controls:
\begin{itemize}
	\item Geographic distance between households,
	\item Square of geographic distance between households,
	\item Households are of different caste,
	\item Difference in number of rooms household has,
	\item Square of difference in number of rooms,
	\item Difference in number of beds,
	\item Square of difference in number of beds,
	\item Difference in quality of electricity,
	\item Square of difference in quality of electricity,
	\item Difference in latrine quality,
	\item Square of difference in latrine quality,
	\item Whether or not both households have the same status in terms of owning or renting their house.
\end{itemize}
\noindent We use a logistic regression for this estimation.

The estimated a vector of regression coefficients for each village capture how characteristics of a dyad correspond to linking probabilities.
This gives a predicted probability that each household is linked to each of the other households in the village.  We use these predicted probabilities to generate 100 simulated networks per village and study the characteristics of the resulting networks.
These are presented in column [3] of Table \ref{tab-append}.

\begin{table}[h]
\caption{Estimation of Additional Models: Extension of Table \ref{tab-emprops} }\label{tab-append}
\begin{center}
\includegraphics[trim = 15mm 195mm 0mm 26mm, clip = true, scale = 0.8]{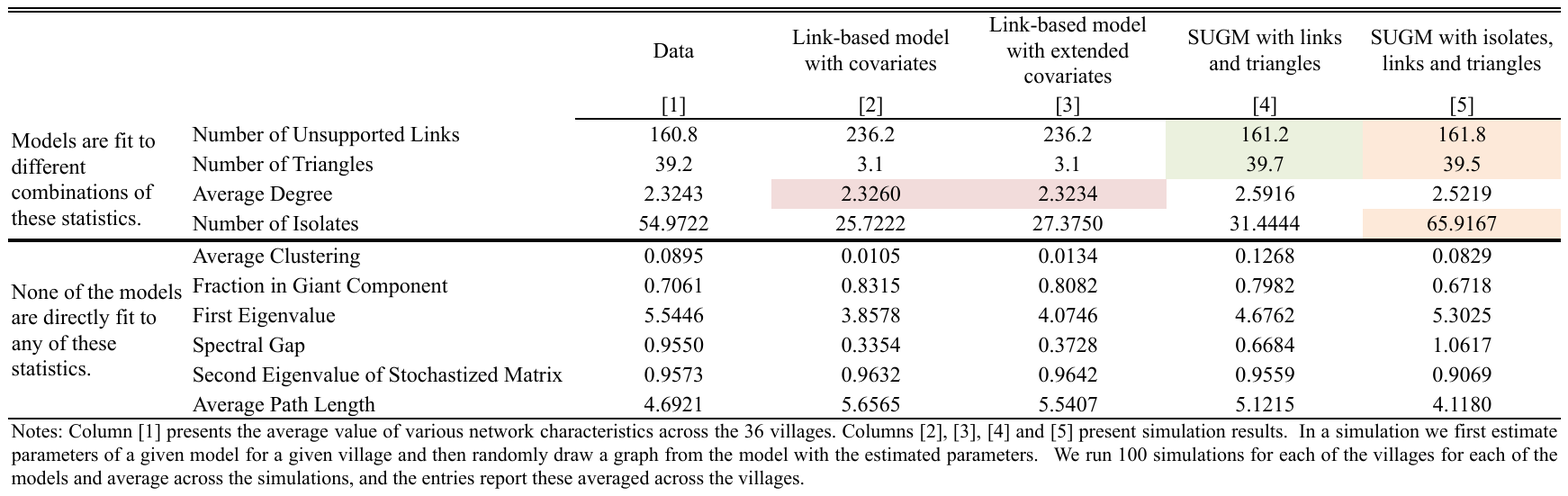}
\par\end{center}
\end{table}

Column [3] contains the statistics from the enriched link-based model, while the remainder of the table is exactly the same as what is presented in the body of the paper.  Adding over 12 parameters to flexibly control for demographic attributes
makes almost no difference in generating network characteristics that match the observed data, providing
very small improvements, and still not coming close to doing as well as the simple SUGMs.  Moreover, since the specification developed here makes use of considerably richer data than those used in the two candidate SUGM models, it suggests that by decomposing a network into a tapestry of random structures (triangles, links and even isolates), considerable value is added in modeling higher order features of networks in a parsimonious way.

In Table \ref{tab-emprops-appendix}, we show the results of Table \ref{tab-emprops} adding standard errors, to show that the SUGM models better replicate patterns in the data.

\begin{table}[h]
\caption{Network Properties: Extended}\label{tab-emprops-appendix}
\begin{center}
\includegraphics[trim = 15mm 150mm 0mm 29mm, clip = true, scale = 0.9]{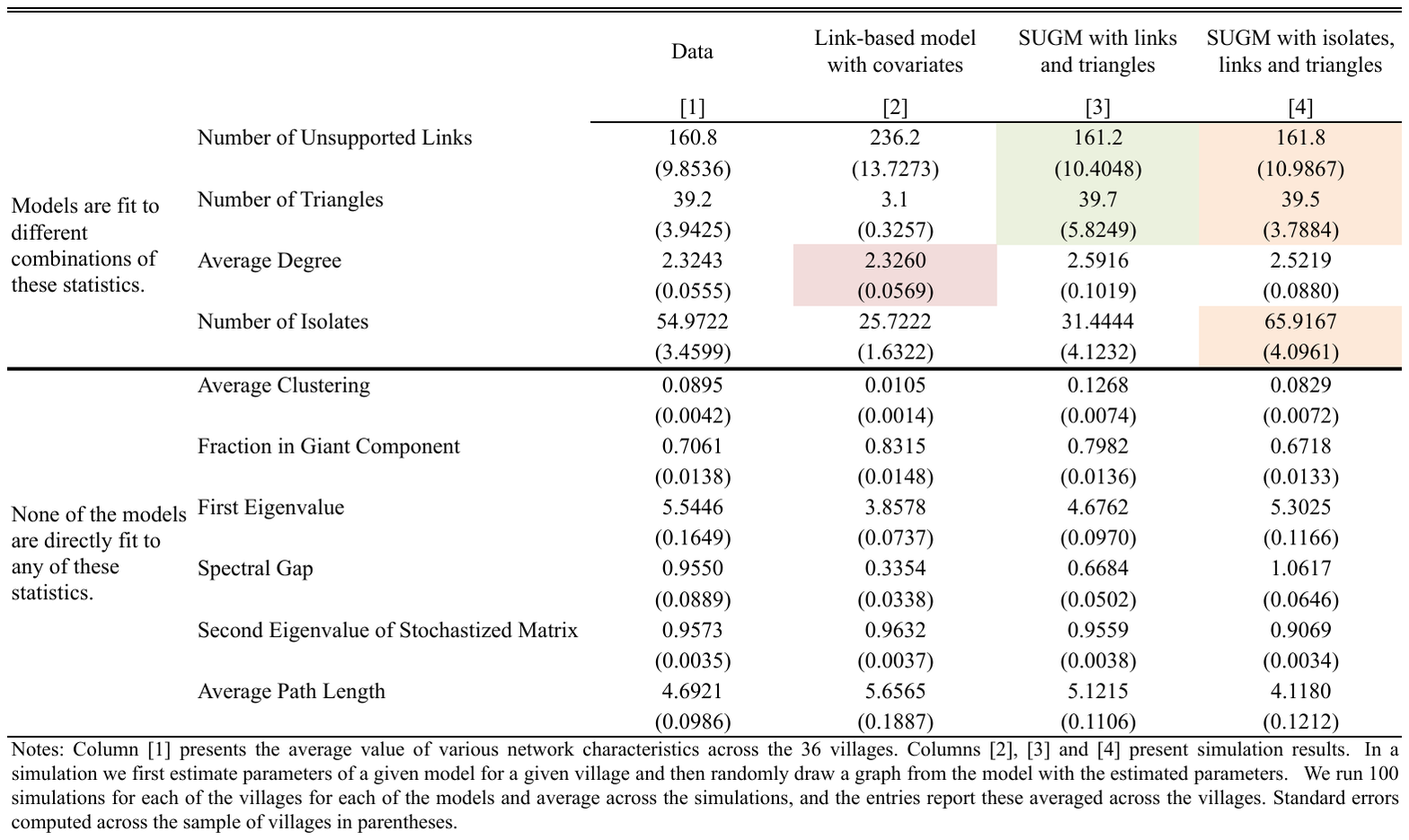}
\par\end{center}
\end{table}

\end{document}